\documentclass[sigconf, nonacm]{acmart}

\usepackage{libertine}

\usepackage{datetime}
\usepackage{url}
\usepackage{paralist}






\usepackage[font=small,labelfont=bf,justification=centering]{caption}



\usepackage{stmaryrd}

\usepackage{amsmath,amsthm}
\newtheorem{theorem}{Theorem}

\theoremstyle{definition}
\newtheorem{lemma}{Lemma}

\theoremstyle{definition}
\newtheorem{definition}{Definition}

\theoremstyle{definition}

\theoremstyle{definition}
\newtheorem{corollary}{Corollary}

\newenvironment{sketch}{\textsc{Proof Sketch.}}{\hfill$\square$\smallskip}

\theoremstyle{definition}

\theoremstyle{definition}
\newtheorem{assumption}{Assumption}


\usepackage{cleveref}
\crefname{section}{\S}{\S\S}
\Crefname{section}{\S}{\S\S}
\crefformat{section}{\S#2#1#3}
\Crefname{line}{Line}{line}
\crefname{line}{Line}{line}
\Crefname{assumption}{Assumption}{assumption}

\crefname{table}{Table}{Tables}

\usepackage{graphicx}
\usepackage{subfig}
\usepackage{arydshln}

\usepackage{multirow}

\usepackage{threeparttable}

\usepackage{algorithmicx}
\usepackage{algorithm} 
\usepackage{algpseudocode}
\usepackage{algcompatible}

\usepackage{xspace}

\newcommand{\allprocesses}{\Pi}
\newcommand{\allconfigurations}{\mathcal{I}}

\newcommand{\general}{\textsc{Universal}\xspace}

\newcommand{\process}[1]{\pi(#1)}
\newcommand{\processes}[1]{\pi(#1)}

\newcommand{\similar}[1]{\mathit{sim}(#1)}

\usepackage{xcolor}
\usepackage{framed}
\definecolor{lightgray}{gray}{0.90}
\renewenvironment{leftbar}[1][\hsize]
{%
\MakeFramed{\hsize#1\advance\hsize-\width\FrameRestore}%
}
{\endMakeFramed}






\algsetblockdefx[UponSend]{UponSend}{EndUponSend}{}{}[1]{\textbf{upon} \textsc{send}($#1$)}{\phantom}
\algtext*{EndUponSend}

\algsetblockdefx[UponDeliver]{UponDeliver}{EndUponDeliver}{}{}[1]{\textbf{upon} \textsc{SecureDeliver}($#1$)}{\phantom}
\algtext*{EndUponDeliver}

\algsetblockdefx[UponInit]{UponInit}{EndUponInit}{}{}[0]{\textbf{upon} \textsc{init}}{\phantom}
\algtext*{EndUponInit}

\algsetblockdefx[UponExists]{UponExists}{EndUponExists}{}{}[2]{\textbf{upon exists} $#1$ such that $#2$}{\phantom}
\algtext*{EndUponExists}

\algsetblockdefx[Procedure]{Procedure}{EndProcedure}{}{}[2]{\textbf{procedure} {#1}($#2$)}{\phantom}
\algtext*{EndProcedure}
\algsetblockdefx[Function]{Function}{EndFunction}{}{}[2]{\textbf{function} {#1}($#2$)}{\phantom}
\algtext*{EndFunction}

\algnewcommand{\BlueComment}[1]{\textcolor{blue}{\hfill\(\triangleright\) #1}}
\algnewcommand{\LineComment}[1]{\State \(\triangleright\) #1}


\usepackage{listings}
\crefname{lstlisting}{listing}{listings}
\Crefname{lstlisting}{Listing}{Listings}

\crefname{code}{line}{lines}
\Crefname{code}{Line}{Lines}

\usepackage{xcolor}

\definecolor{mygreen}{rgb}{0.254,0.572,0.294}
\definecolor{mygray}{rgb}{0.5,0.5,0.5}
\definecolor{myorange}{rgb}{1,0.35,0}
\definecolor{mymauve}{rgb}{0.58,0,0.82}
\definecolor{myblue}{rgb}{0.2,0.4,0.6}

\definecolor{rakos4orange}{RGB}{255,165,0}
\definecolor{rakos4blue}{RGB}{14,48,173}
\definecolor{rakos4lblue}{RGB}{92,172,238}
\definecolor{rakos4dgray}{RGB}{77,77,77}
\definecolor{plainred}{RGB}{211,63,63}
\definecolor{plainorange}{RGB}{221,105,41}

\usepackage{bold-extra} 

\lstdefinelanguage{Golang}%
  {morekeywords=[1]{package,import,struct,defer,panic,%
     recover,select,var,const,iota,},%
   morekeywords=[2]{string,uint,uint8,uint16,uint32,uint64,int,int8,int16,%
     int32,int64,bool,float32,float64,complex64,complex128,byte,rune,uintptr,%
     error,interface,message,node},%
   morekeywords=[3]{map,slice,make,new,nil,len,cap,copy,close,true,false,%
     delete,append,real,imag,complex,chan,},%
   morekeywords=[4]{break,continue,goto,switch,case,fallthrough,%
    default,},%
   morekeywords=[5]{Println,Printf,Error,Send},%
   sensitive=true,%
   morecomment=[l]{//},%
   morecomment=[s]{/*}{*/},%
   morestring=[b]",%
   morestring=[s]{`}{`},%
   }

\lstset{ 
    basicstyle=\small\ttfamily,
    commentstyle=\small\color{mygray}\textnormal,
    rulecolor=\color{mygray},
    columns=fullflexible,
    breakatwhitespace=false,         
    breaklines=true,                 
    captionpos=b,                    
    numberblanklines=false,
    keepspaces=true,                 
    escapeinside={(*@}{@*)},          
    numbersep=2pt,                   
    numberstyle=\scriptsize\color{mygray}, 
    xleftmargin=6pt,
    keywordstyle=\bfseries,
    keywordstyle={[2]\color{blue}\bfseries},
    numbers=left,
    showstringspaces=false,
    stringstyle=\color{blue},
    tabsize=2,
    frame={n},
    framesep=0pt,
    language=Golang, 
    emph={checkSig, checkDeps, getBalance, broadcast, send, deliverBroadcast, valid, deliverUnicast, unicast, validPrepare, validAck, validCommit, fifoBroadcast}, emphstyle=\color{rakos4blue},
    emph={[2]tx, balance}, emphstyle={[2]\bfseries\color{rakos4blue}},
    emph={[3]C, toValidate, proofs, buffer, deps, hist, sn, i, pending, acks}, emphstyle={[3]\color{plainorange}\bfseries},
    emph={[4]}, emphstyle={[4]\color{rakos4blue}},
    emph={[5]PREPARE, ACK, COMMIT}, emphstyle={[5]\color{myblue}},
    emph={[6]func, callback, upon}, emphstyle={[6]\bfseries},
    moredelim=**[is][\btHL]{`}{`},
}

\usepackage{thmtools,thm-restate}

\usepackage{algorithmicx}
\usepackage{algorithm} 
\usepackage{algpseudocode}
\usepackage{xifthen}
\usepackage{graphicx}
\usepackage{wrapfig}
\usepackage{dblfloatfix}

\usepackage{stmaryrd}
\usepackage{amsmath}
\usepackage{amsthm}

\usepackage{cancel}
\newif\ifcomments
\commentstrue

\usepackage[normalem]{ulem}
%

\newcommand{\ms}[1]{%
	    \relax\ifmmode
	        \mathord{\mathcode`\-="702D\it #1\mathcode`\-="2200}%
	    \else
	        {\it #1}%
	    \fi
}
\newcommand{\tup}[1]{%
	    \relax\ifmmode
	      \langle #1 \rangle%
	    \else
	        $\langle$ #1 $\rangle$%
	    \fi
}

\date{}

\setcopyright{none} 
\acmConference[PODC 2023]{PODC 2023}{TBD}{Orlando, USA; TBD}
\acmYear{2019}

\settopmatter{printacmref=true, printccs=true, printfolios=true} 

\begin{CCSXML}
<ccs2012>
   <concept>
       <concept_id>10010147.10010919.10010172</concept_id>
       <concept_desc>Computing methodologies~Distributed algorithms</concept_desc>
       <concept_significance>500</concept_significance>
       </concept>
 </ccs2012>
\end{CCSXML}

\ccsdesc[500]{Computing methodologies~Distributed algorithms}

\keywords{Byzantine consensus, Solvability, Message complexity, Lower bound}

\begin{document}

\author{Pierre Civit}
\affiliation{
\institution{Sorbonne University} 
\country{France}
}

\author{Seth Gilbert}
\affiliation{
\institution{NUS Singapore}
\country{Singapore}
}

\author{Rachid Guerraoui}
\affiliation{
\institution{École Polytechnique Fédérale de Lausanne (EPFL)}
\country{Switzerland}
}

\author{Jovan Komatovic}
\affiliation{
\institution{École Polytechnique Fédérale de Lausanne (EPFL)}
\country{Switzerland}
}

\author{Manuel Vidigueira}
\affiliation{
\institution{École Polytechnique Fédérale de Lausanne (EPFL)}
\country{Switzerland}
}

\title{On the Validity of Consensus \footnotesize{(Extended Version)}}

\begin{abstract}
The Byzantine consensus problem involves $n$ processes, out of which $t < n$ 
could be faulty and behave arbitrarily. 
Three properties characterize consensus: 
(1) termination, requiring correct (non-faulty)  processes to eventually reach a decision, 
(2) agreement, preventing them from deciding different values, and 
(3) validity, precluding ``unreasonable'' decisions.
But, what is a reasonable decision?
Strong validity, a classical property, stipulates that, if all correct processes propose the same value, only that value can be decided.
Weak validity, another established property, stipulates that, if all processes are correct and they propose the same value, that value must be decided.
The space of possible validity properties is vast. 
Yet, their impact on consensus algorithms remains unclear.

This paper addresses the question of which validity properties allow Byzantine consensus to be solvable in a general partially synchronous model, and at what cost.
First, we determine the necessary and sufficient conditions for a validity property to make the consensus problem solvable; we say that such validity properties are \emph{solvable}.
Notably, we prove that, if $n \leq 3t$, all solvable validity properties are \emph{trivial} (there exists an always-admissible decision).
Furthermore, we show that, with any non-trivial (and solvable) validity property, consensus requires $\Omega(t^2)$ messages. 
This extends the seminal Dolev-Reischuk bound, originally proven for strong validity, to \emph{all} non-trivial validity properties. 
Lastly,  we give a Byzantine consensus algorithm, we call \general, for \emph{any} solvable (and non-trivial) validity property.
Importantly, \general incurs $O(n^2)$ message complexity.
Thus, together with our lower bound, \general implies a fundamental result in partial synchrony: with $t \in \Omega(n)$, the message complexity of all (non-trivial) consensus variants is $\Theta(n^2)$.
\end{abstract}

\maketitle

\section{Introduction}

Consensus~\cite{LSP82} is the cornerstone of state machine replication (SMR)~\cite{CL02,adya2002farsite,abd2005fault,kotla2004high,veronese2011efficient,amir2006scaling,kotla2007zyzzyva,malkhi2019flexible,momose2021multi}, as well as various distributed algorithms~\cite{GS01, ben2019completeness,galil1987cryptographic, gilbert2010rambo}.
Recently, it has received a lot of attention with the advent of blockchain systems~\cite{abraham2016solida,chen2016algorand,abraham2016solidus,luu2015scp,correia2019byzantine,CGL18,buchman2016tendermint}.
The consensus problem is posed in a system of $n$ processes, out of which $t < n$ can be \emph{faulty}, and the rest \emph{correct}.
Each correct process proposes a value, and consensus enables correct processes to decide on a common value.
In this paper, we consider Byzantine~\cite{LSP82} consensus, where faulty processes can behave arbitrarily.
While the exact definition of the problem might vary, two properties are always present: (1) \emph{termination}, requiring correct  processes to eventually decide, and (2) \emph{agreement}, preventing them from deciding different values.
It is not hard to devise an algorithm that satisfies only these two properties: 
every correct process decides the same, predetermined value. 
However, this algorithm is vacuous. 
To preclude such trivial solutions and render consensus meaningful, 
an additional property is required -- \emph{validity} -- defining which decisions are admissible.

\paragraph{The many faces of validity.}
The literature contains many flavors of validity~\cite{AbrahamAD04,civit2022byzantine,CGL18,abraham2017brief,KMM03,CGG21,yin2019hotstuff,fitzi2003efficient,siu1998reaching,stolz2016byzantine,melnyk2018byzantine}.
One of the most studied properties is \emph{Strong Validity}~\cite{civit2022byzantine,CGL18,KMM03,abraham2017brief}, stipulating that, if all correct processes propose the same value, only that value can be decided.
Another common property is \emph{Weak Validity}~\cite{civit2022byzantine,CGG21,yin2019hotstuff}, affirming that, if all processes are correct and propose the same value, that value must be decided. 
While validity may appear as an inconspicuous property, its exact definition has a big impact on our understanding of consensus algorithms.
For example, the seminal Dolev-Reischuk bound~\cite{dolev1985bounds} states that any solution to consensus with \emph{Strong Validity} incurs a quadratic number of messages;  it was recently proven that the bound is tight~\cite{Momose2021,civit2022byzantine,lewis2022quadratic}.
In contrast, while there have been several improvements to the performance of consensus with \emph{Weak Validity} over the last 40 years~\cite{yin2019hotstuff, lewis2022quadratic, civit2022byzantine}, the (tight) lower bound on message complexity remains unknown.
(Although the bound is conjectured to be the same as for \emph{Strong Validity}, this has yet to be formally proven.)
Many other fundamental questions remain unanswered:
\begin{compactitem}
\item What does it take for a specific validity property to make consensus solvable? 

\item What are the (best) upper and lower bounds on the message complexity of consensus with any specific validity property?

\item Is there a hierarchy of validity properties (e.g., a ``strongest'' validity property)?
\end{compactitem}
To the best of our knowledge, no in-depth study of the validity property has ever been conducted, despite its importance and the emerging interest from the research community~\cite{aboutvalidity,CKN21}.


\paragraph{Contributions.}
We propose a precise mathematical formalism for the analysis of validity properties.
We define a validity property as a mapping from assignments of proposals into admissible decisions.
Although simple, our formalism enables us to determine the exact impact of validity on the solvability and complexity of consensus in the classical partially synchronous model~\cite{DLS88}, and answer the aforementioned open questions.
Namely, we provide the following contributions:
\begin{compactitem} 
    \item We classify all validity properties into solvable and unsolvable ones.
    (If a validity property makes consensus solvable, we say that the property itself is \emph{solvable}.)
    Specifically, for $n \leq 3t$, we show that only \emph{trivial} validity properties (for which there exists an always-admissible decision) are solvable.
    In the case of $n > 3t$, we define the \emph{similarity condition}, which we prove to be necessary and sufficient for a validity property to be solvable.
    
    \item
    We prove that all non-trivial (and solvable) validity properties require $\Omega(t^2)$ exchanged messages.
    This result extends the Dolev-Reischuk bound~\cite{dolev1985bounds}, proven only for \emph{Strong Validity}, to \emph{all} ``reasonable'' validity properties.
    
    \item 
    Finally, we present \general, a consensus algorithm for all solvable (and non-trivial) validity properties.
    Assuming a public-key infrastructure, \general exchanges $O(n^2)$ messages.
    Thus, together with our lower bound, \general implies a fundamental result in partial synchrony: given $t \in \Omega(n)$, all (non-trivial) consensus variants have $\Theta(n^2)$ message complexity.
    \Cref{fig:classification_landscape} summarizes our findings.
\end{compactitem}

\begin{figure}[ht]
    \centering
    \includegraphics[scale = 0.45]{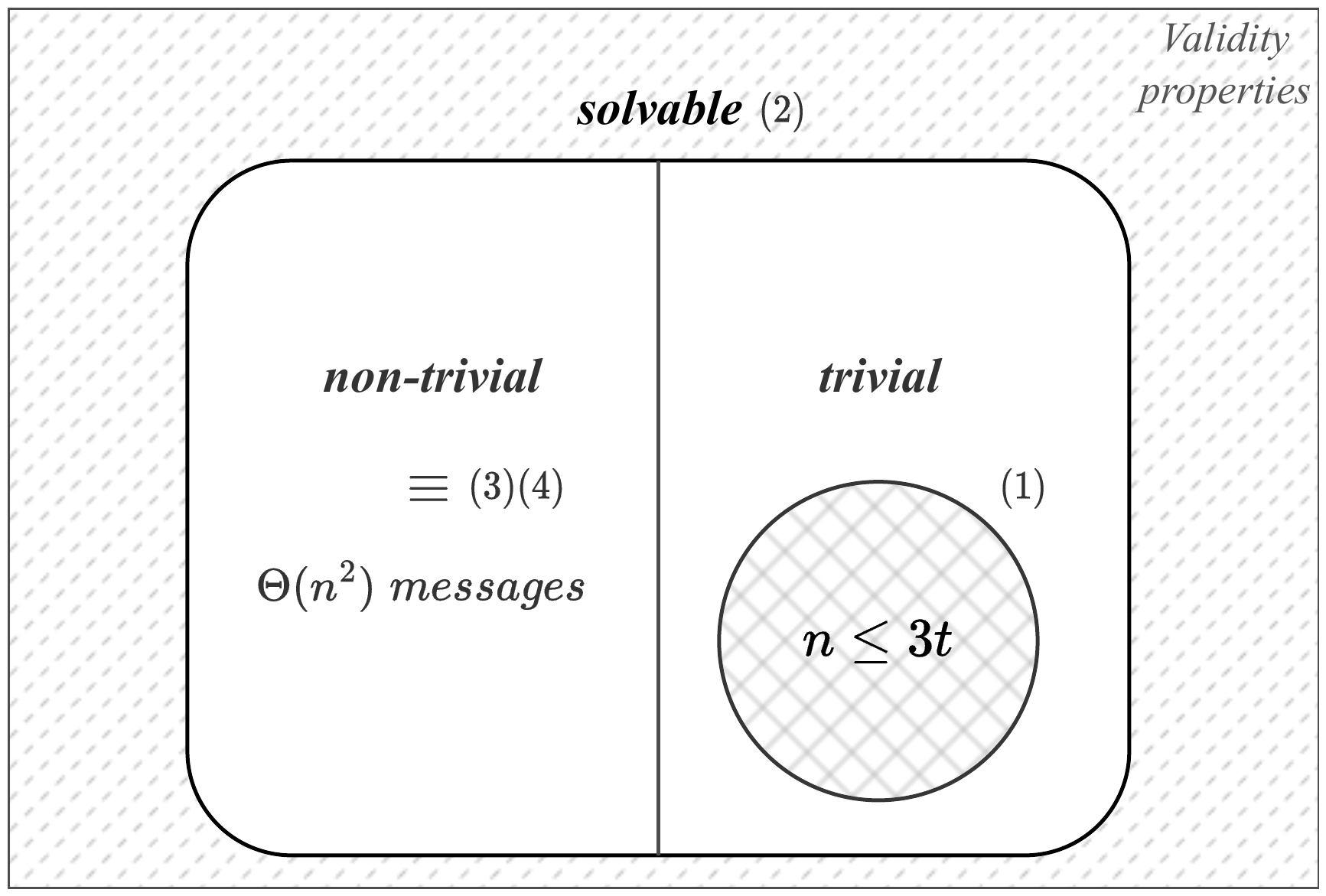}
    \caption{Illustration of our results: (1) with $n \leq 3t$, all solvable validity properties are trivial; (2) the exact set of solvable validity properties (as determined by our necessary and sufficient conditions); (3) all non-trivial (and solvable) validity properties require $\Omega(t^2)$ exchanged messages; (4) for any non-trivial (and solvable) validity property, there exists a consensus algorithm with $O(n^2)$ message complexity.}
    \label{fig:classification_landscape}
\end{figure}

\paragraph{Technical overview.}
In our formalism, we use the notion of an \emph{input configuration} that denotes an 
assignment of proposals to correct processes.
For example, $\big[ (P_1, v), (P_2, v), (P_3, v) \big]$ represents an input configuration by which (1) only processes $P_1$, $P_2$, and $P_3$ are correct, and (2) processes $P_1$, $P_2$, and $P_3$ propose $v$.

First, we define a similarity relation between input configurations: two input configurations are similar if and only if (1) they have (at least) one process in common, and (2) for every common process, the process's proposal is identical in both input configurations.
For example, an input configuration $c = \big[ (P_1, 0), (P_2, 1) \big]$ is similar to $c' = \big[ (P_1, 0), (P_3, 0) \big]$, but not to $\big[ (P_1, 0), (P_2, 0) \big]$.
We observe that all similar input configurations must have an admissible value in common; we call this \emph{canonical similarity}.
Let us illustrate why a common admissible value must exist.
Consider the aforementioned similar input configurations $c = \big[ (P_1, 0), (P_2, 1) \big]$ and $c' = \big[ (P_1, 0), (P_3, 0) \big]$.
If there is no common admissible value for $c$ and $c'$, consensus cannot be solved: process $P_1$ cannot distinguish (1) an execution in which $P_2$ is correct, and $P_3$ is faulty and silent, from (2) an execution in which $P_2$ is faulty, but behaves correctly, and $P_3$ is correct, but slow.
Thus, $P_1$ cannot conclude whether it needs to decide an admissible value for $c$ or for $c'$.
Canonical similarity is a critical intermediate result that we use extensively throughout the paper (even if it does not directly imply any of our results).

In our proof of triviality with $n \leq 3t$, we intertwine the classical partitioning argument~\cite{lamport2019byzantine} with our canonical similarity result.
Namely, we show that, for any input configuration, there exists an execution in which the same value $x$ is decided, making $x$ an always-admissible value. 
For our lower bound, while following the idea of the original proof~\cite{dolev1985bounds,DBLP:conf/podc/AbrahamCDNP0S19}, we rely on canonical similarity to prove the bound for all solvable and non-trivial validity properties.
Finally, we design \general by relying on vector consensus~\cite{neves2005solving,vaidya2013byzantine,Doudou1998,correia2006consensus}, a problem in which processes agree on the proposals of $n - t$ processes: when a correct process decides a vector $\mathit{vec}$ of $n - t$ proposals (from vector consensus), it decides from \general the common admissible value for all input configurations similar to $\mathit{vec}$.
For example, consider an execution which corresponds to an input configuration $c$.
If, in this execution, a correct process decides a vector $\mathit{vec}$ from vector consensus, it is guaranteed that $\mathit{vec}$ is similar to $c$ (the proposals of correct processes are identical in $c$ and in $\mathit{vec}$).
Hence, deciding (from \general) the common admissible value for all input configurations similar to $\mathit{vec}$ guarantees that the decided value is admissible according to $c$.



\paragraph{Roadmap.}
We provide an overview of related work in \Cref{section:related_work}.
In \Cref{section:preliminaries}, we specify the system model (\Cref{subsection:computational_model_main}), define the consensus problem (\Cref{subsection:byzantine_consensus}), describe our formalism for validity properties (\Cref{subsection:specifying_validity}), and present canonical similarity (\Cref{subsection:similarity}).
We define the necessary conditions for the solvability of validity properties in \Cref{section:classification}.
In \Cref{section:lower_upper_bounds}, we prove a quadratic lower bound on message complexity for all non-trivial (and solvable) validity properties (\Cref{subsection:lower_bound}), and introduce \general, a general consensus algorithm for any solvable (and non-trivial) validity property (\Cref{subsection:general_algorithm}).
We conclude the paper in \Cref{section:conclusion}.
The appendix contains (1) omitted proofs and algorithms, and (2) a proposal for how to extend our formalism to accommodate for blockchain-specific validity properties.
\section{Related work}
\label{section:related_work}

\paragraph{Solvability of consensus.}
The consensus problem has been thoroughly investigated in a variety of system settings and failure models.
It has been known (for long) that consensus can be solved in a synchronous setting, both with crash~\cite{lynch1996distributed,cachin2011introduction,raynal2002consensus} and arbitrary failures~\cite{abraham2017brief,kowalski2013synchronous,schmid2004synchronous,raynal2002consensus,Momose2021}.
In an asynchronous environment, however, consensus cannot be solved deterministically even if a single process can fail, and it does so only by crashing; this is the seminal FLP impossibility result~\cite{fischer1985impossibility}.

A traditional way of circumventing the FLP impossibility result is \emph{randomization}~\cite{B83,aspnes2003randomized,abraham2019asymptotically,LL0W20,ezhilchelvan2001randomized}, where termination of consensus is not ensured deterministically.
Another well-established approach to bypass the FLP impossibility is to strengthen the communication model with \emph{partial synchrony}~\cite{DLS88}: the communication is asynchronous until some unknown time, and then it becomes synchronous.
The last couple of decades have produced many partially synchronous consensus algorithms~\cite{yin2019hotstuff,CGL18,CGG21,CL02,DLS88,BKM19,lynch1996distributed,lamport2001paxos,civit2022byzantine,lewis2022quadratic}.

Another line of research has consisted in weakening the definition of  consensus  to make it deterministically solvable under asynchrony.
In the \emph{condition-based} approach~\cite{mostefaoui2003conditions}, 
the specification of consensus is relaxed to require termination only if the assignment of proposals satisfies some predetermined conditions.
The efficiency of this elegant approach has been studied further in~\cite{mostefaoui2001hierarchy}.
Moreover, the approach is extended to the synchronous setting~\cite{mostefaoui2003using,zibin2003condition}, as well as  to the $k$-set agreement problem~\cite{hagit2002wait,mostefaoui2002condition}.

\paragraph{Solvability of general decision problems.}
A distributed decision problem has been defined in~\cite{crime_punishment,MendesTH14,HerlihyS99} 
as a mapping from input assignments to admissible decisions. 
Our validity formalism is of the same nature, and it is inspired by the aforementioned specification of decision problems.

The solvability of decision problems has been thoroughly studied in asynchronous, crash-prone settings.
It was shown in~\cite{moran1987extended} that the FLP impossibility result~\cite{fischer1985impossibility} can be extended to many decision problems.
In~\cite{biran1990combinatorial}, the authors defined the necessary and sufficient conditions for a decision problem to be asynchronously solvable with a single crash failure.
The asynchronous solvability of problems in which crash failures occur at the very beginning of an execution was studied in~\cite{taubenfeld1989initial}.
The necessary and sufficient conditions for a decision problem to be asynchronously solvable (assuming a crash-prone setting) in a randomized manner were given in ~\cite{Chor1989}. 
The topology-based approach on studying the solvability of decision problems in asynchrony has proven to be extremely effective, both for crash~\cite{HerlihyS93,SaraphHG18,Herlihy2013} and arbitrary failures~\cite{MendesTH14,Herlihy2013}.
Our results follow the same spirit as many of these approaches; however, we study the deterministic solvability and complexity of all consensus variants in a partially synchronous environment.

\paragraph{Validity of consensus.}
Various validity properties have been  associated with the consensus problem (beyond the aforementioned \emph{Strong Validity} and \emph{Weak Validity}).
\emph{Correct-Proposal Validity}~\cite{fitzi2003efficient,siu1998reaching} states that  a value decided by a correct process must have been proposed by a correct process. 
\emph{Median Validity}~\cite{stolz2016byzantine} is a validity property proposed in the context of synchronous consensus,  requiring the decision to be close to the median of the proposals of correct processes. \emph{Interval Validity}~\cite{melnyk2018byzantine}, on the other hand, requires the  decision to be close to the $k$-th smallest proposal of correct processes.
The advent of blockchain technologies has resurged the concept of \emph{External Validity}~\cite{Cachin2001,BKM19,yin2019hotstuff}.
This property requires the decided value to satisfy a predetermined predicate, typically asserting whether the decided value follows the rules of a blockchain system (e.g., no double-spending).
(This paper considers a simple formalism to express basic validity properties and derive our results. To express \emph{External Validity}, which is out of the scope of the paper, we propose an incomplete extension of our formalism in \Cref{section:extended_formalim_appendix}, and leave its realization for future work.)
\emph{Convex-Hull Validity}, which states that the decision must belong to the convex hull of the proposals made by the correct processes, is employed in approximate agreement~\cite{AbrahamAD04,MendesH13,GhineaLW22,ghinea2023multidimensional}.
We underline that the approximate agreement problem is not covered in this paper since the problem allows correct processes to disagree as long as their decisions are ``close'' to each other.
In our paper, we do study (in a general manner) the utilization of \emph{Convex-Hull Validity} in the classical consensus problem (in which the correct processes are required to ``exactly'' agree).

In interactive consistency~\cite{maneas2021achieving,fischer1981lower,ben2003resilient},  correct processes agree on the proposals of all correct processes. Given that the problem is  impossible in a non-synchronous setting, a weaker variant has been considered: vector consensus~\cite{neves2005solving,vaidya2013byzantine,Doudou1998,correia2006consensus,duan2023practical,Ben-Or94}. Here, processes need to agree on a vector of proposals which does not necessarily include the proposals of \emph{all} correct processes. Interactive consistency and vector consensus can be seen as specific consensus problems with a validity property requiring that, if a decided vector contains a proposal $v$ of a correct process, that correct process has indeed proposed $v$.
The design of \general, our general consensus algorithm for any solvable (and non-trivial) validity property, demonstrates that any non-trivial flavor of consensus which is solvable in partial synchrony can be solved using vector consensus (see \Cref{subsection:general_algorithm}).

\section{Preliminaries} \label{section:preliminaries}

In this section, we present the computational model (\Cref{subsection:computational_model_main}), recall the  consensus problem (\Cref{subsection:byzantine_consensus}), formally define validity properties (\Cref{subsection:specifying_validity}), and introduce canonical similarity (\Cref{subsection:similarity}).

\subsection{Computational Model} \label{subsection:computational_model_main}

\paragraph{Processes.}
We consider a system $\allprocesses = \{P_1, P_2, ..., P_n\}$ of $n$ processes; each process is a deterministic state machine.
At most $t$ ($0 < t < n$) processes can be \emph{faulty}: those processes can exhibit arbitrary behavior.
A non-faulty process is said to be \emph{correct}.
Processes communicate by exchanging messages over an authenticated point-to-point network.
The communication network is \emph{reliable}: if a correct process sends a message to a correct process, the message is eventually received.
Each process has its own local clock, and no process can take infinitely many computational steps in finite time.

\paragraph{Executions.}
Given an algorithm $\mathcal{A}$, $\mathit{execs}(\mathcal{A})$ denotes the set of all executions of $\mathcal{A}$.
Furthermore, $\mathit{Corr}_{\mathcal{A}}(\mathcal{E})$ denotes the set of correct processes in an execution $\mathcal{E} \in \mathit{execs}(\mathcal{A})$.
We say that an execution $\mathcal{E} \in \mathit{execs}(\mathcal{A})$ is \emph{canonical} if and only if no faulty process takes any computational step in $\mathcal{E}$; note that faulty processes do not send any message in a canonical execution.
Moreover, observe that any execution $\mathcal{E}$ with $\mathit{Corr}_{\mathcal{A}}(\mathcal{E}) = \allprocesses$ is canonical.

\paragraph{Partial synchrony.}
We consider the standard partially synchronous model~\cite{DLS88}. 
For every execution of the system, there exists a Global Stabilization Time (GST) and a positive duration $\delta$ such that message delays are bounded by $\delta$ after GST.
GST is not known to the processes, whereas $\delta$ is.
We assume that all correct processes start executing their local algorithm before or at GST.
Local clocks may drift arbitrarily before GST, but do not drift thereafter.

We remark that (almost) all results presented in the paper hold even if $\delta$ is unknown.
Namely, the classification of all validity properties remains the same: if a validity property is solvable (resp., unsolvable) with known $\delta$, then the validity property is solvable (resp., unsolvable) with unknown $\delta$.
Moreover, the quadratic lower bound on the message complexity trivially extends to the model in which $\delta$ is unknown.
However, the question of whether the bound is tight when $\delta$ is unknown does remain open.

\paragraph{Cryptographic primitives.}
In one variant of the \general algorithm, we assume a public-key infrastructure (PKI).
In fact, this variant relies on a closed-box consensus algorithm which internally utilizes a PKI.
In a PKI, every process knows the public key of every other process, and, when needed, processes sign messages using digital signatures.
We denote by $\langle m \rangle_{\sigma_i}$ a message $m$ signed by the process $P_i$.
Crucially, faulty processes cannot forge signatures of correct processes.

\paragraph{Message complexity.}
Let $\mathcal{A}$ be any algorithm and let $\mathcal{E} \in \mathit{execs}(\mathcal{A})$ be any execution of $\mathcal{A}$.
The message complexity of $\mathcal{E}$ is the number of messages sent by correct processes during $[\text{GST}, \infty)$.

The \emph{message complexity} of $\mathcal{A}$ is defined as
\begin{equation*}
    \max_{\mathcal{E} \in \mathit{execs}(\mathcal{A})}\bigg\{\text{message complexity of } \mathcal{E} \bigg\}.
\end{equation*}


\subsection{Consensus} \label{subsection:byzantine_consensus}

We denote by $\mathcal{V}_I$ the (potentially infinite) set of values processes can propose, and by $\mathcal{V}_O$ the (potentially infinite) set of values processes can decide.
Consensus\footnote{In the paper, we use ``consensus'' and ``Byzantine consensus'' interchangeably.} exposes the following interface:
\begin{compactitem}
    \item \textbf{request} $\mathsf{propose}(v \in \mathcal{V}_I)$: a process proposes a value $v$.
    
    \item \textbf{indication} $\mathsf{decide}(v' \in \mathcal{V}_O)$: a process decides a value $v'$.
\end{compactitem}
A correct process proposes exactly once, and it decides at most once.
Consensus requires the following properties:
\begin{compactitem}
    \item \emph{Termination:} Every correct process eventually decides.
    
    \item \emph{Agreement:} No two correct processes decide different values.
\end{compactitem}
If the  consensus problem was completely defined by \emph{Termination} and \emph{Agreement}, a trivial solution would exist: processes decide on a default value.
Therefore, the specification of consensus additionally includes a validity property, which connects the proposals of correct processes to admissible decisions, precluding the aforementioned trivial solutions.

\subsection{Validity} \label{subsection:specifying_validity}
In a nutshell, our specification of a validity property includes a set of assignments of proposals to correct processes, and, for each such assignment, a corresponding set of admissible decisions.

We start by defining a \emph{process-proposal} pair as a pair $(P, v)$, where (1) $P \in \allprocesses$ is a process, and (2) $v \in \mathcal{V}_I$ is a proposal.
Given a process-proposal pair $\mathit{pp} = (P, v)$, $\mathsf{proposal}(\mathit{pp}) = v$ denotes the proposal associated with $\mathit{pp}$.

An \emph{input configuration} is a tuple $\big[ \mathit{pp}_1, \mathit{pp}_2, ..., \mathit{pp}_x \big]$ of $x$ process-proposal pairs, where (1) $n - t \leq x \leq n$, and (2) every process-proposal pair is associated with a distinct process.
Intuitively, an input configuration represents an assignment of proposals to correct processes.
For example, $\big[ (P_1, v), (P_2, v), (P_3, v), (P_4, v), (P_5, v) \big]$ is an input configuration describing an execution in which (1) only processes $P_1$, $P_2$, $P_3$, $P_4$ and $P_5$ are correct, and (2) all of them propose the same value $v$.

We denote by $\mathcal{I}$ the set of all input configurations.
Furthermore, for every $x \in [n - t, n]$, $\mathcal{I}_x \subset \mathcal{I}$ denotes the set of input configurations with \emph{exactly} $x$ process-proposal pairs.
For every input configuration $c \in \mathcal{I}$, we denote by $c[i]$ the process-proposal pair associated with process $P_i$; if such a process-proposal pair does not exist, $c[i] = \bot$.
Lastly, $\process{c} = \{ P_i \in \Pi \,|\, c[i] \neq \bot\}$ denotes the set of all processes included in $c$.

Given (1) an execution $\mathcal{E}$ of an algorithm $\mathcal{A}$, where $\mathcal{A}$ exposes the $\mathsf{propose}(\cdot)/\mathsf{decide}(\cdot)$ interface, and (2) an input configuration $c \in \mathcal{I}$, we say that $\mathcal{E}$ \emph{corresponds} to $c$ ($\mathsf{input\_conf}(\mathcal{E}) = c$) if and only if (1) $\process{c} = \mathit{Corr}_{\mathcal{A}}(\mathcal{E})$, and (2) for every process $P_i \in \processes{c}$, $P_i$'s proposal in $\mathcal{E}$ is $\mathsf{proposal}(c[i])$.

Finally, we define a validity property $\mathit{val}$ as a function $\mathit{val}: \allconfigurations \to 2^{\mathcal{V}_O}$ such that, for every input configuration $c \in \allconfigurations$, $\mathit{val}(\mathit{c}) \neq \emptyset$.
An algorithm $\mathcal{A}$, where $\mathcal{A}$ exposes the $\mathsf{propose}(\cdot)/\mathsf{decide}(\cdot)$ interface, \emph{satisfies} a validity property $\mathit{val}$ if and only if, in any execution $\mathcal{E} \in \mathit{execs}(\mathcal{A})$, no correct process decides a value $v' \notin \mathit{val}\big( \mathsf{input\_conf}(\mathcal{E}) \big)$.
That is, an algorithm satisfies a validity property if and only if correct processes decide only admissible values.

\paragraph{Traditional properties in our formalism.}
To illustrate the expressiveness of our formalism, we describe how it can be used for \emph{Strong Validity}, \emph{Weak Validity} and \emph{Correct-Proposal Validity}.
For all of them, $\mathcal{V}_I = \mathcal{V}_O$.
\emph{Weak Validity} can be expressed as
\[
    \mathit{val}(c) =\begin{cases}
            \{v\}, & \text{if } (\processes{c} = \allprocesses) \land (\forall P_i \in \processes{c}: \mathsf{proposal}(c[i]) = v) \\
            \mathcal{V}_O, & \text{otherwise,}
    \end{cases}
\]

whereas \emph{Strong Validity} can be expressed as
\[
    \mathit{val}(c) =\begin{cases}
            \{v\}, & \text{if } \forall P_i \in \processes{c}: \mathsf{proposal}(c[i]) = v \\
            \mathcal{V}_O, & \text{otherwise.}
    \end{cases}
\]
Finally, \emph{Correct-Proposal Validity} can be expressed as
\[
    \mathit{val}(c) = \{v \,|\, \exists P_i \in \processes{c}: \mathsf{proposal}(c[i]) = v\}\text{.}
\]


\paragraph{Consensus algorithms.}
An algorithm $\mathcal{A}$ solves consensus with a validity property $\mathit{val}$ if and only if the following holds:
\begin{compactitem}
    \item $\mathcal{A}$ exposes the $\mathsf{propose}(\cdot)/\mathsf{decide}(\cdot)$ interface, and
    
    \item $\mathcal{A}$ satisfies \emph{Termination}, \emph{Agreement} and $\mathit{val}$.
\end{compactitem}

Lastly, we define the notion of a solvable validity property.

\begin{definition} [Solvable validity property]
We say that a validity property $\mathit{val}$ is \emph{solvable} if and only if there exists an algorithm which solves  consensus 
with $\mathit{val}$.
\end{definition} 

\subsection{Canonical Similarity} \label{subsection:similarity}

In this subsection, we introduce \emph{canonical similarity}, a crucial intermediate result.
In order to do so, we first establish an important relation between input configurations, that of \emph{similarity}.

\paragraph{Similarity.}

We define the similarity relation (``$\sim$'') between input configurations:
\begin{equation} \nonumber
    \begin{split}
        &\forall c_1, c_2 \in \allconfigurations: c_1 \sim c_2 \iff \\ 
        (\processes{c_1} \cap \processes{c_2} \neq~&\emptyset) \land (\forall P_i \in \processes{c_1} \cap \processes{c_2}: c_1[i] = c_2[i]).
    \end{split}
\end{equation}

In other words, $c_1$ is similar to $c_2$ if and only if (1) $c_1$ and $c_2$ have (at least) one process in common, and (2) for every common process, the process's proposal is identical in both input configurations.
For example, when $n = 3$ and $t = 1$, $c = \big[ (P_1, 0), (P_2, 1), (P_3, 0) \big]$ is similar to $\big[ (P_1, 0), (P_3, 0) \big]$, whereas $c$ is not similar to $\big[ (P_1, 0), (P_2, 0) \big]$.
Note that the similarity relation is symmetric (for every pair $c_1, c_2 \in \mathcal{I}$, $c_1 \sim c_2 \Leftrightarrow c_2 \sim c_1$) and reflexive (for every $c \in \mathcal{I}$, $c \sim c$).

For every input configuration $c \in \mathcal{I}$, we define $\mathit{sim}(c)$:
\begin{equation*}
    \mathit{sim}(c) = \{ c' \in \mathcal{I} \,|\, c' \sim c \}.
\end{equation*}



\paragraph{The result.}
Let $\mathcal{A}$ be an algorithm which solves  consensus  with some validity property $\mathit{val}$.
Our canonical similarity result states that, in any canonical execution which corresponds to some input configuration $c$,  $\mathcal{A}$ can only decide a value which is admissible for \emph{all} input configurations similar to $c$.
Informally, the reason is that correct processes cannot distinguish silent faulty processes from slow correct ones.

\begin{lemma}[Canonical similarity] \label{lemma:helper_main}
Let $\mathit{val}$ be any solvable validity property and let $\mathcal{A}$ be any algorithm which solves the consensus problem with $\mathit{val}$.
Let $\mathcal{E} \in \mathit{execs}(\mathcal{A})$ be any (potentially infinite) canonical execution and let $\mathsf{input\_conf}(\mathcal{E}) = c$, for some input configuration $c \in \mathcal{I}$.
If a value $v' \in \mathcal{V}_O$ is decided by a correct process in $\mathcal{E}$, then $v' \in \bigcap\limits_{c' \in \mathit{sim}(c)} \mathit{val}(c')$.
\end{lemma}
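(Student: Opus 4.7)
My approach is a classical indistinguishability argument, formalizing the intuition that, before GST, a correct process cannot distinguish a silent faulty peer from a slow correct one. I would fix an arbitrary $c' \in \similar{c}$ and reduce the claim to showing $v' \in \mathit{val}(c')$, since the intersection over all such $c'$ then follows. Using \emph{Termination} and \emph{Agreement} applied to the canonical execution $\mathcal{E}$, I deduce that every correct process in $\processes{c}$ eventually decides $v'$ in $\mathcal{E}$. Pick any $P \in \processes{c} \cap \processes{c'}$ (nonempty because $c \sim c'$) and let $T$ be the finite local time at which $P$ decides in $\mathcal{E}$.

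The core of the proof is then to construct a companion execution $\mathcal{E}' \in \mathit{execs}(\mathcal{A})$ that corresponds to $c'$ and in which $P$ also decides $v'$ at time $T$. Partition $\allprocesses$ into $A = \processes{c} \cap \processes{c'}$, $B = \processes{c} \setminus \processes{c'}$, $C = \processes{c'} \setminus \processes{c}$, and $D = \allprocesses \setminus (\processes{c} \cup \processes{c'})$. In $\mathcal{E}'$ I declare $A \cup C$ correct (with proposals drawn from $c'$, which by similarity agree with $c$ on $A$) and $B \cup D$ faulty; the bound $|B \cup D| = |\allprocesses \setminus \processes{c'}| \le t$ is immediate from $|\processes{c'}| \ge n-t$. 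For the Byzantine schedule, processes in $D$ stay silent, while processes in $B$ \emph{replay} step-for-step the correct behavior they exhibited in the canonical $\mathcal{E}$ (using their $c$-proposals and responding only to messages from $A \cup B$, as in $\mathcal{E}$). I then place $\text{GST}$ after $T$ and delay every message to or from $C$ until after time $T$, which partial synchrony permits before GST; past $T$, deliveries can be scheduled freely to complete $\mathcal{E}'$. By construction, $P$'s local state evolves identically in $\mathcal{E}$ and $\mathcal{E}'$ up to time $T$, so $P$ decides $v'$ in $\mathcal{E}'$ as well. Applying the validity of $\mathcal{A}$ to $\mathcal{E}'$ then yields $v' \in \mathit{val}(c')$, completing the argument.

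The main obstacle I anticipate is the careful verification that $\mathcal{E}'$ is a legal execution of $\mathcal{A}$: the faulty-set bound, the well-definedness of the Byzantine replay of $B$ (clean because $\mathcal{A}$ is deterministic and each process in $B$ needs no new inputs from $C$ or $D$ to reproduce its $\mathcal{E}$-trace), and the message schedule respecting partial synchrony once $\text{GST}$ is pushed past $T$. None of these is deep, but the bookkeeping must be precise, since the whole lemma hinges on $P$'s view being bit-for-bit identical in both executions through the step at which it decides.
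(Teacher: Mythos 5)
Your proposal is correct and follows essentially the same indistinguishability argument as the paper: pick a common process $P \in \processes{c} \cap \processes{c'}$, build a companion execution corresponding to $c'$ in which the processes of $\processes{c}\setminus\processes{c'}$ are faulty but replay their canonical behavior, communication with $\processes{c'}\setminus\processes{c}$ is delayed past $P$'s decision, and GST is pushed past that point, then invoke the validity of $\mathcal{A}$. The one detail you gloss over is that $\mathcal{E}$ may be a finite prefix in which your chosen $P$ has not yet decided (the hypothesis only says \emph{some} correct process decided $v'$); the paper handles this by first passing to an infinite canonical continuation of $\mathcal{E}$ corresponding to $c$, after which \emph{Termination} and \emph{Agreement} give $P$ a finite decision time $T$.
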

\begin{proof}
We prove the lemma by contradiction.
Suppose that $v' \notin \bigcap\limits_{c' \in \mathit{sim}(c)} \mathit{val}(c')$.
Hence, there exists an input configuration $c' \in \mathit{sim}(c)$ such that $v' \notin \mathit{val}(c')$.
If $\mathcal{E}$ is infinite, let $\mathcal{E}_P \gets \mathcal{E}$.
Otherwise, let $\mathcal{E}_P$ denote any infinite continuation of $\mathcal{E}$ such that (1) $\mathcal{E}_P$ is canonical, and (2) $\mathsf{input\_conf}(\mathcal{E}_P) = c$.
Let $P$ be any process such that $P \in \processes{c'} \cap \processes{c}$; such a process exists as $c' \sim c$.
As $\mathcal{A}$ satisfies \emph{Termination} and \emph{Agreement}, $\mathcal{E}_P$ is an infinite execution, and $P$ is correct in $\mathcal{E}_P$, $P$ decides $v'$ in $\mathcal{E}_P$.
Let $\mathcal{S}$ denote the set of all processes which take a computational step in $\mathcal{E}_P$ until $P$ decides $v'$; note that $\mathcal{S} \subseteq \processes{c}$ and $P \in \mathcal{S}$.

In the next step, we construct another execution $\mathcal{E}' \in \mathit{execs}(\mathcal{A})$ such that $\mathsf{input\_conf}(\mathcal{E}') = c'$:
\begin{compactenum}
    \item $\mathcal{E}'$ is identical to $\mathcal{E}_P$ until process $P$ decides $v'$.
    
    \item All processes not in $\processes{c'}$ are faulty in $\mathcal{E}'$ (they might behave correctly until $P$ has decided), and all processes in $\processes{c'}$ are correct in $\mathcal{E}'$.
    
    \item After $P$ has decided, processes in $\processes{c'} \setminus{\mathcal{S}}$ ``wake up'' with the proposals specified in $c'$.
    
    \item GST is set to after all processes in $\processes{c'}$ have taken a computational step.
\end{compactenum}
For every process $P_i \in \mathcal{S} \cap \processes{c'}$, the proposal of $P_i$ in $\mathcal{E}'$ is $\mathsf{proposal}(c'[i])$; recall that $c'[i] = c[i]$ as $c' \sim c$ and $\mathcal{S} \subseteq \processes{c}$.
Moreover, for every process $P_j \in \processes{c'} \setminus{\mathcal{S}}$, the proposal of $P_j$ in $\mathcal{E}'$ is $\mathsf{proposal}(c'[j])$ (due to the step 3 of the construction).
Hence, $\mathsf{input\_conf}(\mathcal{E}') = c'$.
Furthermore, process $P$, which is correct in $\mathcal{E}'$, decides $v'$ (due to the step 1 of the construction).
As $v' \notin \mathit{val}(c')$, we reach a contradiction with the fact that $\mathcal{A}$ satisfies $\mathit{val}$, which proves the lemma.
\end{proof}

We underline that \Cref{lemma:helper_main} does not rely on \emph{any} assumptions on $n$ and $t$.
In other words, \Cref{lemma:helper_main} is applicable to any algorithm irrespectively of the parameters $n$ and $t$ (as long as $0 < t < n$).

    
    
    
\section{Necessary solvability conditions}  \label{section:classification}

This section gives the necessary conditions for the solvability of validity properties. 
We start by focusing on the case of $n \leq 3t$: we prove that, if $n \leq 3t$, all solvable validity properties are trivial (\Cref{subsection:triviality_main}).
Then, we consider the case of $n > 3t$: we define the similarity condition, and prove its necessity for solvability (\Cref{subsection:necessary_condition}).

\subsection{Triviality of Solvable Validity Properties if $n \leq 3t$} \label{subsection:triviality_main}

Some validity properties, such as \emph{Weak Validity} and \emph{Strong Validity}, are known to be unsolvable with $n \leq 3t$~\cite{DLS88,PeaseSL80}.
This seems to imply a split of validity properties depending on the resilience threshold.
We prove that such a split indeed exists for $n \leq 3t$, and, importantly, that it applies to \emph{all} solvable validity properties.
Implicitly, this means that there is no ``useful'' relaxation of the validity property that can tolerate $t \geq \lceil n / 3 \rceil$ failures.
Concretely, we prove the following theorem:

\begin{theorem} \label{theorem:triviality_main}
If any validity property $\mathit{val}$ is solvable with $n \leq 3t$, then the validity property is trivial, i.e., there exists a value $v' \in \mathcal{V}_O$ such that $v' \in \bigcap\limits_{c \in \mathcal{I}} \mathit{val}(c)$.
\end{theorem}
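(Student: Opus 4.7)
The plan is to combine the classical three-way partitioning argument with the canonical similarity lemma (\Cref{lemma:helper_main}) to exhibit a single value $x \in \mathcal{V}_O$ that lies in $\mathit{val}(c)$ for every input configuration $c$ (I assume $n \geq 3$; the tiny cases are handled by direct inspection). First I would fix a partition $\Pi = A \cup B \cup C$ with $A, B, C$ non-empty and $|A|, |B|, |C| \leq t$, which is possible precisely because $n \leq 3t$. Since $|A \cup B| = n - |C| \geq n - t$, and analogously for the other two unions, for every choice of proposals there is a legitimate input configuration $c_{AB} \in \mathcal{I}$ whose participant set is $A \cup B$, and likewise $c_{AC}$ and $c_{BC}$.

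Next, I would use a hybrid execution argument to show that a \emph{single constant} $x$ is decided by every canonical execution of every such pairwise input configuration. Take any canonical executions $\alpha_{AB}$ of some $c_{AB}$ and $\alpha_{BC}$ of some $c_{BC}$, with decisions $v_{AB}$ and $v_{BC}$. I construct an execution $\gamma$ in which $A$ and $C$ are correct (their proposals drawn from $c_{AB}$ and $c_{BC}$ respectively), $B$ is Byzantine, processes in $B$ replay towards $A$ exactly the messages they would have sent in $\alpha_{AB}$ and towards $C$ exactly the messages they would have sent in $\alpha_{BC}$, and all $A \leftrightarrow C$ messages are held by the scheduler until both $A$ and $C$ have decided, with GST placed afterwards (permitted by partial synchrony). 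In $\gamma$, the local view of $A$ is identical to its view in $\alpha_{AB}$, so it decides $v_{AB}$; symmetrically $C$ decides $v_{BC}$; since both $A$ and $C$ are correct in $\gamma$, \emph{Agreement} forces $v_{AB} = v_{BC}$. Running this argument across all pairs of canonical executions, and with the three parts permuted so that each may play the Byzantine role, shows that there is one constant $x$ such that every canonical execution of $c_{AB}$, $c_{AC}$, or $c_{BC}$ (for every choice of proposals) decides $x$.

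Finally, I would bootstrap $x$ to every input configuration via canonical similarity. Let $c \in \mathcal{I}$ be arbitrary and set $S = \processes{c}$. Since $|S| \geq n - t \geq 1$, $S$ meets at least one part of the partition, say $S \cap A \neq \emptyset$. Build $c^*$ with participant set $A \cup B$ by copying proposals from $c$ on every process in $(A \cup B) \cap S$ and choosing arbitrary proposals for the remaining slots. Then $c^* \sim c$: the common domain $(A \cup B) \cap S$ contains $S \cap A$ and is hence non-empty, and by construction $c^*$ and $c$ agree on it. A canonical execution of $c^*$ decides $x$ by the previous step, and \Cref{lemma:helper_main} then yields $x \in \mathit{val}(c')$ for every $c' \sim c^*$; in particular $x \in \mathit{val}(c)$. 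Hence $x \in \bigcap_{c \in \mathcal{I}} \mathit{val}(c)$, so $\mathit{val}$ is trivial.

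The main obstacle is making the hybrid execution precise: one must check that the Byzantine processes in $B$ can consistently present the two distinct simulated conversations to their two audiences (they can, since nothing constrains their behavior across the cut) and that the scheduler is indeed permitted to hold every $A \leftrightarrow C$ message until after both sides have decided (it is, because partial synchrony only demands bounded delays \emph{after} the unknown GST, which the adversary can place as late as needed). Once these points are verified, the remainder is a straightforward application of canonical similarity.
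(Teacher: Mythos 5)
Your proof is correct, and it reaches the conclusion by a genuinely different decomposition than the paper, even though both arguments run on the same two engines: the classical merge of two canonical executions across a Byzantine cut of size at most $t$, and canonical similarity (\Cref{lemma:helper_main}). The paper anchors everything to a single fixed configuration $\mathit{base} \in \mathcal{I}_{n-t}$: it introduces a \emph{compatibility} relation, shows that every canonical execution of every configuration compatible with $\mathit{base}$ decides the same value $v_{\mathit{base}}$ (\Cref{lemma:triviality_compatible}), then needs an extra ``wake-up'' construction (\Cref{lemma:triviality_complete}) to manufacture, for each full configuration $c_n \in \mathcal{I}_n$, an execution deciding $v_{\mathit{base}}$, and only then invokes canonical similarity for the non-full configurations. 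You instead fix a three-way partition $A \cup B \cup C$ with each part of size at most $t$, use the merge argument symmetrically in all three directions to pin down a single constant $x$ decided by every canonical execution over every pairwise union, and then for an \emph{arbitrary} $c$ jump straight to a similar pairwise-union configuration $c^*$ and apply \Cref{lemma:helper_main} once. This eliminates the completion-to-$\mathcal{I}_n$ lemma entirely (your bootstrap handles $c \in \mathcal{I}_n$ and $c \notin \mathcal{I}_n$ uniformly, since for full $c$ the common domain with $c^*$ is all of $A \cup B$), at the cost of having to argue the constant is shared across three families of configurations rather than one compatibility class. Two small points to tighten: the case $n = 2$, $t = 1$ is not quite ``direct inspection'' --- you need the same merge argument run on a two-part partition of disjoint singletons (which are compatible in the paper's sense), so it is worth a sentence rather than a wave; and you should state explicitly that you take \emph{infinite} canonical executions of the pairwise-union configurations so that \emph{Termination} guarantees the decisions $v_{AB}$, $v_{BC}$ exist before you merge.
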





Before presenting the proof of the theorem, we introduce the \emph{compatibility relation} between input configurations, which we use throughout this subsection.

\paragraph{Compatibility.}
We define the compatibility relation (``$\diamond$'') between input configurations:
\begin{equation} \nonumber
    \begin{split}
        &\forall c_1, c_2 \in \allconfigurations: c_1 \diamond c_2 \iff \\ 
        \big(|\processes{c_1} \cap \processes{c_2}| \leq t\big) &\land \big(\processes{c_1} \setminus{\processes{c_2}}\neq\emptyset\big) \land \big(\processes{c_2} \setminus{\processes{c_1}}\neq\emptyset\big).
    \end{split}
\end{equation}


That is, $c_1$ is compatible with $c_2$ if and only if (1) there are at most $t$ processes in common, (2) there exists a process which belongs to $c_1$ and does not belong to $c_2$, and (3) there exists a process which belongs to $c_2$ and does not belong to $c_1$.
For example, when $n = 3$ and $t = 1$, $c = \big[ (P_1, 0), (P_2, 0) \big]$ is compatible with $\big[ (P_1, 1), (P_3, 1) \big]$, whereas $c$ is not compatible with $\big[ (P_1, 1), (P_2, 1), (P_3, 1) \big]$.
Observe that the compatibility relation is symmetric and irreflexive.

\paragraph{Proof of \Cref{theorem:triviality_main}.}
Throughout the rest of the subsection, we fix any validity property $\mathit{val}$ which is solvable with $n \leq 3t$; our aim is to prove the triviality of $\mathit{val}$.
Moreover:
\begin{compactitem}
    \item We assume that $n \leq 3t$.

    \item We fix any algorithm $\mathcal{A}$ which solves consensus with $\mathit{val}$.

    \item We fix any input configuration $\mathit{base} \in \mathcal{I}_{n - t}$ with exactly $n - t$ process-proposal pairs.

    \item We fix any infinite canonical execution $\mathcal{E}_{\mathit{base}} \in \mathit{execs}(\mathcal{A})$ such that $\mathsf{input\_conf}(\mathcal{E}_{\mathit{base}}) = \mathit{base}$.
    As $\mathcal{A}$ satisfies \emph{Termination} and $\mathit{val}$, and $\mathcal{E}_{\mathit{base}}$ is infinite, some value $v_{\mathit{base}} \in \mathit{val}(\mathit{base})$ is decided by a correct process in $\mathcal{E}_{\mathit{base}}$.
\end{compactitem}


    
    

First, we show that only $v_{\mathit{base}}$ can be decided in any canonical execution which corresponds to any input configuration compatible with $\mathit{base}$.
If a value different from $v_{\mathit{base}}$ is decided, we can apply the classical partitioning argument~\cite{LSP82}: the adversary causes a disagreement by partitioning processes into two disagreeing groups.

\begin{lemma} \label{lemma:triviality_compatible}
Let $c \in \mathcal{I}$ be any input configuration such that $c \diamond \mathit{base}$.
Let $\mathcal{E}_c \in \mathit{execs}(\mathcal{A})$ be any (potentially infinite) canonical execution such that $\mathsf{input\_conf}(\mathcal{E}_c) = c$.
If a value $v_c \in \mathcal{V}_O$ is decided by a correct process in $\mathcal{E}_c$, then $v_c = v_{\mathit{base}}$.
\end{lemma}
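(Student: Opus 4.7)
The plan is to prove \Cref{lemma:triviality_compatible} by contradiction via a two-group partitioning argument in the spirit of \cite{LSP82}. Assuming $v_c \neq v_{\mathit{base}}$, I will construct an execution $\mathcal{E}' \in \mathit{execs}(\mathcal{A})$ in which two correct processes decide different values, contradicting \emph{Agreement}.

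First, I would partition $\Pi$ into the disjoint sets $A = \process{\mathit{base}} \setminus \process{c}$, $B = \process{\mathit{base}} \cap \process{c}$, $C = \process{c} \setminus \process{\mathit{base}}$, and $D = \Pi \setminus (\process{\mathit{base}} \cup \process{c})$. Compatibility gives $A, C \neq \emptyset$ and $|B| \leq t$, but places no bound on $|D|$. In $\mathcal{E}'$, I would declare $A \cup C \cup D$ to be the correct processes and $B$ to be the only Byzantine ones: the processes in $A$ receive their proposals from $\mathit{base}$, those in $C$ their proposals from $c$, and those in $D$ arbitrary proposals from $\mathcal{V}_I$. Since $|B| \leq t$, the adversary budget is respected regardless of $|D|$. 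Keeping $D$ correct (rather than silent-faulty) is the crux of the construction: under $n \leq 3t$, the alternative faulty count $|B| + |D|$ can exceed $t$, whereas leaving $D$ correct keeps the count at $|B|$ alone.

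Next, I would schedule messages so that, until $A$ and $C$ have respectively decided, each process in $A$ receives exactly the messages it receives in $\mathcal{E}_{\mathit{base}}$ and each process in $C$ receives exactly the messages it receives in $\mathcal{E}_c$. Processes in $A$ run as in $\mathcal{E}_{\mathit{base}}$, processes in $C$ run as in $\mathcal{E}_c$, and the Byzantine processes in $B$ simulate their correct behaviour from $\mathcal{E}_{\mathit{base}}$ toward $A$ and their correct behaviour from $\mathcal{E}_c$ toward $C$ in parallel; such double-talk is permitted because $B$ is faulty. All inter-group messages between $A$ and $C$, along with every message involving $D$, are delayed until after both decisions occur, and GST is placed strictly later than both decision events, which is legitimate under partial synchrony. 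By indistinguishability, the processes in $A$ will decide $v_{\mathit{base}}$ and the processes in $C$ will decide $v_c$, contradicting \emph{Agreement} as required.

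The main obstacle I anticipate is justifying that this is a legitimate partially-synchronous execution: I must argue that the Byzantine double-simulation by $B$ is self-consistent, which I will handle by letting each $B$-process maintain two independent copies of the algorithm's state (one per recipient group) and discard inputs that do not belong to the simulation currently in use, and that the message-delay schedule faithfully reproduces the $A$- and $C$-views of $\mathcal{E}_{\mathit{base}}$ and $\mathcal{E}_c$ even when those executions themselves decide only after their own GST, which I will handle by pushing the GST of $\mathcal{E}'$ past both decisions so that pre-decision delays remain unrestricted. Once these bookkeeping points are settled, $D$'s own eventual decision (required by \emph{Termination}) cannot rescue \emph{Agreement} because the violation is already witnessed among the correct processes in $A \cup C$.
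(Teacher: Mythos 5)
Your proof is correct and follows essentially the same route as the paper's: the identical four-group partition $A,B,C,D$, the double-simulation by the faulty set $B$, delayed $A$--$C$ communication, late-waking correct $D$, and GST pushed past both decisions. The only detail the paper handles that you gloss over is that $\mathcal{E}_c$ may be finite with the decider of $v_c$ lying in $B$ rather than $C$; the paper first passes to an infinite canonical continuation of $\mathcal{E}_c$ so that, by \emph{Termination} and \emph{Agreement}, some process $Q \in C$ actually decides $v_c$ before the merge.
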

\begin{proof}
By contradiction, suppose that some value $v_c \neq v_{\mathit{base}}$ is decided by a correct process in $\mathcal{E}_c$.
Since $c \diamond \mathit{base}$, there exists a process $Q \in \processes{c}\setminus{\processes{\mathit{base}}}$.
If $\mathcal{E}_c$ is infinite, let $\mathcal{E}_c^Q \gets \mathcal{E}_c$.
Otherwise, let $\mathcal{E}_c^Q$ be an infinite canonical continuation of $\mathcal{E}_c$.
The following holds for process $Q$: (1) $Q$ decides $v_c$ in $\mathcal{E}_c^Q$ (as $\mathcal{A}$ satisfies \emph{Agreement} and \emph{Termination}), and (2) $Q$ is silent in $\mathcal{E}_{\mathit{base}}$.
Let $t_Q$ denote the time at which $Q$ decides in $\mathcal{E}_c^Q$.
Similarly, there exists a process $P \in \processes{\mathit{base}} \setminus{\processes{c}}$; note that (1) $P$ decides $v_{\mathit{base}}$ in $\mathcal{E}_{\mathit{base}}$, and (2) $P$ is silent in $\mathcal{E}_c^Q$.
Let $t_P$ denote the time at which $P$ decides in $\mathcal{E}_{\mathit{base}}$.

In the next step, we construct an execution $\mathcal{E} \in \mathit{execs}(\mathcal{A})$ by ``merging'' $\mathcal{E}_{\mathit{base}}$ and $\mathcal{E}_c^Q$:
\begin{compactenum}
    \item We separate processes into 4 groups: (1) group $A = \pi(\mathit{base}) \setminus{\pi(c)}$, (2) group $B = \pi(\mathit{base}) \cap \pi(c)$, (3) group $C = \pi(c) \setminus{\pi(\mathit{base})}$, and (4) group $D = \Pi \setminus{(A \cup B \cup C)}$.

    \item Processes in $B$ behave towards processes in $A$ as in $\mathcal{E}_{\mathit{base}}$, and towards processes in $C$ as in $\mathcal{E}_c^Q$.

    \item Communication between groups $A$ and $C$ is delayed until after time $\max(t_P, t_Q)$.

    \item Processes in group $D$ ``wake up'' at time $T_D > \max(t_P,t_Q)$ (with any proposals), and behave correctly throughout the entire execution.
    
    \item We set GST to after $T_D$.
\end{compactenum}
The following holds for $\mathcal{E}$:
\begin{compactitem}
    \item Processes in $\Pi \setminus{B}$ are correct in $\mathcal{E}$.
    In other words, only processes in $B$ are faulty in $\mathcal{E}$.
    Recall that $|B| \leq t$ as $\mathit{base} \diamond c$.

    \item Process $Q$, which is correct in $\mathcal{E}$, cannot distinguish $\mathcal{E}$ from $\mathcal{E}_c^Q$ until after time $\max(t_P, t_Q)$.
    Hence, process $Q$ decides $v_c$ in $\mathcal{E}$.

    \item Process $P$, which is correct in $\mathcal{E}$, cannot distinguish $\mathcal{E}$ from $\mathcal{E}_{\mathit{base}}$ until after time $\max(t_P, t_Q)$.
    Hence, process $P$ decides $v_{\mathit{base}} \neq v_c$ in $\mathcal{E}$.
\end{compactitem}
Therefore, we reach a contradiction with the fact that $\mathcal{A}$ satisfies \emph{Agreement}.
Thus, $v_c = v_{\mathit{base}}$.
\end{proof}

Observe that the proposals of an input configuration compatible with $\mathit{base}$ do not influence the decision: given an input configuration $c \in \mathcal{I}$, $c \diamond \mathit{base}$, only $v_{\mathit{base}}$ can be decided in any canonical execution which corresponds to $c$, \emph{irrespectively} of the proposals.

Next, we prove a direct consequence of \Cref{lemma:triviality_compatible}: for every input configuration $c_n \in \mathcal{I}_n$, there exists an infinite execution $\mathcal{E}_{n}$ such that (1) $\mathcal{E}_{n}$ corresponds to $c_n$, and (2) $v_{\mathit{base}}$ is decided in $\mathcal{E}_{n}$.


\begin{lemma} \label{lemma:triviality_complete}
For any input configuration $c_n \in \mathcal{I}_n$, there exists an infinite execution $\mathcal{E}_{n} \in \mathit{execs}(\mathcal{A})$ such that (1) $\mathsf{input\_conf}(\mathcal{E}_{n}) = c_n$, and (2) $v_{\mathit{base}}$ is decided in $\mathcal{E}_{n}$.
\end{lemma}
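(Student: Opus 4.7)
The plan is to reduce this to \Cref{lemma:triviality_compatible} by interposing a suitable ``intermediate'' input configuration $c_{\mathit{mid}} \in \mathcal{I}_{n-t}$ between $\mathit{base}$ and $c_n$. The direct approach fails because $c_n$ itself cannot be compatible with $\mathit{base}$: since $\pi(c_n) = \Pi$, we have $\pi(\mathit{base}) \setminus \pi(c_n) = \emptyset$, violating one of the three clauses of the compatibility relation. So the strategy is to find $c_{\mathit{mid}}$ that is compatible with $\mathit{base}$, agrees with $c_n$ on every process that it does contain, and then extend a canonical execution for $c_{\mathit{mid}}$ into an execution for $c_n$ by ``waking up'' the missing processes only after a decision has been made.

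First I would choose $\pi(c_{\mathit{mid}}) \subseteq \Pi$ of size exactly $n - t$, consisting of $k$ processes drawn from $\pi(\mathit{base})$ and $n - t - k$ processes drawn from $\Pi \setminus \pi(\mathit{base})$. The three compatibility conditions translate to $\max(0, n-2t) \le k \le \min(t, n-t-1)$. The hypothesis $n \le 3t$ gives $n - 2t \le t$, so this interval is non-empty in all relevant regimes (with a small check for the edge case $n = t+1$, where $k = 0$ works). Then I would define the proposals of $c_{\mathit{mid}}$ to coincide with those of $c_n$ on every process in $\pi(c_{\mathit{mid}})$, i.e., $\mathsf{proposal}(c_{\mathit{mid}}[i]) = \mathsf{proposal}(c_n[i])$ for each $P_i \in \pi(c_{\mathit{mid}})$. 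By construction, $c_{\mathit{mid}} \diamond \mathit{base}$.

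Next I would fix any infinite canonical execution $\mathcal{E}_{\mathit{mid}} \in \mathit{execs}(\mathcal{A})$ with $\mathsf{input\_conf}(\mathcal{E}_{\mathit{mid}}) = c_{\mathit{mid}}$. By \emph{Termination} some correct process $R \in \pi(c_{\mathit{mid}})$ decides a value $v$ at some finite time $t_R$, and by \Cref{lemma:triviality_compatible} applied to $c = c_{\mathit{mid}}$ we obtain $v = v_{\mathit{base}}$. I would then build $\mathcal{E}_n$ by: (1) copying $\mathcal{E}_{\mathit{mid}}$ verbatim up to time $t_R$; (2) at time $t_R$, having every process in $\Pi \setminus \pi(c_{\mathit{mid}})$ wake up with proposal $\mathsf{proposal}(c_n[i])$ and behave correctly thereafter; (3) delaying all inter-group messages and setting $\mathrm{GST}$ to some time after every process has taken a step; (4) letting the system run forever under the algorithm $\mathcal{A}$.

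Finally I would verify the two claims. Every process is correct in $\mathcal{E}_n$, and each $P_i$ proposes $\mathsf{proposal}(c_n[i])$ (the processes in $\pi(c_{\mathit{mid}})$ did so at start by choice of $c_{\mathit{mid}}$, and the remaining processes do so on wake-up), so $\mathsf{input\_conf}(\mathcal{E}_n) = c_n$. The process $R$ cannot distinguish $\mathcal{E}_n$ from $\mathcal{E}_{\mathit{mid}}$ up to time $t_R$ (the newly woken processes send nothing before $t_R$), so $R$ still decides $v_{\mathit{base}}$ in $\mathcal{E}_n$, which is what the lemma asks for. The main obstacle I anticipate is just the combinatorial existence of $c_{\mathit{mid}}$: checking that the compatibility constraints can be simultaneously satisfied for every value of $n$ and $t$ with $n \le 3t$ (in particular the boundary case $n = 3t$, where $k$ is forced to equal $t$). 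Everything else is an indistinguishability argument of the same flavor as \Cref{lemma:helper_main} and \Cref{lemma:triviality_compatible}, combined with \emph{Agreement} to propagate $v_{\mathit{base}}$ if one wanted to strengthen the conclusion to ``every correct process decides $v_{\mathit{base}}$''.
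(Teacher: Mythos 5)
Your proposal is correct and follows essentially the same route as the paper: it interposes an $(n-t)$-process configuration compatible with $\mathit{base}$ whose proposals agree with $c_n$, invokes \Cref{lemma:triviality_compatible} to force the decision $v_{\mathit{base}}$ in a canonical execution, and then wakes up the remaining processes after the decision to obtain an execution corresponding to $c_n$. The only difference is cosmetic — you parametrize the overlap $|\pi(c_{\mathit{mid}}) \cap \pi(\mathit{base})| = k$ over the whole feasible interval, whereas the paper fixes the specific choice $k = n-2t$ (all $t$ processes outside $\mathit{base}$ plus $n-2t$ processes from $\mathit{base}$), and your range analysis correctly shows that choice is always available when $0 < t < n \leq 3t$.
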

\begin{proof}
Fix any input configuration $c_n \in \mathcal{I}_n$.
We construct an input configuration $c_{n - t} \in \mathcal{I}_{n - t}$:
\begin{compactenum}
    \item For every process $P_i \notin \processes{\mathit{base}}$, we include a process-proposal pair $(P_i, v)$ in $c_{n - t}$ such that $v = \mathsf{proposal}(c_n[i])$.
    Note that there are $t$ such processes as $|\processes{\mathit{base}}| = n - t$.

    \item We include any $n - 2t$ process-proposal pairs $(P_i, v)$ in $c_{n - t}$ such that (1) $P_i \in \processes{\mathit{base}}$, and (2) $v = \mathsf{proposal}(c_n[i])$.
    That is, we ``complete'' $c_{n - t}$ (constructed in the step 1) with $n - 2t$ process-proposal pairs such that the process is ``borrowed'' from $\mathit{base}$, and its proposal is ``borrowed'' from $c_n$.
\end{compactenum}
Observe that $c_{n - t} \diamond \mathit{base}$ as (1) $|\processes{c_{n - t}} \cap \processes{\mathit{base}}| \leq t$ ($n - 2t \leq t$ when $n \leq 3t$), (2) there exists a process $P \in \processes{\mathit{base}} \setminus \processes{c_{n - t}}$ (while constructing $c_{n - t}$, we excluded $t > 0$ processes from $\mathit{base}$; step 2), and (3) there exists a process $Q \in \processes{c_{n - t}} \setminus{\processes{\mathit{base}}}$ (we included $t > 0$ processes in $\mathit{c_{n - t}}$ which are not in $\mathit{base}$; step 1).

Let $\mathcal{E}_{n - t} \in \mathit{execs}(\mathcal{A})$ denote any infinite canonical execution such that $\mathsf{input\_conf}(\mathcal{E}_{n - t}) = c_{n - t}$.
As $\mathcal{A}$ satisfies \emph{Termination}, some value is decided by correct processes in $\mathcal{E}_{n - t}$; due to \Cref{lemma:triviality_compatible}, that value is $v_{\mathit{base}}$.
Finally, we are able to construct an infinite execution $\mathcal{E}_n \in \mathit{execs}(\mathcal{A})$ such that (1) $\mathsf{input\_conf}(\mathcal{E}_n) = c_n$, and (2) $v_{\mathit{base}}$ is decided in $\mathcal{E}_n$:
\begin{compactenum}
    \item All processes are correct in $\mathcal{E}_n$.

    \item Until some correct process $P \in \processes{c_{n - t}}$ decides $v_{\mathit{base}}$, $\mathcal{E}_n$ is identical to $\mathcal{E}_{n - t}$.
    Let $\mathcal{S}$ denote the set of all processes which take a computational step in $\mathcal{E}_{n - t}$ until $P$ decides $v_{\mathit{base}}$; note that $\mathcal{S} \subseteq \processes{c_{n - t}}$ and $P \in \mathcal{S}$.

    \item Afterwards, every process $Q \notin \mathcal{S}$ ``wakes up'' with the proposal specified in $c_n$.
    
    \item GST occurs after all processes have taken a step.
\end{compactenum}
Therefore, $v_{\mathit{base}}$ is indeed decided in $\mathcal{E}_n$ and $\mathsf{input\_conf}(\mathcal{E}_n) = c_n$, which concludes the proof.
\end{proof}

We are now ready to prove that $\mathit{val}$ is trivial.
We do so by showing that, for any input configuration $c \in \mathcal{I}$, $v_{\mathit{base}} \in \mathit{val}(c)$.

\begin{lemma}
\label{lemma:triviality_main_2} 
Validity property $\mathit{val}$ is trivial.
\end{lemma}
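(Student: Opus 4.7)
The plan is to show that the specific value $v_{\mathit{base}}$ fixed at the start of the subsection is admissible for \emph{every} input configuration in $\mathcal{I}$, which directly yields triviality. The key ingredients are already in place: \Cref{lemma:triviality_complete} gives us, for every full input configuration $c_n \in \mathcal{I}_n$, an execution in which $v_{\mathit{base}}$ is decided, while \Cref{lemma:helper_main} (canonical similarity) lets us propagate admissibility from a full configuration to every configuration similar to it.

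First I would handle the easy case. Fix an arbitrary $c \in \mathcal{I}$. If $|\processes{c}| = n$, then $c \in \mathcal{I}_n$ and \Cref{lemma:triviality_complete} provides an infinite execution $\mathcal{E}_n$ with $\mathsf{input\_conf}(\mathcal{E}_n) = c$ in which $v_{\mathit{base}}$ is decided; since all processes are correct, $\mathcal{E}_n$ is canonical, and because $\mathcal{A}$ satisfies $\mathit{val}$ we immediately get $v_{\mathit{base}} \in \mathit{val}(c)$.

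For the general case with $|\processes{c}| = x < n$, I would extend $c$ into a full configuration $c_n \in \mathcal{I}_n$ by picking, for every process $P_i \notin \processes{c}$, an arbitrary proposal $v_i \in \mathcal{V}_I$ and setting $c_n[i] = (P_i, v_i)$, while keeping $c_n[i] = c[i]$ for $P_i \in \processes{c}$. By construction, $\processes{c_n} \cap \processes{c} = \processes{c} \neq \emptyset$ (nonempty because $x \geq n - t > 0$) and the process-proposal pairs agree on every common process, so $c \sim c_n$, i.e., $c \in \mathit{sim}(c_n)$. Applying \Cref{lemma:triviality_complete} to $c_n$ yields an infinite canonical execution $\mathcal{E}_n$ in which $v_{\mathit{base}}$ is decided, and \Cref{lemma:helper_main} then forces $v_{\mathit{base}} \in \bigcap_{c' \in \mathit{sim}(c_n)} \mathit{val}(c') \subseteq \mathit{val}(c)$.

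Since $c$ was arbitrary, $v_{\mathit{base}} \in \bigcap_{c \in \mathcal{I}} \mathit{val}(c)$, which is exactly the definition of triviality. There is no real obstacle here beyond making sure the extension $c_n$ is genuinely similar to $c$ and that the execution produced by \Cref{lemma:triviality_complete} is canonical; both are immediate from the construction (all processes correct in $\mathcal{E}_n$) and from the similarity relation preserving the agreement of proposals on the common processes.
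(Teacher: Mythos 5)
Your proposal is correct and follows essentially the same route as the paper: the same two-case split (full configurations handled directly by \Cref{lemma:triviality_complete}, non-full ones extended to a similar $c_n \in \mathcal{I}_n$ and then handled via canonical similarity, \Cref{lemma:helper_main}). No gaps.
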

\begin{proof}
We fix any input configuration $c \in \mathcal{I}$.
Let us distinguish two possible scenarios:
\begin{compactitem}
    \item Let $c \in \mathcal{I}_n$.
    There exists an infinite execution $\mathcal{E}_c \in \mathit{execs}(\mathcal{A})$ such that (1) $\mathsf{input\_conf}(\mathcal{E}_c) = c$, and (2) $v_{\mathit{base}}$ is decided in $\mathcal{E}_c$ (by \Cref{lemma:triviality_complete}).
    As $\mathcal{A}$ satisfies $\mathit{val}$, $v_{\mathit{base}} \in \mathit{val}(c)$.

    \item Let $c \notin \mathcal{I}_n$.
    We construct an input configuration $c_n \in \mathcal{I}_n$ in the following way:
    \begin{compactenum}
        \item Let $c_n \gets c$.

        \item For every process $P \notin \processes{c}$, $(P, \text{any proposal})$ is included in $c_n$.
    \end{compactenum}
    Due to the construction of $c_n$, $c_n \sim c$.
    By \Cref{lemma:triviality_complete}, there exists an infinite execution $\mathcal{E}_{n} \in \mathit{execs}(\mathcal{A})$ such that (1) $\mathsf{input\_conf}(\mathcal{E}_n) = c_n$, and (2) $v_{\mathit{base}}$ is decided in $\mathcal{E}_{n}$; $\mathcal{E}_n$ is a canonical execution as all processes are correct.
    Therefore, canonical similarity ensures that $v_{\mathit{base}} \in \mathit{val}(c)$ (\Cref{lemma:helper_main}).
\end{compactitem}
In both possible cases, $v_{\mathit{base}} \in \mathit{val}(c)$.
Thus, the theorem.
\end{proof}

\Cref{lemma:triviality_main_2} concludes the proof of \Cref{theorem:triviality_main}, as \Cref{lemma:triviality_main_2} proves that $\mathit{val}$, any solvable validity property with $n \leq 3t$, is trivial.
\Cref{fig:triviality} depicts the proof of \Cref{theorem:triviality_main}.
Since this subsection shows that consensus cannot be useful when $n \leq 3t$, the rest of the paper focuses on the case of $n > 3t$.

\begin{figure}[ht]
    \centering
    \includegraphics[scale = 0.3]{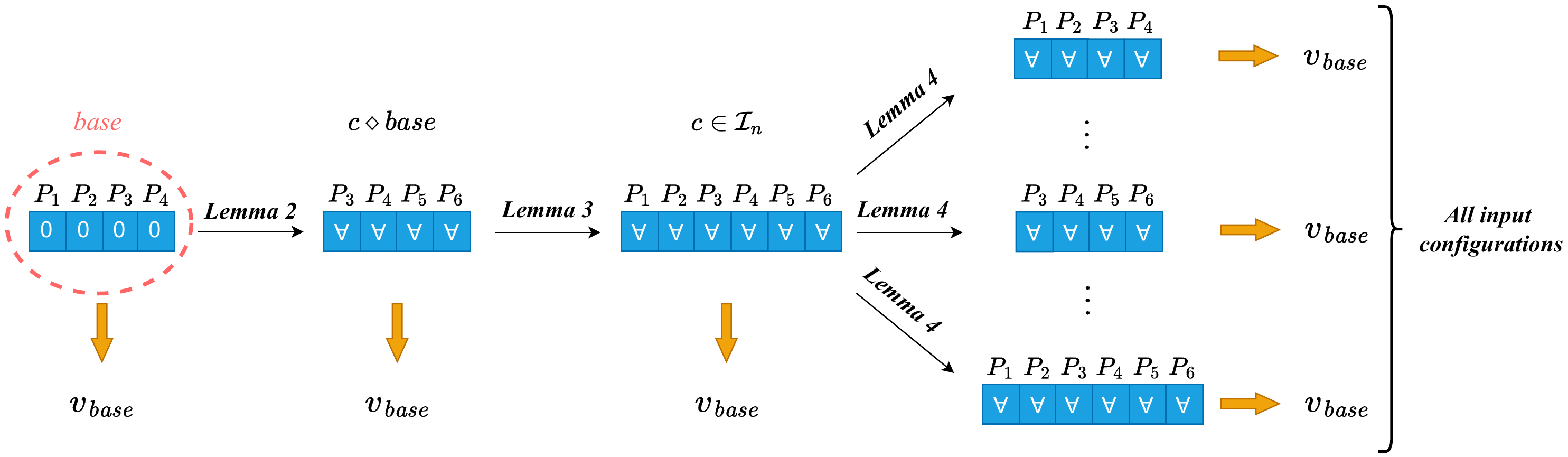}
    \caption{\Cref{theorem:triviality_main}: Overview of the proof in the case of $n = 6$, $t = 2$, and $\mathit{base} = \big[ (P_1, 0), (P_2, 0), (P_3, 0), (P_4, 0) \big]$.}
    \label{fig:triviality}
\end{figure}

\paragraph{Remark.}
As \Cref{theorem:triviality_main} shows, any validity property which is solvable with $n \leq 3t$ is trivial.
However, we now strengthen the aforementioned necessary condition for solvable validity properties with $n \leq 3t$.

\begin{theorem} \label{theorem:triviality_2_main}
If any validity property $\mathit{val}$ is solvable with $n \leq 3t$, then there exists a finite procedure $\mathsf{always\_admissible}$ which returns a value $v' \in \mathcal{V}_O$ such that $v' \in \bigcap\limits_{c \in \mathcal{I}} \mathit{val}(c)$.
\end{theorem}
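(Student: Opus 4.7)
The plan is to make \Cref{theorem:triviality_main} constructive, by observing that the value $v_{\mathit{base}}$ featured in its proof is itself computable in finite time. The existence argument of \Cref{theorem:triviality_main} fixed an algorithm $\mathcal{A}$ solving consensus with $\mathit{val}$, picked an infinite canonical execution $\mathcal{E}_{\mathit{base}}$ corresponding to a chosen $\mathit{base} \in \mathcal{I}_{n-t}$, and read off the value decided there; I would turn this ``reading'' into a terminating simulation that serves as $\mathsf{always\_admissible}$.

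Concretely, the procedure I have in mind works as follows. Hard-code one fixed input configuration $\mathit{base} \in \mathcal{I}_{n-t}$ (this only requires picking $n-t>0$ process identifiers and assigning them arbitrary proposals from $\mathcal{V}_I$). Then simulate $\mathcal{A}$ on the canonical schedule in which (1) only the processes in $\pi(\mathit{base})$ take steps, each starting with the proposal prescribed by $\mathit{base}$; (2) the simulator plays the network, delivering messages between these processes in round-robin order; and (3) all partial-synchrony timing constraints are assumed to be satisfied from time $0$ (so that $\mathrm{GST}=0$). The procedure halts and returns $v^\star$ as soon as some simulated correct process invokes $\mathsf{decide}(v^\star)$.

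Correctness rests on two observations. First, \emph{Termination} of $\mathcal{A}$ guarantees that in every infinite canonical execution whose input configuration is $\mathit{base}$ some correct process eventually decides, so the simulation halts after finitely many simulated steps. Second, the returned value $v^\star$ is exactly the $v_{\mathit{base}}$ used in \Cref{lemma:triviality_compatible} and \Cref{lemma:triviality_complete}; hence, by the argument of \Cref{lemma:triviality_main_2}, $v^\star \in \bigcap_{c \in \mathcal{I}} \mathit{val}(c)$, which is precisely what $\mathsf{always\_admissible}$ is required to output.

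The main obstacle is essentially bookkeeping: I have to justify that $\mathcal{A}$ admits a faithful centralized simulation. This follows directly from the computational model of \Cref{subsection:computational_model_main}, in which each process is a deterministic state machine that takes only finitely many steps in any finite prefix of time. A simulator can therefore enumerate these local transitions one by one, route messages to itself, and schedule them arbitrarily, turning the distributed execution $\mathcal{E}_{\mathit{base}}$ into a purely sequential, effective computation. No additional structural property of $\mathit{val}$ is needed beyond the solvability hypothesis we are already given.
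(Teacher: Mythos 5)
Your proposal is correct and is essentially the paper's own argument: the paper also observes that the algorithm $\mathcal{A}$, run on a canonical execution corresponding to a fixed $\mathit{base} \in \mathcal{I}_{n-t}$ until a decision (which happens in a finite prefix), \emph{is} the finite procedure $\mathsf{always\_admissible}$, with admissibility of the returned value following from the proof of \Cref{theorem:triviality_main}. The only difference is presentational --- the paper phrases it as a contradiction while you spell out the centralized simulation directly --- which, if anything, makes the construction more explicit.
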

\begin{proof}
We prove the theorem by contradiction.
Hence, suppose that there exists a validity property $\mathit{val}$ which is solvable with $n \leq 3t$, and that there does not exist a finite procedure $\mathsf{always\_admissible}$ which returns a value $v' \in \mathcal{V}_O$ such that $v' \in \bigcap\limits_{c \in \mathcal{I}} \mathit{val}(c)$.
As $\mathit{val}$ is solvable, there exists an algorithm $\mathcal{A}$ which solves consensus with $\mathit{val}$.

Let us fix any input configuration $\mathit{base} \in \mathcal{I}_{n - t}$ (as done in the proof of \Cref{theorem:triviality_main}).
Moreover, let $v_{\mathit{base}}$ be the value decided in \emph{any} infinite canonical execution $\mathcal{E}_{\mathit{base}} \in \mathit{execs}(\mathcal{A})$ such that $\mathsf{input\_conf}(\mathcal{E}_{\mathit{base}}) = \mathit{base}$; observe that the prefix of $\mathcal{E}_{\mathit{base}}$ is finite as processes can take only finitely many steps in finite time.
As proven in the proof of \Cref{theorem:triviality_main}, $v_{\mathit{base}} \in \mathit{val}(c)$, for any input configuration $c \in \mathcal{I}$.
Hence, there exists a finite procedure $\mathsf{always\_admissible}$ which returns a value admissible according to all input configurations ($\mathcal{A}$ is such a procedure).
Thus, we reach a contradiction, which concludes the proof.
\end{proof}

\Cref{theorem:triviality_2_main} states that, if a validity property is solvable with $n \leq 3t$, not only that the property is trivial (as proven by \Cref{theorem:triviality_main}), but there exists a finite procedure which retrieves an always-admissible value.
Thus, \Cref{theorem:triviality_2_main} strictly extends \Cref{theorem:triviality_main}.
Observe that, if $n \leq 3t$ and a validity property is associated with a finite procedure $\mathsf{always\_admissible}$ which retrieves an always-admissible value, solving consensus with that specific properties is trivial: each process immediately decides the value returned by $\mathsf{always\_admissible}$.
Thus, an existence of the $\mathsf{always\_admissible}$ procedure is a necessary and sufficient condition for solvable validity properties with $n \leq 3t$.

\subsection{Similarity Condition: Necessary Solvability Condition} \label{subsection:necessary_condition}

This subsection defines the similarity condition, and proves its necessity for solvable validity properties.

\begin{definition} [Similarity condition] \label{definition:similarity_condition}
A validity property $\mathit{val}$ satisfies the \emph{similarity condition} ($\mathcal{C}_S$, in short) if and only if there exists a Turing-computable function $\Lambda: \mathcal{I}_{n - t} \to \mathcal{V}_O$ such that:
\begin{equation*}
 \forall c \in \mathcal{I}_{n - t}: \Lambda(c) \in \bigcap\limits_{c' \in \mathit{sim}(c)} \mathit{val}(c').
\end{equation*}
\end{definition}

$\mathcal{C}_S$ states that, for every input configuration $c \in \mathcal{I}_{n - t}$, there exists a Turing-computable function $\Lambda(c)$ which retrieves a common admissible decision among all input configurations similar to $c$.\footnote{A function is Turing-computable if there exists a finite procedure to compute it.}
The necessity of $\mathcal{C}_S$ follows from the canonical similarity result: in any infinite canonical execution, a common admissible value must be decided (\Cref{lemma:helper_main}).

\begin{theorem} \label{theorem:similarity_condition_necessary}
Any solvable validity property satisfies $\mathcal{C}_S$.
\end{theorem}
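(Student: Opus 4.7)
The plan is to use the algorithm guaranteed by the solvability of $\mathit{val}$ as a black-box ``oracle'' to define $\Lambda$. Since $\mathit{val}$ is solvable, fix any algorithm $\mathcal{A}$ that solves consensus with $\mathit{val}$. For each input configuration $c \in \mathcal{I}_{n-t}$, I will define $\Lambda(c)$ to be the value decided by correct processes in a canonical execution of $\mathcal{A}$ that corresponds to $c$. Admissibility of $\Lambda(c)$ for every $c' \in \mathit{sim}(c)$ will then follow immediately from canonical similarity (\Cref{lemma:helper_main}); the remaining work is to argue that this value can actually be produced by a finite, deterministic procedure.

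More concretely, given $c \in \mathcal{I}_{n-t}$, I would describe the following simulation: let the processes in $\pi(c)$ be correct and adopt the proposals specified by $c$, let the (exactly $t$) processes outside $\pi(c)$ be faulty and silent from the very beginning, fix GST $= 0$, and fix a simple deterministic schedule for message deliveries and local steps (e.g., round-robin with unit delay). Because the processes are deterministic state machines and \emph{Termination} forces every correct process to decide after finitely many steps, this simulation halts. By \emph{Agreement}, every correct process decides the same value; call it $v_c$, and set $\Lambda(c) := v_c$. By \Cref{lemma:helper_main} applied to the resulting canonical execution, $v_c \in \bigcap_{c' \in \mathit{sim}(c)} \mathit{val}(c')$, which is exactly the admissibility half of $\mathcal{C}_S$.

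For Turing-computability, I would observe that the entire simulation described above consists of advancing finitely many deterministic state machines according to an explicit deterministic schedule until at least one correct process announces a decision, and then outputting that decision. Since no process takes infinitely many steps in finite time and the decision time is finite by \emph{Termination}, a Turing machine can run this simulation to completion on input $c$ and return $\Lambda(c)$. The input domain $\mathcal{I}_{n-t}$ is finitely describable (an assignment of proposals to a size-$(n-t)$ subset of $\Pi$), so $\Lambda: \mathcal{I}_{n-t} \to \mathcal{V}_O$ is a well-defined computable function.

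The one delicate point, and the step I would be most careful about, is the implicit assumption that the state machines defining $\mathcal{A}$ admit a finite effective description that a Turing machine can actually simulate. This is already baked into the paper's computational model (each process is a deterministic state machine whose local steps are algorithmic), so the argument is legitimate, but it is worth stating explicitly: any algorithm in the sense of the model of \Cref{subsection:computational_model_main} is by assumption mechanically simulable, hence the simulation-based definition of $\Lambda$ qualifies as Turing-computable. Once this is acknowledged, the theorem follows directly from the combination of the black-box simulation and \Cref{lemma:helper_main}.
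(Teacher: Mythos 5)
Your proposal is correct and follows essentially the same route as the paper: both define $\Lambda(c)$ as the value decided in a deterministically scheduled canonical (synchronous, $\mathrm{GST}=0$) execution of a solving algorithm $\mathcal{A}$ corresponding to $c$, and both invoke \Cref{lemma:helper_main} to get admissibility for all of $\mathit{sim}(c)$; the paper merely phrases this as a contradiction while you phrase it constructively, and you make explicit the (paper-implicit) point that simulating $\mathcal{A}$ is an effective procedure.
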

\begin{proof}
By the means of contradiction, let there exist a validity property $\mathit{val}$ such that (1) $\mathit{val}$ does not satisfy $\mathcal{C}_S$, and (2) $\mathit{val}$ is solvable.
Let $\mathcal{A}$ be any algorithm which solves the Byzantine consensus problem with $\mathit{val}$.
As $\mathit{val}$ does not satisfy $\mathcal{C}_S$, there does not exist a Turing-computable function $\Lambda: \mathcal{I}_{n - t} \to \mathcal{V}_O$ such that, for every input configuration $c \in \mathcal{I}_{n - t}$, $\Lambda(c) \in \bigcap\limits_{c' \in \mathit{sim}(c)} \mathit{val}(c')$.

Fix any input configuration $c \in \mathcal{I}_{n - t}$ for which $\Lambda(c)$ is not defined or not Turing-computable.
Let $\mathcal{E}_c \in \mathit{execs}(\mathcal{A})$ be an infinite canonical execution such that (1) $\mathsf{input\_conf}(\mathcal{E}_c) = c$, (2) the system is synchronous from the very beginning ($\text{GST} = 0$), and (3) the message delays are exactly $\delta$.
In other words, $\mathcal{E}_c$ unfolds in a ``lock-step'' manner.
As $\mathcal{A}$ satisfies \emph{Termination} and $\mathcal{E}_c$ is an infinite execution, some value $v_c \in \mathcal{V}_O$ is decided by a correct process in $\mathcal{E}_c$; observe that the prefix of $\mathcal{E}_c$ in which $v_c$ is decided is finite as processes take only finitely many steps in finite time.
By canonical similarity (\Cref{lemma:helper_main}), $v_c \in \bigcap\limits_{c' \in \similar{c}} \mathit{val}(c')$.
Hence, $\Lambda(c)$ is defined ($\Lambda(c) = v_c$) and Turing-computable ($\mathcal{A}$ computes it).
Therefore, we reach a contradiction with the fact that $\Lambda(c)$ is not defined or not Turing-computable, which concludes the proof.
\end{proof}

Notice that, for proving the necessity of $\mathcal{C}_S$ (\Cref{theorem:similarity_condition_necessary}), we do not rely on the $n > 3t$ assumption.
Thus, $\mathcal{C}_S$ is necessary for \emph{all} solvable validity properties (irrespectively of the resilience threshold).
However, as proven in \Cref{subsection:triviality_main}, $\mathcal{C}_S$ is not sufficient when $n \leq 3t$: e.g., \emph{Weak Validity} satisfies $\mathcal{C}_S$, but it is unsolvable with $n \leq 3t$~\cite{DLS88,PeaseSL80}.
(Observe that any solvable validity property with $n \leq 3t$ satisfies $\mathcal{C}_S$.)

\section{Lower Bound \& General Algorithm}
\label{section:lower_upper_bounds}

First, we show that any non-trivial and solvable validity property requires $\Omega(t^2)$ messages to be exchanged (\Cref{subsection:lower_bound}).
Then, we present \general, a general algorithm that, if $n > 3t$, solves consensus  with any validity property which satisfies $\mathcal{C}_S$ (\Cref{subsection:general_algorithm}).
Thus, \general proves the sufficiency of $\mathcal{C}_S$ when $n > 3t$.

\subsection{Lower Bound on Message Complexity}
\label{subsection:lower_bound}

In this subsection, we prove the following theorem:

\begin{theorem} \label{theorem:lower_bound_main}
If an algorithm solves consensus  with a non-trivial validity property, the message complexity of the algorithm is $\Omega(t^2)$.
\end{theorem}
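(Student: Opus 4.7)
My plan is to lift the classical Dolev--Reischuk partition-and-silence argument to any non-trivial, solvable validity property $\mathit{val}$, using canonical similarity (\Cref{lemma:helper_main}) to supply the analogue of the ``all-$0$ vs.\ all-$1$'' failure-free executions on which the original proof hinges. The counting core of Dolev--Reischuk then carries over almost verbatim.

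\emph{Step 1 (extracting a disagreement).} Fix any algorithm $\mathcal{A}$ solving consensus with $\mathit{val}$, and for each $c \in \mathcal{I}_n$ fix one infinite, fully synchronous, failure-free execution $\mathcal{E}_c \in \mathit{execs}(\mathcal{A})$ with $\mathsf{input\_conf}(\mathcal{E}_c) = c$; by \emph{Termination} this execution decides some value $v_c$, and $\mathcal{E}_c$ is canonical since no process is faulty. I claim that there exist $c^0, c^1 \in \mathcal{I}_n$ with $v_{c^0} \neq v_{c^1}$. Otherwise all $v_c$ equal some common value $v^\star$; for any $c \in \mathcal{I}$ extend it to $c' \in \mathcal{I}_n$ by assigning arbitrary proposals to the missing processes, so $c' \sim c$, and \Cref{lemma:helper_main} applied to $\mathcal{E}_{c'}$ forces $v^\star \in \mathit{val}(c)$ for every $c \in \mathcal{I}$, contradicting non-triviality.

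\emph{Step 2 (counting).} Assume, toward contradiction, that $\mathcal{A}$ sends fewer than $(t/2)^2$ messages after GST in every execution. A standard double-pigeonhole on $\mathcal{E}^0 := \mathcal{E}_{c^0}$ and $\mathcal{E}^1 := \mathcal{E}_{c^1}$ produces disjoint ``quiet'' sets $A, B \subseteq \Pi$ of size $t/2$ each, with the number of messages crossing between $A$ and $\Pi \setminus A$ in $\mathcal{E}^0$ and between $B$ and $\Pi \setminus B$ in $\mathcal{E}^1$ strictly less than $t/2$. This is the step where the quadratic bound is forced: the $(t/2)^2$ budget is exactly what one needs to silence two groups of size $t/2$ simultaneously.

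\emph{Step 3 (indistinguishability and main obstacle).} Following the Dolev--Reischuk schedule, I construct a hybrid execution $\mathcal{E}^\star$ in which the $t$ processes in $A \cup B$ are faulty (fitting the fault budget): processes in $A$ replay their $\mathcal{E}^1$ role toward $\Pi \setminus (A \cup B)$, processes in $B$ replay their $\mathcal{E}^0$ role, the few cross-messages are dropped, inter-partition traffic is postponed past all decisions, and GST is set afterwards. A correct process on the $\mathcal{E}^0$-side is indistinguishable from $\mathcal{E}^0$ and decides $v_{c^0}$, while a correct process on the $\mathcal{E}^1$-side is indistinguishable from $\mathcal{E}^1$ and decides $v_{c^1}$, violating \emph{Agreement}. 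The main obstacle is the combinatorial bookkeeping in Step 2: one must choose $A$ and $B$ so that silencing their union hides the cross-traffic of \emph{both} $\mathcal{E}^0$ and $\mathcal{E}^1$ simultaneously. Once this is in place as in Abraham et al., validity enters only through Step 1 via \Cref{lemma:helper_main}, so the argument applies uniformly to every non-trivial solvable validity property, extending the classical Dolev--Reischuk bound beyond Strong Validity.
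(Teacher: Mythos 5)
Your Step 1 is sound and is a legitimate use of \Cref{lemma:helper_main}, but the counting and merging steps do not implement the Dolev--Reischuk argument correctly, and the proof has a genuine gap there. First, the pigeonhole in Step 2 does not give what you claim: a budget of $(t/2)^2$ messages yields a \emph{single} process receiving at most $t/2$ messages (if every one of $t/2$ designated processes received more than $t/2$, the total would exceed the budget); it does not yield a set of $t/2$ processes whose \emph{total} crossing traffic is below $t/2$. With the traffic spread evenly, any set of $t/2$ processes can receive on the order of $(t/2)^2$ messages, so your ``quiet sets'' $A$ and $B$ need not exist. Second, the hybrid $\mathcal{E}^\star$ in Step 3 is not a coherent execution: if all of $A \cup B$ is faulty, the correct processes are exactly $\Pi \setminus (A \cup B)$, and they all exchange messages with one another, so there is no ``$\mathcal{E}^0$-side'' and ``$\mathcal{E}^1$-side'' among them. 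Splitting the \emph{correct} processes into two mutually delayed camps that each still decide is precisely the partitioning argument, and it only produces disagreement when $n \leq 3t$ --- whereas non-trivial validity forces $n > 3t$ (\Cref{theorem:triviality_main}), so this route is closed in exactly the regime where the theorem has content.

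The missing idea is to isolate a \emph{single} process rather than a group. The paper's proof runs one synchronous execution $\mathcal{E}_{\mathit{base}}$ in which a group $B$ of $\lceil t/2 \rceil$ faulty processes behaves correctly except for ignoring its first $\lceil t/2 \rceil$ received messages; the pigeonhole then yields one $Q \in B$ that effectively acts on \emph{no} received messages, and by re-labelling $Q$ as correct (and its few senders as faulty) one extracts a correct local behavior $\beta_Q$ in which $Q$ decides some concrete value $v_Q$ in isolation (\Cref{lemma:behavior_without_receptions}). Non-triviality is then applied \emph{to that specific value}: there is a $c$ with $v_Q \notin \mathit{val}(c)$, and canonical similarity produces an execution $\mathcal{E}_v$ in which $Q$ is silent and some $v \neq v_Q$ is decided (\Cref{lemma:conflicting_execution}). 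Merging $\beta_Q$ (with $Q$'s messages delayed past both decision times) into $\mathcal{E}_v$ violates \emph{Agreement}. Your Step 1, which only establishes that two failure-free executions decide differently, is not enough: you need a contradiction against the particular value an isolated process decides on empty input, and that is where non-triviality and \Cref{lemma:helper_main} must be invoked.
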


\Cref{theorem:lower_bound_main} extends the seminal Dolev-Reischuk bound~\cite{dolev1985bounds}, proven only for consensus with \emph{Strong Validity}, to all non-trivial consensus variants.
To prove \Cref{theorem:lower_bound_main}, we intertwine the idea of the original proof~\cite{dolev1985bounds} with canonical similarity (\Cref{lemma:helper_main}).


\paragraph{Proof of \Cref{theorem:lower_bound_main}.}
In our proof, we show that any algorithm which solves Byzantine consensus with a non-trivial validity property has a synchronous execution in which correct processes send 
more than $(\frac{t}{2})^2$ messages.
Hence, throughout the entire subsection, we fix a non-trivial and solvable validity property $\mathit{val}$.
Moreover, we fix $\mathcal{A}$, an algorithm which solves Byzantine consensus with $\mathit{val}$.
As $\mathit{val}$ is a non-trivial validity property, $n > 3t$ (\Cref{subsection:triviality_main}).

Next, we define a specific infinite execution $\mathcal{E}_{\mathit{base}} \in \mathit{execs}(\mathcal{A})$:
\begin{compactenum}
     \item $\text{GST} = 0$.
     That is, the system is synchronous throughout the entire execution.
     
     \item All processes are separated into two disjoint groups: (1) group $A$, with $|A| = n - \lceil \frac{t}{2} \rceil$, and (2) group $B$, with $|B| = \lceil \frac{t}{2} \rceil$.
     
     \item All processes in the group $A$ are correct, whereas all processes in the group $B$ are faulty.
     
     \item We fix any value $v^* \in \mathcal{V}_I$.
     For every correct process $P_A \in A$, the proposal of $P_A$ in $\mathcal{E}_{\mathit{base}}$ is $v^*$.
     
     \item For every faulty process $P_B \in B$, $P_B$ behaves correctly in $\mathcal{E}_{\mathit{base}}$ with its proposal being $v^*$, except that (1) $P_B$ ignores the first $\lceil \frac{t}{2} \rceil$ messages received from other processes, and (2) $P_B$ omits sending messages to other processes in $B$. 
\end{compactenum}
To prove \Cref{theorem:lower_bound_main}, it suffices to show that the message complexity of $\mathcal{E}_{\mathit{base}}$ is greater than $ (\lceil \frac{t}{2} \rceil)^2$.
By contradiction, let the correct processes (processes in $A$) send $\leq (\lceil \frac{t}{2} \rceil)^2$ messages in $\mathcal{E}_{\mathit{base}}$.

The first step of our proof shows that, given that correct processes send $\leq (\lceil \frac{t}{2} \rceil)^2$ messages in $\mathcal{E}_{\mathit{base}}$, there must exist a process $Q \in B$ which can correctly decide some value $v_Q \in \mathcal{V}_O$ without receiving \emph{any} message from any other process.
We prove this claim using the pigeonhole principle.

\begin{lemma} \label{lemma:behavior_without_receptions}
There exist a value $v_Q \in \mathcal{V}_O$ and a process $Q \in B$ such that $Q$ has a correct 
local behavior $\beta_Q$ in which (1) $Q$ decides $v_Q$, and (2) $Q$ receives no messages from other processes.
\end{lemma}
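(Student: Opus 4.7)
The plan is to combine a pigeonhole counting argument that isolates a specific $Q \in B$ with a termination-based argument that extracts a deciding trace. First, I would upper-bound the incoming traffic to $B$ in $\mathcal{E}_{\mathit{base}}$: by construction, $B$-processes send nothing to one another, so every message received by a member of $B$ originates in $A$, and the counter-hypothesis caps the total at $(\lceil t/2\rceil)^2$. Distributing at most $(\lceil t/2\rceil)^2$ arrivals over the $|B|=\lceil t/2\rceil$ recipients by pigeonhole, some $Q \in B$ receives at most $\lceil t/2\rceil$ messages in $\mathcal{E}_{\mathit{base}}$.

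Next, I would invoke the Byzantine strategy baked into $\mathcal{E}_{\mathit{base}}$: every $P_B \in B$ ignores its first $\lceil t/2\rceil$ incoming messages. Combined with the previous bound, $Q$ processes \emph{no} message at all in $\mathcal{E}_{\mathit{base}}$. Because $\mathcal{A}$'s processes are deterministic state machines and $Q$'s proposal is $v^*$, $Q$'s local computational trace in $\mathcal{E}_{\mathit{base}}$ coincides with the unique trace a correct process produces on input $(v^*, \text{empty message stream})$. Call this trace $\beta_Q$.

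To verify that $\beta_Q$ eventually decides, I would construct an auxiliary execution $\tilde{\mathcal{E}}$: let the correct set be $A \cup \{Q\}$ (all proposing $v^*$), let $B \setminus \{Q\}$ be faulty and silent (at most $\lceil t/2\rceil - 1 \leq t$ processes, within the fault budget), and let the adversary delay every message addressed to $Q$ past a sufficiently late GST. Since at most $t$ processes are faulty, \emph{Termination} forces the correct process $Q$ to decide some $v_Q \in \mathcal{V}_O$ at a finite local time $T_Q$; because no message has reached $Q$ by $T_Q$, determinism implies that $Q$'s trace in $\tilde{\mathcal{E}}$ up to that point coincides with $\beta_Q$, now carrying a $\mathsf{decide}(v_Q)$ event.

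The chief subtlety (and the hardest part to make airtight) is that the scheduling of $\tilde{\mathcal{E}}$ is mildly self-referential — GST should follow $T_Q$, yet $T_Q$ depends on the schedule. I would resolve this by letting the adversary play adaptively: withhold every message to $Q$ while watching $Q$'s deterministic evolution, and commit GST (then release the delayed messages within $\delta$ of it) only once $Q$ has decided. \emph{Termination} applied to this online strategy guarantees a finite $T_Q$, yielding a legitimate partially synchronous execution in which $Q$ is correct, decides $v_Q$, and receives no message, as required.
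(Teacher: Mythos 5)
Your first half---bounding the traffic into $B$, applying the pigeonhole principle to extract a $Q \in B$ with at most $\lceil t/2 \rceil$ incoming messages, and observing that the ``ignore the first $\lceil t/2\rceil$ messages'' strategy makes $Q$'s local trace coincide with the deterministic no-message trace $\beta_Q$---matches the paper. The gap is in how you certify that $\beta_Q$ contains a decision. Your auxiliary execution $\tilde{\mathcal{E}}$ keeps all of $A$ correct and merely \emph{delays} every message addressed to $Q$ until after a GST placed after $Q$'s decision. In the partially synchronous model this is only a legal execution if $Q$ does decide on the empty stream: GST must be finite, links are reliable, and a correct process must perform every send its algorithm prescribes, so any message a correct member of $A$ sends to $Q$ must eventually be delivered. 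If $Q$'s no-message trace never decides (e.g., for an algorithm in which a process waits for $n-t$ messages before doing anything), your adaptive adversary never gets to commit a GST, the limit object is not in $\mathit{execs}(\mathcal{A})$, and \emph{Termination} gives you nothing. The circularity you flag is therefore not ``mildly'' self-referential---it is exactly where the argument fails. A telling symptom is that this step never uses the pigeonhole bound $|S_Q| \leq \lceil t/2\rceil$: the same reasoning would ``prove'' that \emph{every} process decides on the empty message stream, which is false in general.

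The paper closes this by spending the fault budget instead of the schedule: it constructs $\mathcal{E}'_{\mathit{base}}$ identical to $\mathcal{E}_{\mathit{base}}$ (so GST stays $0$) but makes faulty exactly the processes in $(B\setminus\{Q\}) \cup S_Q$, where $S_Q$ is the set of processes that send to $Q$ in $\mathcal{E}_{\mathit{base}}$; these now \emph{omit} their messages to $Q$ rather than having them delayed. This is affordable precisely because of the pigeonhole: $|S_Q| \le \lceil t/2\rceil$ and $|B \setminus \{Q\}| = \lceil t/2\rceil - 1$, so at most $t$ processes are faulty. Since $Q$ ignored those messages anyway, no other process's view changes, $Q$ (now correct) receives nothing forever in a genuine execution, and \emph{Termination} forces it to decide some $v_Q$. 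Replacing your delay-based $\tilde{\mathcal{E}}$ with this corruption-based construction repairs the proof.
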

\begin{proof}
By assumption, correct processes (i.e., processes in the group $A$) send $\leq (\lceil \frac{t}{2} \rceil)^2$ messages in $\mathcal{E}_{\mathit{base}}$.
Therefore, due to the pigeonhole principle, there exists a process $Q \in B$ which receives at most $\lceil \frac{t}{2} \rceil$ messages (from other processes) in $\mathcal{E}_{\mathit{base}}$.
Recall that $Q$ behaves correctly in $\mathcal{E}_{\mathit{base}}$ with its proposal being $v^* \in \mathcal{V}_I$, except that (1) $Q$ ignores the first $\lceil \frac{t}{2} \rceil$ messages received from other processes, and (2) $Q$ does not send any messages to other processes in the group $B$.
We denote by $S_Q$ the set of processes, not including $Q$, which send messages to $Q$ in $\mathcal{E}_{\mathit{base}}$; $|S_Q| \leq \lceil \frac{t}{2} \rceil$.

Next, we construct an infinite execution $\mathcal{E}_{\mathit{base}}'$.
Execution $\mathcal{E}_{\mathit{base}}'$ is identical to $\mathcal{E}_{\mathit{base}}$, except that:
\begin{compactenum}
    
    \item Processes in $(A \cup \{Q\}) \setminus{S_Q}$ are correct; other processes are faulty.
    That is, we make $Q$ correct in $\mathcal{E}'_{\mathit{base}}$, and we make all processes in $S_Q$ faulty in $\mathcal{E}'_{\mathit{base}}$.
    
    \item Processes in $B \setminus{\{Q\}}$ behave exactly as in $\mathcal{E}_{\mathit{base}}$.
    Moreover, processes in $S_Q$ behave exactly as in $\mathcal{E}_{\mathit{base}}$, except that they \emph{do not} send any message to $Q$.
\end{compactenum}
Due to the construction of $\mathcal{E}_{\mathit{base}}'$, process $Q$ does not receive any message (from any other process) in $\mathcal{E}_{\mathit{base}}'$.
As $Q$ is correct in $\mathcal{E}_{\mathit{base}}'$ and $\mathcal{A}$ satisfies \emph{Termination}, $Q$ decides some value $v_Q \in \mathcal{V}_O$ in $\mathcal{E}_{\mathit{base}}'$.
Thus, $Q$ has a correct local behavior $\beta_Q$ in which it decides $v_Q \in \mathcal{V}_O$ without having received messages from other processes. 
\end{proof}


The previous proof concerns deterministic protocols and uses a deterministic adversarial strategy.
We invite the reader to~\Cref{section:nondeterminism_appendix} for a remark on non-determinism.

In the second step of our proof, we show that there exists an infinite execution in which (1) $Q$ is faulty and silent, and (2) correct processes decide some value $v \neq v_Q$.

\begin{lemma} \label{lemma:conflicting_execution}
There exists an infinite execution $\mathcal{E}_{v}$ such that (1) $Q$ is faulty and silent in $\mathcal{E}_{v}$, and (2) a value $v \neq v_Q$ is decided by a correct process.
\end{lemma}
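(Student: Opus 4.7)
The plan is to exploit the non-triviality of $\mathit{val}$ together with canonical similarity (\Cref{lemma:helper_main}) to build a canonical execution in which $Q$ is excluded from the set of correct processes, while the decided value is forced away from $v_Q$.

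First, non-triviality of $\mathit{val}$ furnishes an input configuration $\hat{c} \in \mathcal{I}$ with $v_Q \notin \mathit{val}(\hat{c})$. I then produce a companion configuration $c^* \in \mathcal{I}$ with $Q \notin \processes{c^*}$ and $c^* \sim \hat{c}$ by a short case split: (i) if $Q \notin \processes{\hat{c}}$, take $c^* = \hat{c}$; (ii) if $Q \in \processes{\hat{c}}$ and $|\hat{c}| > n - t$, delete $Q$'s entry from $\hat{c}$; (iii) if $Q \in \processes{\hat{c}}$ and $|\hat{c}| = n - t$, swap $Q$'s entry for $(P, v)$, where $P$ is any process in $\Pi \setminus \processes{\hat{c}}$ (such a $P$ exists since this set has size $t \geq 1$) and $v$ is arbitrary. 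In every case $|c^*| \geq n - t$, $Q \notin \processes{c^*}$, and the processes in $\processes{c^*} \cap \processes{\hat{c}}$ carry identical proposals in both configurations; a quick count using $n > 3t$ (which holds by non-triviality, see \Cref{subsection:triviality_main}) shows this intersection is non-empty, so $c^* \sim \hat{c}$.

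Next, I let $\mathcal{E}_v$ be any infinite canonical execution of $\mathcal{A}$ with $\mathsf{input\_conf}(\mathcal{E}_v) = c^*$. Because $Q \notin \processes{c^*}$, process $Q$ is faulty and silent throughout $\mathcal{E}_v$. By \emph{Termination}, some value $v \in \mathcal{V}_O$ is decided by a correct process in $\mathcal{E}_v$, and canonical similarity places $v \in \bigcap_{c' \in \mathit{sim}(c^*)} \mathit{val}(c')$. Since $\hat{c} \in \mathit{sim}(c^*)$ by construction, this yields $v \in \mathit{val}(\hat{c})$, and therefore $v \neq v_Q$ by the defining property of $\hat{c}$, giving the desired $\mathcal{E}_v$.

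The main obstacle I anticipate is case (iii): when $\hat{c}$ has exactly $n - t$ entries and contains $Q$, simply removing $Q$ would violate the size lower bound on input configurations. The swap-out construction sidesteps this without harming similarity, thanks to the headroom $|\Pi \setminus \processes{\hat{c}}| = t \geq 1$ inherited from $n > 3t$.
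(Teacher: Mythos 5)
Your proof is correct and follows essentially the same route as the paper's: obtain a configuration witnessing $v_Q \notin \mathit{val}(\cdot)$ from non-triviality, modify it (removing $Q$, padding with a fresh process if the size would drop below $n-t$) to get a similar configuration excluding $Q$, and apply \emph{Termination} plus canonical similarity (\Cref{lemma:helper_main}) to force the decided value away from $v_Q$. Your three-way case split is just a slightly finer-grained version of the paper's two cases, and the non-emptiness count matches the paper's $n - t - 1 > 0$ argument.
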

\begin{proof}
As $\mathit{val}$ is a non-trivial validity property, there exists an input configuration $c_{\notni v_Q} \in \mathcal{I}$ such that $v_Q \notin \mathit{val}(c_{\notni v_Q})$; recall that $v_Q$ is the value that $Q$ can correctly decide without having received any message from any other process (\Cref{lemma:behavior_without_receptions}).
We consider two possible cases:
\begin{compactitem}
     \item Let $Q \notin \processes{c_{\notni v_Q}}$.
     Thus, $\mathcal{E}_v$ is any infinite canonical execution which corresponds to $c_{\notni v_Q}$.
     As $v_Q \notin \mathit{val}(c_{\notni v_Q})$, the value decided in $\mathcal{E}_v$ must be different from $v_Q$ (as $\mathcal{A}$ satisfies $\mathit{val}$).
     
     \item Let $Q \in \processes{c_{\notni v_Q}}$.
     We construct an input configuration $c_{\notni Q} \in \mathcal{I}$ such that $Q \notin \processes{c_{\notni Q}}$:
     \begin{compactenum}
        \item Let $c_{\notni Q} \gets c_{\notni v_Q}$.

        \item We remove $(Q, \cdot)$ from $c_{\notni Q}$.
        That is, we remove $Q$'s process-proposal pair from $c_{\notni Q}$.

        \item If $|\processes{c_{\notni v_Q}}| = n - t$, we add $(Z, \text{any proposal})$ to $c_{\notni Q}$, where $Z$ is any process such that $Z \notin \processes{c_{\notni v_Q}}$; note that such a process $Z$ exists as $t > 0$.
    \end{compactenum}
    Due to the construction of $c_{\notni Q}$, $c_{\notni Q} \sim c_{\notni v_Q}$.
    Indeed, (1) $\processes{c_{\notni Q}} \cap \processes{c_{\notni v_Q}} \neq \emptyset$ (as $n - t - 1 > 0$ when $n > 3t$ and $t > 0$), and (2) for every process $P \in \processes{c_{\notni Q}} \cap \processes{c_{\notni v_Q}}$, the proposal of $P$ is identical in $c_{\notni Q}$ and $c_{\notni v_Q}$.
    
    In this case, $\mathcal{E}_v$ is any infinite canonical execution such that $\mathsf{input\_conf}(\mathcal{E}_{v}) = c_{\notni Q}$.
    As $\mathcal{A}$ satisfies \emph{Termination} and $\mathcal{E}_{v}$ is infinite, some value $v \in \mathcal{V}_O$ is decided by a correct process in $\mathcal{E}_{v}$.
    As $c_{\notni Q} \sim c_{\notni v_Q}$, $v \in \mathit{val}(c_{\notni v_Q})$ (by canonical similarity; \Cref{lemma:helper_main}).
    Finally, $v \neq v_Q$ as (1) $v \in \mathit{val}(c_{\notni v_Q})$, and (2) $v_Q \notin \mathit{val}(c_{\notni v_Q})$.
\end{compactitem}
The lemma holds as its statement is true in both possible cases.
\end{proof}

As we have shown the existence of $\mathcal{E}_{v}$ (\Cref{lemma:conflicting_execution}), we can ``merge'' $\mathcal{E}_{v}$ with the valid local behavior $\beta_Q$ in which $Q$ decides $v_Q$ without having received any message (\Cref{lemma:behavior_without_receptions}).
Hence, we can construct an execution in which $\mathcal{A}$ violates \emph{Agreement}.
Thus, correct processes must send more than $(\lceil \frac{t}{2} \rceil)^2 \in \Omega(t^2)$ messages in $\mathcal{E}_{\mathit{base}}$.

\begin{lemma} \label{lemma:final_lemma_lower_bound}
The message complexity of $\mathcal{E}_{\mathit{base}}$ is greater than $(\lceil \frac{t}{2} \rceil)^2$.
\end{lemma}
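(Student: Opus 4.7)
The plan is to derive a contradiction from the assumption that correct processes send at most $(\lceil \frac{t}{2} \rceil)^2$ messages in $\mathcal{E}_{\mathit{base}}$, by combining the two previously established lemmas into a single indistinguishability argument that forces a disagreement. Specifically, I would splice together the local behavior $\beta_Q$ from \Cref{lemma:behavior_without_receptions}, in which $Q$ correctly decides $v_Q$ without ever receiving a message, and the execution $\mathcal{E}_v$ from \Cref{lemma:conflicting_execution}, in which $Q$ is faulty and silent while some correct process decides $v \neq v_Q$.

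Concretely, I would construct an execution $\mathcal{E}^*$ as follows. Let $t_v$ be the time at which some correct process decides $v$ in $\mathcal{E}_v$, and let $t_Q$ be the time at which $Q$ decides $v_Q$ in $\beta_Q$; both are finite since processes take only finitely many steps in finite time. In $\mathcal{E}^*$, every process that is correct in $\mathcal{E}_v$ remains correct and behaves exactly as in $\mathcal{E}_v$, while $Q$ is now also correct and exhibits $\beta_Q$. All messages exchanged between $Q$ and the rest of the system are delayed until after time $\max(t_v, t_Q)$, and GST is placed after that point. This is a valid schedule in partial synchrony because $Q$ took no steps and sent no messages in $\mathcal{E}_v$ (it was silent), and $Q$ receives no messages in $\beta_Q$, so neither side has yet perceived the other.

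Because correct processes of $\mathcal{E}_v$ cannot distinguish $\mathcal{E}^*$ from $\mathcal{E}_v$ up to time $\max(t_v, t_Q)$, the same correct process decides $v$ in $\mathcal{E}^*$. Symmetrically, $Q$'s local view in $\mathcal{E}^*$ up to time $\max(t_v, t_Q)$ is identical to its view in $\beta_Q$, so $Q$ decides $v_Q$ in $\mathcal{E}^*$. Since $v \neq v_Q$, this violates \emph{Agreement}, contradicting the assumption that $\mathcal{A}$ solves consensus with $\mathit{val}$.

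The step I expect to require the most care is the bookkeeping on faulty processes in $\mathcal{E}^*$: I must confirm that making $Q$ correct does not push the number of faulty processes above $t$. This is immediate because $Q$ was faulty in $\mathcal{E}_v$ (it is excluded from the input configuration, as per \Cref{lemma:conflicting_execution}), and the faulty set in $\mathcal{E}^*$ is a subset of the faulty set in $\mathcal{E}_v$ minus $\{Q\}$; the latter already had size at most $t$. With the contradiction established, the assumption on message count fails, so the message complexity of $\mathcal{E}_{\mathit{base}}$ must exceed $(\lceil \frac{t}{2} \rceil)^2$, which is $\Omega(t^2)$ and yields \Cref{theorem:lower_bound_main}.
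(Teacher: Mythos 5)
Your proposal is correct and follows essentially the same route as the paper's proof: it merges the message-free behavior $\beta_Q$ with the execution $\mathcal{E}_v$ by delaying all communication between $Q$ and the rest until after $\max(t_Q, t_v)$ and placing GST afterward, yielding an Agreement violation. Your additional check that promoting $Q$ to correct keeps the faulty set within the bound $t$ is a detail the paper leaves implicit, but the argument is the same.
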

\begin{proof}
By \Cref{lemma:behavior_without_receptions}, there exists a local behavior $\beta_Q$ of process $Q$ in which $Q$ decides a value $v_Q$ without having received any message (from any other process).
Let $t_Q$ denote the time at which $Q$ decides $v_Q$ in $\beta_Q$.
Moreover, there exists an infinite execution $\mathcal{E}_{v}$ in which (1) $Q$ is faulty and silent, and (2) correct processes decide a value $v \neq v_Q$ (by \Cref{lemma:conflicting_execution}).
Let $t_{v}$ denote the time at which a correct process decides $v \neq v_Q$ in $\mathcal{E}_{v}$.

We now construct an execution $\mathcal{E}$ in the following way:
\begin{compactenum}
    \item Processes in $\mathit{Corr}_{\mathcal{A}}(\mathcal{E}_{v}) \cup \{Q\}$ are correct in $\mathcal{E}$.
    All other processes are faulty.
    
    \item All messages from and to $Q$ are delayed until after $\max(t_Q, t_{v})$.
    
    \item Process $Q$ exhibits the local behavior $\beta_Q$.
    
    \item Until $\max(t_Q, t_{v})$, no process in $\mathit{Corr}_{\mathcal{A}}(\mathcal{E}_{v})$ can distinguish $\mathcal{E}$ from $\mathcal{E}_{v}$.
    
    \item GST is set to after $\max(t_Q, t_{v})$ (and after all correct processes from the $\mathit{Corr}_{\mathcal{A}}(\mathcal{E}_v) \cup \{Q\}$ set have taken a step).
\end{compactenum}
As no process in $\mathit{Corr}_{\mathcal{A}}(\mathcal{E}_{v})$ can distinguish $\mathcal{E}$ from $\mathcal{E}_{v}$ until time $\max(t_Q, t_{v})$, $v \neq v_Q$ is decided by a correct process in $\mathcal{E}$.
Moreover, $Q$ decides $v_Q$ in $\mathcal{E}$ as it exhibits $\beta_Q$ (step 3 of the construction).
Thus, \emph{Agreement} is violated in $\mathcal{E}$, which contradicts the fact that $\mathcal{A}$ satisfies \emph{Agreement}.
Hence, the starting assumption is not correct: in $\mathcal{E}_{\mathit{base}}$, correct processes send more than $(\lceil \frac{t}{2} \rceil)^2$ messages.
\end{proof}

The following subsection shows that the quadratic bound on message complexity is tight with $t \in \Omega(n)$: \general exchanges $O(n^2)$ messages when relying on a PKI.
We underline that our lower bound holds even for algorithms which employ digital signatures.
Achieving the optimal quadratic message complexity without relying on digital signatures remains an important open question.





\subsection{General Algorithm \general: Similarity Condition is Sufficient if $n > 3t$} \label{subsection:general_algorithm}

In this subsection, we show that $\mathcal{C}_S$ is sufficient for a validity property to be solvable when $n > 3t$.
Furthermore, we prove that, assuming a PKI, the quadratic lower bound (\Cref{subsection:lower_bound}) is tight with $t \in \Omega(n)$.
In brief, we prove the following theorem:

\begin{theorem} \label{theorem:general}
Let $n > 3t$, and let $\mathit{val}$ be any validity property which satisfies $\mathcal{C}_S$.
Then, $\mathit{val}$ is solvable.
Moreover, assuming a public-key infrastructure, there exists an algorithm which solves Byzantine consensus with $\mathit{val}$, and has $O(n^2)$ message complexity.
\end{theorem}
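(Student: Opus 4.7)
\textbf{Proof plan for \Cref{theorem:general}.} The plan is to build \general as a thin wrapper around \emph{vector consensus}, using the Turing-computable function $\Lambda$ supplied by $\mathcal{C}_S$ as a deterministic post-processing step. Concretely: every correct process invokes vector consensus with its own proposal and, upon receiving an agreed vector $\mathit{vec}$ of $n-t$ process-proposal pairs (which I will view as an input configuration in $\mathcal{I}_{n-t}$), it locally computes and decides $\Lambda(\mathit{vec})$. Since $\Lambda$ is finitely computable by $\mathcal{C}_S$ and vector consensus terminates, \emph{Termination} follows immediately; \emph{Agreement} follows because all correct processes obtain the same $\mathit{vec}$ and $\Lambda$ is a function, hence they all decide the same value.

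The critical step is \emph{Validity}. Let $c = \mathsf{input\_conf}(\mathcal{E})$ be the actual input configuration of an execution $\mathcal{E}$ of \general, and let $\mathit{vec}$ be the vector output by vector consensus in $\mathcal{E}$. I would first argue that $\mathit{vec} \sim c$: by the vector-consensus specification, if $\mathit{vec}$ contains a pair $(P_i, v)$ with $P_i$ correct then $P_i$ actually proposed $v$, so for every $P_i \in \processes{\mathit{vec}} \cap \processes{c}$ we have $\mathit{vec}[i] = c[i]$. Moreover $|\processes{\mathit{vec}} \cap \processes{c}| \geq n - 2t \geq 1$ (using $|\processes{\mathit{vec}}|=n-t$, $|\processes{c}| \geq n-t$, and $n > 3t$), giving the required non-empty intersection. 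Hence $c \in \mathit{sim}(\mathit{vec})$, and by $\mathcal{C}_S$ the decided value $\Lambda(\mathit{vec}) \in \bigcap_{c' \in \mathit{sim}(\mathit{vec})} \mathit{val}(c') \subseteq \mathit{val}(c)$. This establishes solvability in the $n > 3t$ regime and closes the gap between the necessary condition (\Cref{theorem:similarity_condition_necessary}) and the sufficient one.

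For the $O(n^2)$ message bound under a PKI, the plan is to instantiate vector consensus itself with quadratic message complexity. I would proceed in two layers. First, each correct process signs and disseminates its proposal (e.g., via a PKI-based consistent or reliable broadcast) so that a process can collect an authenticated set of $n-t$ process-proposal pairs and bundle them into a candidate vector together with the accompanying signatures, which act as a proof of validity. Second, I invoke a closed-box partially synchronous Byzantine consensus algorithm with external validity and $O(n^2)$ message complexity (e.g., the Quad-style protocol cited in the paper) where the external predicate simply checks that the proposed vector is a well-formed $\mathcal{I}_{n-t}$ configuration with valid signatures; output is the agreed vector $\mathit{vec}$. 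Composition preserves the $O(n^2)$ bound since both layers are quadratic, and $\Lambda$ adds no communication. Combined with \Cref{theorem:lower_bound_main}, this yields the announced $\Theta(n^2)$ tightness whenever $t \in \Omega(n)$.

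The main obstacle I anticipate is the validity argument for vector consensus under Byzantine behavior: one must ensure that $\mathit{vec}$ can only contain truthful proposals for correct processes, which is exactly why authenticated dissemination of signed proposals is essential in the efficient variant (for an unauthenticated version one would instead rely on reliable broadcast plus binary agreements per process, at the cost of higher communication). Everything else is routine verification that the properties of the two black boxes (vector consensus and $\Lambda$) compose to give \emph{Termination}, \emph{Agreement}, and $\mathit{val}$ for \general.
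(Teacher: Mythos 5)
Your proposal is correct and follows essentially the same route as the paper: \general is vector consensus followed by a local application of $\Lambda$, with \emph{Termination}/\emph{Agreement} inherited from vector consensus, validity obtained by showing the decided vector is similar to the actual input configuration via \emph{Vector Validity}, and the $O(n^2)$ bound obtained by instantiating vector consensus with signed \textsc{proposal} messages fed as proofs into a \textsc{Quad}-style quadratic consensus. Your explicit count $|\processes{\mathit{vec}} \cap \processes{c}| \geq n - 2t \geq 1$ is a small but welcome addition that the paper leaves implicit.
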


To prove \Cref{theorem:general}, we present \general, an algorithm which solves the Byzantine consensus problem with \emph{any} validity property satisfying $\mathcal{C}_S$, given that $n > 3t$.
In other words, \general solves consensus with any solvable and non-trivial validity property.
Notably, assuming a PKI, \general achieves $O(n^2)$ message complexity, making it optimal (when $t \in \Omega(n)$) according to our quadratic lower bound.

To construct \general, we rely on vector consensus~\cite{neves2005solving,vaidya2013byzantine,Doudou1998,correia2006consensus} (see \Cref{subsubsection:interactive_consistency}), a problem which requires correct processes to agree on the proposals of $n - t$ processes.
Specifically, when a correct process decides a vector $\mathit{vec}$ of $n - t$ proposals (from vector consensus), it decides from \general the common admissible value for all input configurations similar to $\mathit{vec}$, i.e., the process decides $\Lambda(\mathit{vec})$.
Note that the idea of solving consensus from vector consensus is not novel~\cite{Ben-Or94,CGL18,mostefaoui2000binary}.
For some validity properties it is even natural, such as \emph{Strong Validity}  (choose the most common value) or \emph{Weak Validity} (choose any value).
However, thanks to the necessity of $\mathcal{C}_S$ (\Cref{subsection:necessary_condition}), \emph{any} solvable  consensus variant can reuse this simple algorithmic design.

In this subsection, we first recall vector consensus (\Cref{subsubsection:interactive_consistency}).
Then, we utilize vector consensus to construct \general (\Cref{subsubsection:Complete}).
Throughout the entire subsection, $n > 3t$.

\subsubsection{Vector Consensus.} \label{subsubsection:interactive_consistency}
In essence, vector consensus allows each correct process to infer the proposals of $n - t$ (correct or faulty) processes.
Formally, correct processes agree on input configurations (of vector consensus) with exactly $n - t$ process-proposal pairs: $\mathcal{V}_O = \mathcal{I}_{n - t}$. 
Let us precisely define \emph{Vector Validity}, the validity property of vector consensus:
\begin{compactitem}
     \item \emph{Vector Validity:} Let a correct process decide $\mathit{vector} \in \mathcal{V}_O$, which contains exactly $n - t$ process-proposal pairs, such that (1) $(P, v)$ belongs to $\mathit{vector}$, for some process $P \in \Pi$ and some value $v \in \mathcal{V}_I$, and (2) $P$ is a correct process.
     Then, $P$ proposed $v$ to vector consensus.
\end{compactitem}
Intuitively, \emph{Vector Validity} states that, if a correct process ``concludes'' that a value $v$ was proposed by a process $P$ and $P$ is correct, then $P$'s proposal was indeed $v$.

We provide two implementations of vector consensus: (1) a non-authenticated implementation (without any cryptographic primitives), and (2) an authenticated implementation (with digital signatures).
We give the pseudocode of the non-authenticated variant in \Cref{subsection:nonauthenticated_vector}.
The pseudocode of the authenticated variant is presented in \Cref{algorithm:interactive_consistency}.
This variant relies on \textsc{Quad}, a Byzantine consensus algorithm recently introduced in~\cite{civit2022byzantine}; we briefly discuss \textsc{Quad} below.


\paragraph{\textsc{Quad}.}
In essence, \textsc{Quad} is a partially-synchronous, ``leader-based'' Byzantine consensus algorithm, which achieves $O(n^2)$ message complexity.
Internally, \textsc{Quad} relies on a PKI.\footnote{In fact, \textsc{Quad} relies on a threshold signature scheme~\cite{Shoup00}, and not on a PKI. However, by inserting digital signatures in place of threshold signatures, \textsc{Quad} is modified to accommodate for a PKI only while preserving its quadratic message complexity.}
Formally, \textsc{Quad} is concerned with two sets: (1) $\mathcal{V}_{\textsc{Quad}}$, a set of values, and (2) $\mathcal{P}_{\textsc{Quad}}$, a set of proofs.
In \textsc{Quad}, processes propose and decide value-proof pairs.
There exists a function $\mathsf{verify}: \mathcal{V}_{\textsc{Quad}} \times \mathcal{P}_{\textsc{Quad}} \to \{ \mathit{true}, \mathit{false} \}$.
Importantly, $\mathcal{P}_{\textsc{Quad}}$ is not known a-priori: it is only assumed that, if a correct process proposes a pair $(v \in \mathcal{V}_{\textsc{Quad}}, \Sigma \in \mathcal{P}_{\textsc{Quad}})$, then $\mathsf{verify}(v, \Sigma) = \mathit{true}$.
\textsc{Quad} guarantees the following: if a correct process decides a pair $(v, \Sigma)$, then $\mathsf{verify}(v, \Sigma) = \mathit{true}$.
In other words, correct processes decide only valid value-proof pairs.
(See~\cite{civit2022byzantine} for the full details on \textsc{Quad}.)


In our authenticated implementation of vector consensus
(\Cref{algorithm:interactive_consistency}), we rely on a specific instance of \textsc{Quad} where (1) $\mathcal{V}_{\textsc{Quad}} = \mathcal{I}_{n - t}$ (processes propose to \textsc{Quad} the input configurations of vector consensus), and (2) $\mathcal{P}_{\textsc{Quad}}$ is a set of $n - t$ \textsc{proposal} messages (sent by processes in vector consensus).
Finally, given an input configuration $\mathit{vector} \in \mathcal{V}_{\textsc{Quad}}$ and a set of messages $\Sigma \in \mathcal{P}_{\textsc{Quad}}$, $\mathsf{verify}(\mathit{vector}, \Sigma) = \mathit{true}$ if and only if, for every process-proposal pair $(P_j, v_j)$ which belongs to $\mathit{vector}$, $\langle \textsc{proposal}, v_j \rangle_{\sigma_j} \in \Sigma$ (i.e., every process-proposal pair of $\mathit{vector}$ is accompanied by a properly signed \textsc{proposal} message).

\paragraph{Description of authenticated vector consensus (\Cref{algorithm:interactive_consistency}).}
When a correct process $P_i$ proposes a value $v \in \mathcal{V}_I$ to vector consensus (line~\ref{line:propose}), the process broadcasts a signed \textsc{proposal} message (line~\ref{line:broadcast_proposal}).
Once $P_i$ receives $n - t$ \textsc{proposal} messages (line~\ref{line:received_n_t_proposals}), $P_i$ constructs an input configuration $\mathit{vector}$ (line~\ref{line:construct_input_configuration}), and a proof $\Sigma$ (line~\ref{line:construct_proof}) from the received \textsc{proposal} messages.
Moreover, $P_i$ proposes $(\mathit{vector}, \Sigma)$ to \textsc{Quad} (line~\ref{line:propose_quad}).
Finally, when $P_i$ decides a pair $(\mathit{vector}', \Sigma')$ from \textsc{Quad} (line~\ref{line:decide_quad}), $P_i$ decides $\mathit{vector}'$ from vector consensus (line~\ref{line:decide}).

\begin{algorithm}
\caption{Authenticated Vector Consensus: Pseudocode (for process $P_i$)}
\label{algorithm:interactive_consistency}
\footnotesize
\begin{algorithmic} [1]
\State \textbf{Uses:}
\State \hskip2em Best-Effort Broadcast~\cite{cachin2011introduction}, \textbf{instance} $\mathit{beb}$ \BlueComment{no guarantees with faulty sender}
\State \hskip2em \textsc{Quad}~\cite{civit2022byzantine}, \textbf{instance} $\mathit{quad}$

\medskip
\State \textbf{upon} $\mathsf{init}$:
\State \hskip2em $\mathsf{Integer}$ $\mathit{received\_proposals}_i \gets 0$ \BlueComment{the number of received proposals}
\State \hskip2em $\mathsf{Map}(\mathsf{Process} \to \mathcal{V}_I)$ $\mathit{proposals}_i \gets \text{empty}$ \BlueComment{received proposals}
\State \hskip2em $\mathsf{Map}(\mathsf{Process} \to \mathsf{Message})$ $\mathit{messages}_i \gets \text{empty}$ \BlueComment{received messages}

\medskip
\State \textbf{upon} $\mathsf{propose}(v \in \mathcal{V}_I)$: \label{line:propose}
\State \hskip2em \textbf{invoke} $\mathit{beb}.\mathsf{broadcast}\big( \langle \textsc{proposal}, v \rangle_{\sigma_i} \big)$ \BlueComment{broadcast a signed proposal} \label{line:broadcast_proposal}

\medskip
\State \textbf{upon} reception of $\mathsf{Message}$ $m = \langle \textsc{proposal}, v_j \in \mathcal{V}_I \rangle_{\sigma_j}$ from process $P_j$ and $\mathit{received\_proposals}_i < n - t$:
\State \hskip2em $\mathit{received\_proposals}_i \gets \mathit{received\_proposals}_i + 1$
\State \hskip2em $\mathit{proposals}_i[P_j] \gets v_j$
\State \hskip2em $\mathit{messages}_i[P_j] \gets m$

\State \hskip2em \textbf{if} $\mathit{received\_proposals}_i = n - t$: \label{line:received_n_t_proposals} \BlueComment{able to propose to \textsc{Quad}}
\State \hskip4em $\mathsf{Input\_Configuration}$ $\mathit{vector} \gets $ constructed from $\mathit{proposals}_i$ \label{line:construct_input_configuration}
\State \hskip4em $\mathsf{Proof}$ $\Sigma \gets $ all \textsc{proposal} messages from $\mathit{messages}_i$ \label{line:construct_proof}
\State \hskip4em \textbf{invoke} $\mathit{quad}.\mathsf{propose}\big( (\mathit{vector}, \Sigma) \big)$ \label{line:propose_quad}

\medskip
\State \textbf{upon} $\mathit{quad}.\mathsf{decide}\big( (\mathsf{Input\_Configuration} \text{ } \mathit{vector}', \mathsf{Proof} \text{ } \Sigma') \big)$: \label{line:decide_quad}
\State \hskip2em \textbf{trigger} $\mathsf{decide}(\mathit{vector}')$ \label{line:decide}

\end{algorithmic}
\end{algorithm}

The message complexity of \Cref{algorithm:interactive_consistency} is $O(n^2)$ as (1) processes only broadcast \textsc{proposal} messages, and (2) the message complexity of \textsc{Quad} is $O(n^2)$.
We delegate the full proof of the correctness and complexity of \Cref{algorithm:interactive_consistency} to \Cref{subsection:authenticated_vector_proof}.

\subsubsection{\general.} \label{subsubsection:Complete}
We construct \general (\Cref{algorithm:general}) directly from vector consensus.
When a correct process $P_i$ proposes to \general (line~\ref{line:propose_general}), the proposal is forwarded to vector consensus (line~\ref{line:propose_ic}).
Once $P_i$ decides an input configuration $\mathit{vector}$ from vector consensus (line~\ref{line:decide_ic}), $P_i$ decides $\Lambda(\mathit{vector})$ (line~\ref{line:decide_general}).

Note that our implementation of \general (\Cref{algorithm:general}) is independent of the actual implementation of vector consensus.
Thus, by employing our authenticated implementation of vector consensus (\Cref{algorithm:interactive_consistency}), we obtain a general consensus algorithm with $O(n^2)$ message complexity.
On the other hand, by employing a non-authenticated implementation of vector consensus (see \Cref{subsection:nonauthenticated_vector}), we obtain a non-authenticated version of \general, which implies that any validity property which satisfies $\mathcal{C}_S$ is solvable even in a non-authenticated setting (if $n > 3t$).

\begin{algorithm} [ht]
\caption{\general: Pseudocode (for process $P_i$)}
\label{algorithm:general}
\footnotesize
\begin{algorithmic} [1]
\State \textbf{Uses:}
\State \hskip2em Vector Consensus, \textbf{instance} $\mathit{vec\_cons}$

\medskip
\State \textbf{upon} $\mathsf{propose}(v \in \mathcal{V}_I)$: \label{line:propose_general}
\State \hskip2em \textbf{invoke} $\mathit{vec\_cons}.\mathsf{propose}(v)$ \label{line:propose_ic}

\medskip
\State \textbf{upon} $\mathit{vec\_cons}.\mathsf{decide}(\mathsf{Input\_Configuration} \text{ } \mathit{vector})$: \label{line:decide_ic}
\State \hskip2em \textbf{trigger} $\mathsf{decide}\big( \Lambda(\mathit{vector}) \big)$ \label{line:decide_general}
\end{algorithmic}
\end{algorithm}


Finally, we show that \general (\Cref{algorithm:general}) is a general Byzantine consensus algorithm and that its authenticated variant achieves $O(n^2)$ message complexity, which proves \Cref{theorem:general}.

\begin{lemma} \label{theorem:general_algorithm}
Let $\mathit{val}$ be any validity property which satisfies $\mathcal{C}_S$, and let $n > 3t$.
\general solves Byzantine consensus with $\mathit{val}$.
Moreover, if \general employs \Cref{algorithm:interactive_consistency} as its vector consensus building block, the message complexity of \general is $O(n^2)$.
\end{lemma}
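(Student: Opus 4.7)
The plan is to verify the three consensus properties (\emph{Termination}, \emph{Agreement}, and the validity property $\mathit{val}$) in turn, by reducing each one to the corresponding property of the underlying vector consensus instance, and then to read off the message complexity from the authenticated vector consensus subroutine.

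First, I would handle \emph{Termination} and \emph{Agreement}, which are essentially immediate. Every correct process invokes $\mathit{vec\_cons}.\mathsf{propose}(\cdot)$ on line~\ref{line:propose_ic}, so by \emph{Termination} of vector consensus it eventually decides some $\mathit{vector} \in \mathcal{I}_{n - t}$; since $\Lambda$ is Turing-computable by Definition~\ref{definition:similarity_condition}, the call to $\Lambda(\mathit{vector})$ on line~\ref{line:decide_general} returns in finite time, so the process decides. For \emph{Agreement}, by \emph{Agreement} of vector consensus all correct processes decide the same $\mathit{vector}$, and since $\Lambda$ is deterministic they all compute the same value $\Lambda(\mathit{vector})$.

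The main obstacle is showing that \general satisfies $\mathit{val}$. Fix any execution $\mathcal{E}$ of \general with $\mathsf{input\_conf}(\mathcal{E}) = c$, and suppose some correct process decides $\Lambda(\mathit{vector})$ after $\mathit{vec\_cons}$ outputs $\mathit{vector}$. The key claim is that $\mathit{vector} \sim c$. To see this, note that $\mathit{vector} \in \mathcal{I}_{n - t}$ has exactly $n - t$ process-proposal pairs, while $|\processes{c}| \geq n - t$, so by a simple counting argument (each refers to a subset of the $n$ processes) we have $|\processes{\mathit{vector}} \cap \processes{c}| \geq n - 2t > 0$ using $n > 3t$; thus the intersection is non-empty. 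Moreover, for any process $P_j \in \processes{\mathit{vector}} \cap \processes{c}$, process $P_j$ is correct in $\mathcal{E}$, and by \emph{Vector Validity} of $\mathit{vec\_cons}$ the proposal recorded for $P_j$ in $\mathit{vector}$ coincides with $P_j$'s actual proposal in $\mathcal{E}$, i.e., with $\mathsf{proposal}(c[j])$. Hence $\mathit{vector}[j] = c[j]$ for every common process, which establishes $\mathit{vector} \sim c$, i.e., $c \in \similar{\mathit{vector}}$. Then by $\mathcal{C}_S$ applied to $\mathit{vector} \in \mathcal{I}_{n - t}$, we get
\[
\Lambda(\mathit{vector}) \in \bigcap_{c' \in \similar{\mathit{vector}}} \mathit{val}(c') \subseteq \mathit{val}(c),
\]
so the decided value is admissible for $c$, as required.

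Finally, for the message complexity of the authenticated variant: inspecting \Cref{algorithm:general}, \general itself sends no messages of its own and forwards its work entirely to $\mathit{vec\_cons}$. Thus its message complexity equals that of the vector consensus instance, which is $O(n^2)$ when \Cref{algorithm:interactive_consistency} is used (the bound is argued in \Cref{subsection:authenticated_vector_proof}, and rests on the fact that the $\textsc{proposal}$ broadcast contributes $O(n^2)$ messages and the $\textsc{Quad}$ instance contributes $O(n^2)$ messages). Combining the three property proofs with this complexity bound yields \Cref{theorem:general_algorithm}, and together with the necessity of $\mathcal{C}_S$ from \Cref{theorem:similarity_condition_necessary} this establishes \Cref{theorem:general}.
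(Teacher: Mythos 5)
Your proof is correct and follows essentially the same route as the paper's: reduce \emph{Termination}, \emph{Agreement}, and message complexity to the corresponding guarantees of the vector consensus building block, and derive $\mathit{val}$ by showing $\mathit{vector} \sim c$ via \emph{Vector Validity} and then applying $\mathcal{C}_S$. The only difference is that you spell out the counting argument $|\processes{\mathit{vector}} \cap \processes{c}| \geq n - 2t > 0$ for the non-emptiness of the intersection, a step the paper leaves implicit; this is a welcome addition rather than a deviation.
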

\begin{proof}
\emph{Termination} and \emph{Agreement} of \general follow from \emph{Termination} and \emph{Agreement} of vector consensus, respectively.
Moreover, the message complexity of \general is identical to the message complexity of its vector consensus building block.

Finally, we prove that \general satisfies $\mathit{val}$.
Consider any execution $\mathcal{E}$ of \general; let $\mathsf{input\_conf}(\mathcal{E}) = c^*$, for some input configuration $c^* \in \mathcal{I}$.
Moreover, let $\mathit{vector} \in \mathcal{I}_{n - t}$ be the input configuration correct processes decide from vector consensus in $\mathcal{E}$ (line~\ref{line:decide_ic}).
As vector consensus satisfies \emph{Vector Validity}, we have that, for every process $P \in \processes{c^*} \cap \processes{\mathit{vector}}$, $P$'s proposals in $c^*$ and $\mathit{vector}$ are identical.
Hence, $\mathit{vector} \sim c^*$.
Therefore, $\Lambda(\mathit{vector}) \in \mathit{val}(c^*)$ (by the definition of the $\Lambda$ function).
Thus, $\mathit{val}$ is satisfied by \general.
\end{proof}

As \general (\Cref{algorithm:general}) solves the Byzantine consensus problem with any validity property which satisfies $\mathcal{C}_S$ (\Cref{theorem:general_algorithm}) if $n > 3t$, $\mathcal{C}_S$ is sufficient for solvable validity properties when $n > 3t$.
Lastly, as \general relies on vector consensus, we conclude that \emph{Vector Validity} is a \emph{strongest} validity property.
That is, a consensus solution to any solvable variant of the validity property can be obtained (with no additional cost) from vector consensus.

\paragraph{A note on the communication complexity of vector consensus.}
While the version of \general which employs \Cref{algorithm:interactive_consistency} (as its vector consensus building block) has optimal message complexity, its communication complexity is $O(n^3)$ as \textsc{Quad}'s communication complexity is $O(n^3)$ when proofs are of linear size.\footnote{The communication complexity denotes the number of sent words, where a word contains a constant number of values and signatures.}
This presents a linear gap to the lower bound for communication complexity (also $\Omega(n^2)$, implied by \Cref{theorem:lower_bound_main}), and to known optimal solutions for some validity properties (e.g., \emph{Strong Validity}, proven to be $\Theta(n^2)$ \cite{civit2022byzantine,lewis2022quadratic}).
At first glance, this seems like an issue inherent to vector consensus: the decided vectors are linear in size, suggesting that the linear gap could be inevitable.
However, this is not the case.
In \Cref{subsection:better_communication}, we give a vector consensus algorithm with $O(n^2\log n)$ communication complexity, albeit with exponential latency.\footnote{Both our authenticated (\Cref{algorithm:interactive_consistency}) and our non-authenticated (see \Cref{subsection:nonauthenticated_vector}) variants of vector consensus have linear latency, which implies linear latency of \general when employing any of these two algorithms.}
Is it possible to construct vector consensus with subcubic communication and polynomial latency?
This is an important open question, as positive answers would lead to (practical) performance improvements of all consensus variants.
\section{Concluding Remarks} \label{section:conclusion}

This paper studies the validity property of partially synchronous Byzantine consensus.
Namely, we mathematically formalize validity properties, and give the necessary and sufficient conditions for a validity property to be solvable (i.e., for the existence of an algorithm which solves a consensus problem defined with that validity property, in addition to \emph{Agreement} and \emph{Termination}).
Moreover, we prove a quadratic lower bound on the message complexity for all non-trivial (and solvable) validity properties.
Previously, this bound was mainly known for \emph{Strong Validity}.
Lastly, we introduce \general, a general algorithm for  consensus with any solvable (and non-trivial) validity property; assuming a PKI, \general achieves $O(n^2)$ message complexity, showing that the aforementioned lower bound is tight (with $t \in \Omega(n)$).

A natural extension of this work is its adaptation to synchronous environments.
Similarly, can we extend our results to randomized protocols?
Furthermore, investigating consensus variants in which ``exact'' agreement among correct processes is not required (such as approximate~\cite{AbrahamAD04,MendesH13,GhineaLW22,ghinea2023multidimensional} or $k$-set~\cite{BouzidIR16,Delporte-Gallet20,Delporte-Gallet22,lynch1996distributed} agreement) constitutes another important research direction for the future.

Finally, we restate the question posed at the end of \Cref{subsection:general_algorithm}.
Is it possible to solve vector consensus with $o(n^3)$ exchanged bits and polynomial latency?
Recall that, due to the design of \general (\Cref{subsection:general_algorithm}), any (non-trivial) consensus variant can be solved using vector consensus without additional cost.
Therefore, an upper bound on the complexity of vector consensus is an upper bound on the complexity of any consensus variant.
Hence, lowering the $O(n^3)$ communication complexity of vector consensus (while preserving polynomial latency) constitutes an important future research direction.

\begin{acks}
We thank the anonymous reviewers for their insightful comments.
We also thank our colleagues Nirupam Gupta, Matteo Monti, Rafael Pinot and Pierre-Louis Roman for the helpful discussions and comments.
This work was funded in part by the Hasler Foundation (\#21084), the Singapore grant MOE-T2EP20122-0014, and the ARC Future Fellowship program (\#180100496).
\end{acks}

\bibliographystyle{ACM-Reference-Format}
\bibliography{references}

\appendix
\section{Remark on non-determinism}
\label{section:nondeterminism_appendix}

The reader might be aware of several randomized protocols that achieve expected sub-quadratic communication with a negligible probability of failure~\cite{DBLP:conf/podc/AbrahamCDNP0S19, Cohen2020, Boyle2021, Chopard2021, Bhangale2022}.
This might seem at odds with our lower bound (\Cref{theorem:lower_bound_main}).
However, this is not the case.

When proving \Cref{lemma:behavior_without_receptions}, we show that the adversary has a (deterministic) strategy given any deterministic protocol.
In particular, the deterministic strategy of the adversary described in the construction of $\mathcal{E}_{\mathit{base}}$ (including GST=0) resolves the \emph{pure} non-determinism (e.g., network scheduling) and thus deterministically implies $\mathcal{E}_{\mathit{base}}$ and the associated set $S_Q$ (similarly for $\mathcal{E}'_{\mathit{base}}$).
However, we did not demonstrate that an adversary always has a pair of strategies for every \emph{randomized} protocol such that the lemma would hold with a non-negligible probability.
Hence, there is no contradiction whatsoever with the aforementioned randomized protocols.

Extending \Cref{theorem:lower_bound_main} to the randomized case with an adaptive adversary (in the same vein as~\cite{DBLP:conf/podc/AbrahamCDNP0S19}) is left for future work.
\section{Vector Consensus: Proofs \& Omitted Algorithms} \label{section:algorithm_appendix}

In \Cref{subsection:authenticated_vector_proof}, we prove the correctness and complexity of our authenticated implementation of vector consensus (\Cref{algorithm:interactive_consistency}).
We dedicate \Cref{subsection:nonauthenticated_vector} to a non-authenticated implementation of vector consensus.
Finally, in \Cref{subsection:better_communication}, we give an implementation of vector consensus with $O(n^2 \log n)$ communication complexity.
Throughout the entire section, $n > 3t$.


\subsection{Authenticated Implementation (\Cref{algorithm:interactive_consistency}): Proofs} \label{subsection:authenticated_vector_proof}

We start this subsection with some clarifications about \textsc{Quad}~\cite{civit2022byzantine}, a Byzantine consensus algorithm utilized in \Cref{algorithm:interactive_consistency}.
Then, we prove the correctness and complexity of \Cref{algorithm:interactive_consistency}.

\paragraph{A note on \textsc{Quad}.}
In \Cref{subsubsection:interactive_consistency}, we claim that \textsc{Quad} satisfies the following validity property: if a correct process decides a value-proof pair $(v, \Sigma)$, then $\mathsf{verify}(v, \Sigma) = \mathit{true}$.
Technically speaking, the authors of \textsc{Quad} only consider \emph{Weak Validity}, i.e., they do not claim that the their protocol satisfies the aforementioned validity property.
However, modifying their protocol to accommodate for the aforementioned property is trivial: each correct process simply discards each message which contains a pair $(v, \Sigma)$ for which $\mathsf{verify}(v, \Sigma) = \mathit{false}$.

Another subtle remark is that the authors of \textsc{Quad} prove its message and latency complexity assuming that all correct processes start executing \textsc{Quad} by GST.
In \Cref{algorithm:interactive_consistency}, this might not be the case: correct processes might receive $n - t$ \textsc{proposal} messages at $\text{GST} + \delta$ (line~\ref{line:received_n_t_proposals}), and thus start executing \textsc{Quad} at $\text{GST} + \delta$  (line~\ref{line:propose_quad}).
Nevertheless, it is easy to show that \textsc{Quad} ensures the stated complexity even if all correct processes start executing the algorithm by time $\text{GST} + \delta$.
Not only that, even if correct processes do not start executing \textsc{Quad} within $\delta$ time after GST, but they all start executing the algorithm within $\delta$ time from each other (after GST), the message complexity remains quadratic and the latency remains linear (measured from the time the first correct process starts executing \textsc{Quad}).

\paragraph{Correctness \& complexity.}

First, we prove the correctness.

\begin{theorem}
\Cref{algorithm:interactive_consistency} is correct.
\end{theorem}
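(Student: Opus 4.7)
The plan is to establish the three properties of vector consensus in turn: \emph{Termination}, \emph{Agreement}, and \emph{Vector Validity}. The argument is almost entirely a reduction to the corresponding guarantees of the underlying \textsc{Quad} instance, with best-effort broadcast supplying liveness at the proposal stage and digital signatures supplying the necessary authenticity for the validity proof.

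For \emph{Termination}, I first argue that every correct process eventually invokes $\mathit{quad}.\mathsf{propose}$. Since there are at least $n-t$ correct processes, each of them best-effort broadcasts a signed \textsc{proposal} message (line~\ref{line:broadcast_proposal}). After GST these messages are delivered within $\delta$ to every correct recipient, so each correct $P_i$ eventually accumulates $n-t$ such messages and reaches line~\ref{line:propose_quad}. The \textsc{Quad} invocation is therefore universally triggered at correct processes within $\delta$ time of each other after GST, which (as discussed in the remark on \textsc{Quad}) suffices for \emph{Termination} of \textsc{Quad}. Applying \textsc{Quad}'s termination and executing line~\ref{line:decide} gives \emph{Termination} of Algorithm~\ref{algorithm:interactive_consistency}. \emph{Agreement} is immediate: any two correct deciders retrieve their decisions from \textsc{Quad} at line~\ref{line:decide_quad}, and \textsc{Quad}'s \emph{Agreement} forces them to agree on the same $(\mathit{vector}', \Sigma')$, hence on $\mathit{vector}'$.

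The substantive step is \emph{Vector Validity}, and this is where I expect the only subtle reasoning. Suppose a correct process decides $\mathit{vector}'$, and let $(P_j, v_j) \in \mathit{vector}'$ with $P_j$ correct. By \textsc{Quad}'s validity-by-proof property, the decided pair $(\mathit{vector}', \Sigma')$ satisfies $\mathsf{verify}(\mathit{vector}', \Sigma') = \mathit{true}$, so by the definition of $\mathsf{verify}$ instantiated for this use of \textsc{Quad}, $\Sigma'$ must contain a properly signed message $\langle \textsc{proposal}, v_j \rangle_{\sigma_j}$. Because faulty processes cannot forge $P_j$'s signature, this message must have been produced by $P_j$ itself. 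Inspecting Algorithm~\ref{algorithm:interactive_consistency}, a correct process signs and broadcasts a \textsc{proposal} message exactly once, at line~\ref{line:broadcast_proposal}, carrying the value it received in $\mathsf{propose}(\cdot)$. Hence $P_j$'s proposal to vector consensus was indeed $v_j$, which is precisely \emph{Vector Validity}.

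The main obstacle I anticipate is not conceptual but notational: I must carefully connect the $\mathsf{verify}$ predicate (as instantiated in \Cref{subsubsection:interactive_consistency}) to the single-signing invariant of correct processes in the pseudocode, and I must be explicit that \textsc{Quad}'s validity-by-proof guarantee, rather than its weak validity, is what is being invoked. The complexity claim is then almost a corollary: each correct process sends a single \textsc{proposal} message via best-effort broadcast, contributing $O(n^2)$ messages, and \textsc{Quad} itself contributes $O(n^2)$ messages; summing yields the stated $O(n^2)$ message complexity.
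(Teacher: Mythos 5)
Your proof is correct and follows essentially the same route as the paper's: \emph{Termination} from eventual receipt of $n-t$ \textsc{proposal} messages plus \textsc{Quad}'s termination, \emph{Agreement} directly from \textsc{Quad}'s agreement, and \emph{Vector Validity} from \textsc{Quad}'s validity-by-proof guarantee combined with signature unforgeability and the fact that a correct process signs only its own proposal at line~\ref{line:broadcast_proposal}. The extra care you take about correct processes starting \textsc{Quad} within $\delta$ of each other matches the paper's remark on \textsc{Quad} and is a welcome, if not strictly necessary, addition.
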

\begin{proof}
\emph{Agreement} follows directly from the fact that \textsc{Quad} satisfies \emph{Agreement}.
\emph{Termination} follows from (1) \emph{Termination} of \textsc{Quad}, and (2) the fact that, eventually, all correct processes receive $n - t$ \textsc{proposal} messages (as there are at least $n - t$ correct processes).

We now prove that \Cref{algorithm:interactive_consistency} satisfies \emph{Vector Validity}.
Let a correct process $P$ decide $\mathit{vector}' \in \mathcal{I}_{n - t}$ from vector consensus (line~\ref{line:decide}).
Hence, $P$ has decided $(\mathit{vector}', \Sigma')$ from \textsc{Quad}, where (1) $\Sigma'$ is some proof, and (2) $\mathsf{verify}(\mathit{vector}', \Sigma') = \mathit{true}$ (due to the specification of \textsc{Quad}).
Furthermore, if there exists a process-proposal pair $(P, v \in \mathcal{V}_I)$ in $\mathit{vector}'$, where $P$ is a correct process, a properly signed \textsc{proposal} message belongs to $\Sigma'$.
As correct processes send \textsc{proposal} messages only for their proposals (line~\ref{line:broadcast_proposal}), $v$ was indeed proposed by $P$.
Thus, the theorem.
\end{proof}

Finally, we prove the complexity.

\begin{theorem}
The message complexity of \Cref{algorithm:interactive_consistency} is $O(n^2)$.
\end{theorem}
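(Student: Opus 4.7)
The plan is straightforward: the algorithm has exactly two sources of messages sent by correct processes, and I will bound each separately and then sum.

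First, I would account for the \textsc{proposal} messages. On line~\ref{line:broadcast_proposal}, each correct process invokes $\mathit{beb}.\mathsf{broadcast}$ exactly once (since a correct process proposes at most once). Best-effort broadcast sends one message to each of the $n$ processes, contributing $O(n)$ messages per correct process and $O(n^2)$ in total across all (at most $n$) correct senders.

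Second, I would invoke the message complexity bound of \textsc{Quad} established in~\cite{civit2022byzantine}, which is $O(n^2)$. Here I need to be careful about the hypothesis under which that bound was proved, namely that all correct processes start executing \textsc{Quad} sufficiently synchronized with respect to GST. In our algorithm, correct processes invoke $\mathit{quad}.\mathsf{propose}$ on line~\ref{line:propose_quad} only after collecting $n-t$ \textsc{proposal} messages. Since there are at least $n-t$ correct processes whose \textsc{proposal} messages are delivered within $\delta$ of each other after GST by reliability of the communication channels and the partial synchrony assumption, all correct processes enter \textsc{Quad} within $\delta$ time of one another after GST. As noted in the preceding remark on \textsc{Quad}, this suffices to preserve its $O(n^2)$ message complexity.

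Summing the two contributions yields $O(n^2) + O(n^2) = O(n^2)$ messages sent by correct processes during $[\text{GST}, \infty)$, matching the definition of message complexity. The only subtlety is the synchronization of entry into \textsc{Quad}, which I would handle by explicitly citing the remark at the start of \Cref{subsection:authenticated_vector_proof} clarifying that \textsc{Quad}'s bound is robust to such bounded staggering; no other step requires any computation beyond counting broadcasts.
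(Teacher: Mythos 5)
Your proof is correct and follows essentially the same decomposition as the paper's: $O(n^2)$ \textsc{proposal} messages from the $n$ best-effort broadcasts plus the $O(n^2)$ bound inherited from \textsc{Quad}. The synchronization caveat you raise about when correct processes enter \textsc{Quad} is exactly the point the paper handles in its preliminary remark on \textsc{Quad} in the same subsection, so your treatment matches the paper's in substance.
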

\begin{proof}
The message complexity of the specific instance of \textsc{Quad} utilized in \Cref{algorithm:interactive_consistency} is $O(n^2)$.
Additionally, correct processes exchange $O(n^2)$ \textsc{proposal} messages.
Thus, the message complexity is $O(n^2) + O(n^2) = O(n^2)$.
\end{proof}

\subsection{Non-Authenticated Implementation: Pseudocode \& Proofs} \label{subsection:nonauthenticated_vector}

We now present a non-authenticated implementation (\Cref{algorithm:non_authenticated}) of vector consensus.
The design of \Cref{algorithm:non_authenticated} follows the reduction from binary consensus to multivalued consensus (e.g.,~\cite{CGL18}).
Namely, we use the following two building blocks in \Cref{algorithm:non_authenticated}:
\begin{compactenum}
    \item Byzantine Reliable Broadcast~\cite{cachin2011introduction, B87}:
    This primitive allows processes to disseminate information in a reliable manner.
    Formally, Byzantine reliable broadcast exposes the following interface: (1) \textbf{request} $\mathsf{broadcast}(m)$, and (2) \textbf{indication} $\mathsf{deliver}(m')$.
    The primitive satisfies the following properties:
    \begin{compactitem}
        \item \emph{Validity:} If a correct process $P$ broadcasts a message $m$, $P$ eventually delivers $m$.
        
        \item \emph{Consistency:} No two correct processes deliver different messages.
        
        \item \emph{Integrity:} Every correct process delivers at most one message.
        Moreover, if a correct process delivers a message $m$ from a process $P$ and $P$ is correct, then $P$ broadcast $m$.
        
        \item \emph{Totality:} If a correct process delivers a message, every correct process delivers a message.
    \end{compactitem}
    In \Cref{algorithm:non_authenticated}, we use a non-authenticated implementation~\cite{B87} of the Byzantine reliable broadcast primitive.
    
    \item Binary DBFT~\cite{CGL18}, a non-authenticated algorithm which solves the Byzantine consensus problem with $\emph{Strong Validity}$.
\end{compactenum}

Let us briefly explain how \Cref{algorithm:non_authenticated} works; we focus on a correct process $P_i$.
First, $P_i$ reliably broadcasts its proposal (line~\ref{line:broadcast_proposal_nowic}).
Once $P_i$ delivers a proposal of some process $P_j$ (line~\ref{line:receive_proposal_nowic}), $P_i$ proposes $1$ to the corresponding DBFT instance (line~\ref{line:propose_dbft_nowic}).
Eventually, $n - t$ DBFT instances decide $1$ (line~\ref{line:dbft_decided_nowic}).
Once that happens, $P_i$ proposes $0$ to all DBFT instances to which $P_i$ has not proposed (line~\ref{line:propose_0_dbft_nowic}).
When all DBFT instances have decided (line~\ref{line:nowic_decide_rule}), $P_i$ decides an input configuration associated with the first $n - t$ processes whose DBFT instances decided $1$ (constructed at line~\ref{line:input_configuration_nowic}).

\begin{algorithm*}
\caption{Non-Authenticated Vector Consensus: Pseudocode (for process $P_i$)}
\label{algorithm:non_authenticated}
\footnotesize
\begin{algorithmic} [1]
\State \textbf{Uses:}
\State \hskip2em Non-Authenticated Byzantine Reliable Broadcast~\cite{B87}, \textbf{instance} $\mathit{brb}$
\State \hskip2em Binary DBFT~\cite{CGL18}, \textbf{instances} $\mathit{dbft}[1]$, ..., $\mathit{dbft}[n]$ \BlueComment{one instance of the binary DBFT algorithm per process}

\medskip
\State \textbf{upon} $\mathsf{init}$:
\State \hskip2em $\mathsf{Map}(\mathsf{Process} \to \mathcal{V}_I)$ $\mathit{proposals}_i \gets \text{empty}$ \BlueComment{received proposals}

\State \hskip2em $\mathsf{Boolean}$ $\mathit{dbft\_proposing}_i = \mathit{true}$ \BlueComment{is $P_i$ still proposing $1$s to the DBFT instances}
\State \hskip2em $\mathsf{Map}(\mathsf{Process} \to \mathsf{Boolean})$ $\mathit{dbft\_proposed}_i \gets \{\mathit{false}, \text{for every } \mathsf{Process}\}$ 
\State \hskip2em $\mathsf{Integer}$ $\mathit{dbft\_decisions}_i \gets 0$ \BlueComment{the number of the DBFT instances which have decided}

\medskip
\State \textbf{upon} $\mathsf{propose}(v \in \mathcal{V}_I)$: \label{line:propose_nowic}
\State \hskip2em \textbf{invoke} $\mathit{brb}.\mathsf{broadcast}\big( \langle \textsc{proposal}, v \rangle \big)$ \BlueComment{broadcast a proposal} \label{line:broadcast_proposal_nowic}

\medskip
\State \textbf{upon} reception of $\mathsf{Message}$ $m = \langle \textsc{proposal}, v_j \in \mathcal{V}_I \rangle$ from process $P_j$: \label{line:receive_proposal_nowic} \BlueComment{delivered from $\mathit{brb}$}
\State \hskip2em $\mathit{proposals}_i[P_j] \gets v_j$

\State \hskip2em \textbf{if} $\mathit{dbft\_proposing}_i = \mathit{true}$:
\State \hskip4em $\mathit{dbft\_proposed}_i[P_j] \gets \mathit{true}$
\State \hskip4em \textbf{invoke} $\mathit{dbft}[j].\mathsf{propose}(1)$ \label{line:propose_dbft_nowic}

\medskip
\State \textbf{upon} $n - t$ DBFT instances have decided 1 (for the first time): \label{line:dbft_decided_nowic}
\State \hskip2em $\mathit{dbft\_proposing}_i \gets \mathit{false}$
\State \hskip2em \textbf{for} every $\mathsf{Process}$ $P_j$ such that $\mathit{dbft\_proposed}_i[P_j] = \mathit{false}$:
\State \hskip4em $\mathit{dbft\_proposed}_i[P_j] \gets \mathit{true}$
\State \hskip4em \textbf{invoke} $\mathit{dbft}[j].\mathsf{propose}(0)$ \label{line:propose_0_dbft_nowic}

\medskip
\State \textbf{upon} all DBFT instances decided, and, for the first $n - t$ processes $P_j$ such that $\mathit{dbft}[j]$ decided $1$, $\mathit{proposals}_i[P_j] \neq \bot$: \label{line:nowic_decide_rule}
\State \hskip2em $\mathsf{Input\_Configuration}$ $\mathit{vector} \gets $ input configuration with $n - t$ process-proposal pairs corresponding to the first $n - t$ DBFT instances which decided 1 \label{line:input_configuration_nowic} 
\State \hskip2em \textbf{trigger} $\mathsf{decide}(\mathit{vector})$ \label{line:decide_nowic}
\end{algorithmic}
\end{algorithm*}

\begin{theorem}
\Cref{algorithm:non_authenticated} is correct.
\end{theorem}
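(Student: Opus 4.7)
The plan is to verify, separately, the three properties that vector consensus must satisfy: Agreement, Termination, and Vector Validity. The argument will rely only on (i) Validity, Consistency, Integrity, and Totality of the underlying BRB, and (ii) Termination, Agreement, and Strong Validity of each binary DBFT instance, together with the standing assumption $n > 3t$. The overall shape mirrors the classical reduction from multivalued to binary consensus; the non-routine work is concentrated in liveness.

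For \emph{Agreement}, I would show that every correct process deterministically reconstructs the same $\mathit{vector}$ at line~\ref{line:input_configuration_nowic}. By Agreement of each DBFT instance, all correct processes observe the same $0/1$ outcome at every index and thus identify the same ``first $n-t$ indices that decided $1$''; by Consistency of BRB, for each such index $j$ the entry $\mathit{proposals}_i[P_j]$ carries an identical value at every correct process. The two ingredients coincide, so the assembled vectors coincide. \emph{Vector Validity} is equally short: if $(P_j, v)$ appears in a correct process's decided vector and $P_j$ is correct, then $v$ was obtained at line~\ref{line:receive_proposal_nowic} by BRB-delivering $\langle \textsc{proposal}, v \rangle$ from $P_j$; by Integrity of BRB, $P_j$ must have BRB-broadcast exactly that message, which only happens at line~\ref{line:broadcast_proposal_nowic} with $P_j$'s actual proposal.

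The main obstacle is \emph{Termination}, which I would split into two claims. First, I would prove by contradiction that at least $n-t$ DBFT instances globally decide $1$: if not, then by Agreement of DBFT the threshold at line~\ref{line:dbft_decided_nowic} is never reached at any correct process, so $\mathit{dbft\_proposing}_i$ remains $\mathit{true}$ forever and line~\ref{line:propose_0_dbft_nowic} is never executed. But then, for every correct sender $P_j$, Validity plus Totality of BRB force every correct process to BRB-deliver $P_j$'s proposal and, at line~\ref{line:propose_dbft_nowic}, invoke $\mathit{dbft}[j].\mathsf{propose}(1)$. By Termination and Strong Validity of DBFT, every such $\mathit{dbft}[j]$ decides $1$; since there are at least $n-t$ correct senders, we obtain at least $n-t$ $1$-decisions, a contradiction. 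Once $n-t$ instances have decided $1$, every correct process invokes $\mathsf{propose}(0)$ on the remaining instances, so all $n$ DBFT instances eventually decide by Termination of DBFT.

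The remaining (and most subtle) step is to verify the ``non-$\bot$'' clause in the trigger at line~\ref{line:nowic_decide_rule}: we need that, for each of the first $n-t$ indices $j$ such that $\mathit{dbft}[j]$ decided $1$, every correct process eventually records $\mathit{proposals}_i[P_j] \neq \bot$. Here I would use Strong Validity of DBFT in the contrapositive direction: if $\mathit{dbft}[j]$ decided $1$, then not all correct processes proposed $0$, so some correct process proposed $1$ to it, which means some correct process BRB-delivered a $\textsc{proposal}$ message from $P_j$. Totality of BRB then guarantees that every correct process eventually BRB-delivers such a message and populates $\mathit{proposals}_i[P_j]$. Combined with the previous step, the decide rule at line~\ref{line:nowic_decide_rule} eventually fires at every correct process, completing Termination.
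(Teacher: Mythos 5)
Your proof is correct and follows essentially the same route as the paper's (Termination via the two-phase DBFT argument, Vector Validity via BRB Integrity, Agreement via DBFT Agreement plus BRB Consistency). In fact it is more careful than the paper's version: your explicit justification of the non-$\bot$ clause at line~\ref{line:nowic_decide_rule} — using Strong Validity in the contrapositive together with Totality of BRB — is a step the paper's proof silently skips over.
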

\begin{proof}

We start by proving \emph{Termination} of \Cref{algorithm:non_authenticated}.
Eventually, at least $n - t$ DBFT instances decide $1$ due to the fact that (1) no correct process proposes $0$ to any DBFT instance unless $n - t$ DBFT instances have decided $1$ (line~\ref{line:dbft_decided_nowic}), and (2) all correct processes eventually propose $1$ to the DBFT instances which correspond to the correct processes (unless $n - t$ DBFT instances have already decided $1$).
When $n - t$ DBFT instances decide $1$ (line~\ref{line:dbft_decided_nowic}), each correct process proposes to all instances to which it has not yet proposed (line~\ref{line:propose_0_dbft_nowic}).
Hence, eventually all DBFT instances decide, and (at least) $n - t$ DBFT instances decide $1$.
Therefore, the rule at line~\ref{line:nowic_decide_rule} eventually activates at every correct process, which implies that every correct process eventually decides (line~\ref{line:decide_nowic}).

Next, we prove \emph{Vector Validity}.
If a correct process $P$ decides an input configuration with a process-proposal pair $(Q, v)$, $P$ has delivered a \textsc{proposal} message from $Q$ (line~\ref{line:receive_proposal_nowic}).
If $Q$ is correct, due to integrity of the reliable broadcast primitive, $Q$'s proposal was indeed $v$.

Finally, \emph{Agreement} follows from (1) \emph{Agreement} of DBFT, and (2) consistency of the reliable broadcast primitive.
Therefore, \Cref{algorithm:non_authenticated} is correct.
\end{proof}

The main downside of \Cref{algorithm:non_authenticated} is that its message complexity is $O(n^4)$.
Therefore, non-authenticated version of \general has $O(n^4)$ message complexity, which is not optimal according to our lower bound (\Cref{subsection:lower_bound}).
Interestingly, as \Cref{algorithm:non_authenticated} shows that vector consensus can be solved using consensus with \emph{Strong Validity}, \emph{Strong Validity} is ``another'' strongest validity property: a solution to consensus with \emph{Strong Validity} yields a solution to any non-trivial consensus variant (although with additional cost).

\subsection{Implementation with $O(n^2 \log n)$ Communication: Pseudocode \& Proofs} \label{subsection:better_communication}

In this subsection, we give an implementation of vector consensus with $O(n^2 \log n)$ communication complexity, which comes within a logarithmic factor of the lower bound (\Cref{subsection:lower_bound}).
This implementation represents a near-linear communication improvement over \Cref{algorithm:interactive_consistency} (\Cref{subsection:general_algorithm}), which achieves $O(n^3)$ communication complexity.
We note that the following solution is highly impractical due to its exponential latency.
However, our solution does represent a step towards closing the existing gap in the communication complexity of consensus with non-trivial (and solvable) validity properties.

\paragraph{Threshold signatures.}
For this implementation, we assume a $(k, n)$-threshold signature scheme~\cite{libert2014born}, where $k = n - t$.
In a threshold signature scheme, each process holds a distinct private key, and there exists a single public key.
Each process $P_i$ can use its private key to produce a (partial) signature of a message $m$.
Moreover, a signature can be verified by other processes.
Finally, a set of signatures for a message $m$ from $k$ (the threshold) distinct processes can be combined into a single threshold signature for $m$, which proves that $k$ processes have signed $m$.

\paragraph{Hash functions.}
We denote by $\mathsf{hash}(\cdot)$ a collision-resistant hash function.

\subsubsection{Vector Dissemination} \label{subsubsection:vector_dissemination}
First, we formally define the \emph{vector dissemination} problem, which plays the crucial role in our vector consensus algorithm with improved communication complexity.
In this problem, each correct process \emph{disseminates} a vector of exactly $n - t$ values, and all correct processes eventually acquire (1) a hash of some disseminated vector, and (2) a threshold signature for that hash.
Formally, the vector dissemination problem exposes the following interface:
\begin{compactitem}
    \item \textbf{request} $\mathsf{disseminate}(\mathsf{Vector} \text{ } \mathit{vec})$: a process disseminates a vector $\mathit{vec}$. 
    
    \item \textbf{indication} $\mathsf{acquire}(\mathsf{Hash\_Value} \text{ } H, \mathsf{T\_Signature} \text{ } \mathit{tsig})$: a process acquires a hash value $H'$ and a threshold signature $\mathit{tsig}$.
\end{compactitem}
The following properties are required:
\begin{compactitem}
    \item \emph{Termination:} Every correct process eventually acquires a hash-signature pair.

    \item \emph{Integrity:} If a correct process acquires a hash-signature pair $(H, \mathit{tsig})$, then $\mathsf{verify\_sig}(H, \mathit{tsig}) = \mathit{true}$.
    
    
    \item \emph{Redundancy:} Let a correct process cache (i.e., store in its local memory) a threshold signature $\mathit{tsig}$ for some hash value $H$.
    Then, (at least) $t + 1$ correct processes have cached a vector $\mathit{vec}$ such that $\mathsf{hash}(\mathit{vec}) = H$.
\end{compactitem}

\paragraph{Slow broadcast.}
In order to solve the vector dissemination problem, we present a simple algorithm (\Cref{algorithm:slow_1}) which implements \emph{slow broadcast}.
In slow broadcast, each process disseminates its vector in ``one-by-one'' fashion, with a ``waiting step'' between any two sending events.
Specifically, process $P_1$ broadcasts its vector by (1) sending the vector to $P_1$ (line~\ref{line:slow_send}), and then waiting $\delta$ time (line~\ref{line:slow_wait}), (2) sending the vector to $P_2$ (line~\ref{line:slow_send}), and then waiting $\delta$ time (line~\ref{line:slow_wait}), etc.
Process $P_2$ broadcasts its vector in the same manner, but it waits $\delta \cdot n$ time (line~\ref{line:slow_wait}).
Crucially, if the system is synchronous, the waiting time of $P_2$ is (roughly) sufficient for $P_1$ to \emph{completely} disseminate its vector.
This holds for any two processes $P_i$ and $P_j$ with $i < j$.

\begin{algorithm}[h]
\caption{Slow Broadcast: Pseudocode (for process $P_i$)}
\label{algorithm:slow_1}
\footnotesize
\begin{algorithmic} [1]

\State \textbf{upon} $\mathsf{broadcast}( \mathsf{Vector}~\mathit{vec} )$:
\State \hskip2em \textbf{for each} $\mathsf{Process}$ $P_j$:
\State \hskip4em \textbf{send} $\langle \textsc{slow\_broadcast}, \mathit{vec} \rangle~\textbf{to}~P_j$ \label{line:slow_send}
\State \hskip4em \textbf{wait for} $\delta \cdot n^{(i-1)}$ time \label{line:slow_wait}

\medskip
\State \textbf{upon} reception of $\langle \textsc{slow\_broadcast}, \mathsf{Vector} \text{ } \mathit{vec}' \rangle$ from process $P_j$:
\State \hskip2em \textbf{trigger} $\mathsf{deliver}(\mathit{vec}', P_j)$

\end{algorithmic}
\end{algorithm}









\paragraph{Algorithm description.}
Our solution is given in \Cref{algorithm:prep}.
We explain it from the perspective of a correct process $P_i$.

When $P_i$ starts disseminating its vector $\mathit{vec}$ (line~\ref{line:disseminate_vector}), $P_i$ stores its hash (line~\ref{line:compute_hash}) and slow-broadcasts the vector (line~\ref{line:slow_broadcast}).
Once $P_i$ receives \textsc{stored} messages from $n - t$ distinct processes (line~\ref{line:receive_acknowledgements_vector_dissemination}), $P_i$ combines the received partial signatures into a threshold signature (line~\ref{line:create_storage_proof}).
Then, $P_i$ broadcasts (using the best-effort broadcast primitive) the signature (line~\ref{line:broadcast_storage_proof}).

Whenever $P_i$ receives a threshold signature (line~\ref{line:receive_storage_proof}), $P_i$ checks whether the signature is valid (line~\ref{line:check_storage_proof}).
If it is, $P_i$ rebroadcasts the signature (line~\ref{line:forward_storage_proof}), acquires a hash value and the signature (line~\ref{line:obtain_hash}), and stops participating (i.e., sending and processing messages) in vector dissemination (line~\ref{line:stop_participating}).
Observe that, once $P_i$ stops participating in vector dissemination (line~\ref{line:stop_participating}), it stops participating in slow broadcast, as well.

\begin{algorithm*} [h]
\caption{Vector Dissemination: Pseudocode (for process $P_i$)}
\label{algorithm:prep}
\footnotesize
\begin{algorithmic} [1]
\State \textbf{Uses:}
\State \hskip2em Best-Effort Broadcast~\cite{cachin2011introduction}, \textbf{instance} $\mathit{beb}$ \BlueComment{broadcast with no guarantees if the sender is faulty}
\State \hskip2em Slow Broadcast, \textbf{instance} $\mathit{slow}$ \BlueComment{see \Cref{algorithm:slow_1}}

\medskip
\State \textbf{upon} $\mathsf{init}$:
\State \hskip2em $\mathsf{Hash\_Value} \text{ } H_i \gets \bot$ \BlueComment{hash of the message $P_i$ slow-broadcasts}
\State \hskip2em $\mathsf{Map}(\mathsf{Hash\_Value} \to \mathsf{Vector})$ $\mathit{vectors}_i \gets \text{empty}$ \BlueComment{received vectors}
\State \hskip2em $\mathsf{Set}(\mathsf{Process})$ $\mathit{disseminated}_i \gets \text{empty}$ \BlueComment{processes who have disseminated a vector}

\medskip
\State \textbf{upon} $\mathsf{disseminate}( \mathsf{Vector}~\mathit{vec})$: \label{line:disseminate_vector}
\State \hskip2em $H_i \gets \mathsf{hash}(\mathit{vec})$ \label{line:compute_hash}
\State \hskip2em \textbf{invoke} $\mathit{slow}.\mathsf{broadcast}(\mathit{vec})$ \label{line:slow_broadcast}

\medskip
\State \textbf{upon} $\mathit{slow}.\mathsf{deliver}( \mathsf{Vector}~\mathit{vec}', \mathsf{Process} \text{ } P_j)$: \label{line:deliver_slow_bcast}
\State \hskip2em \textbf{if} $P_j \notin \mathit{disseminated}_i$: \label{line:verify_vector_signature}
\State \hskip4em $\mathit{disseminated}_i \gets \mathit{disseminated}_i \cup \{P_j\}$
\State \hskip4em $\mathit{vectors}_i[\mathsf{hash}(\mathit{vec}')] \gets \mathit{vec}'$ \label{line:store_preimage} \BlueComment{cache $\mathit{vec}'$}
\State \hskip4em \textbf{send} $\langle \textsc{stored}, \mathsf{hash}(\mathit{vec}'), \mathsf{partially\_sign}_i\big( \mathsf{hash}(\mathit{vec'}) \big) \rangle~\textbf{to}~P_j$ \label{line:send_stored} \BlueComment{acknowledge the reception by sending a partial signature to $P_j$}

\medskip
\State \textcolor{blue}{\(\triangleright\) acknowledgements are received}
\State \textbf{upon} reception of $\mathsf{Message}$ $m_j = \langle \textsc{stored}, \mathsf{Hash\_Value} \text{ } H', \mathsf{P\_Signature} \text{ } \mathit{psig} \rangle$ such that $H' = H_i$ from $n - t$ distinct processes: \label{line:receive_acknowledgements_vector_dissemination} \label{line:received_n-t_stored}
\State \hskip2em $\mathsf{T\_Signature}~\mathit{tsig} \gets \mathsf{combine}\big(\{ \mathit{psig} \,|\, \mathit{psig} \text{ is a signature received in the \textsc{stored} messages} \}\big)$ \label{line:create_storage_proof}
\State \hskip2em \textbf{invoke} $\mathit{beb}.\mathsf{broadcast}\big( \langle \textsc{confirm}, H_i, \mathit{tsig}\rangle \big)$ \label{line:bcast_storage_proof} \BlueComment{disseminate the threshold signature} \label{line:broadcast_storage_proof}

\medskip
\State \textcolor{blue}{\(\triangleright\) a threshold signature is received}
\State \textbf{upon} reception of $\mathsf{Message}$ $m = \langle \textsc{confirm}, \mathsf{Hash\_Value} \text{ } H', \mathsf{T\_Signature} \text{ } \mathit{tsig}'\rangle$: \label{line:receive_storage_proof}
\State \hskip2em \textbf{if} $\mathit{tsig}'$ is a valid $(n-t)$-threshold signature for $H'$: \label{line:check_storage_proof} \BlueComment{check that the threshold signature is valid}
\State \hskip4em \textbf{invoke} $\mathit{beb}.\mathsf{broadcast}\big( \langle \textsc{confirm}, H', \mathit{tsig}'\rangle \big)$ \label{line:forward_storage_proof} \BlueComment{rebroadcast the threshold signature}
\State \hskip4em \textbf{trigger} $\mathsf{acquire}(H', \mathit{tsig}')$ \label{line:obtain_hash}
\State \hskip4em \textbf{stop participating} in vector dissemination (and slow broadcast) \label{line:stop_participating}

\end{algorithmic}
\end{algorithm*}

\paragraph{Correctness \& complexity.}
We start by proving redundancy.

\begin{lemma} \label{lemma:redundancy}
\Cref{algorithm:prep} satisfies redundancy.
\end{lemma}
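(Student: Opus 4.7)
My plan is to trace backwards from the existence of a cached threshold signature to the number of correct processes that must have previously cached a matching vector, exploiting unforgeability of the threshold signature scheme and collision resistance of the hash.

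First, I would observe that the only way for a correct process to cache a threshold signature $\mathit{tsig}$ for a hash $H$ in Algorithm~\ref{algorithm:prep} is to pass the verification check at line~\ref{line:check_storage_proof} (either on a locally combined signature constructed at line~\ref{line:create_storage_proof}, or on one received via a \textsc{confirm} message at line~\ref{line:receive_storage_proof}). By the unforgeability of the $(n-t, n)$-threshold signature scheme, a valid $\mathit{tsig}$ on $H$ can exist only if at least $n-t$ distinct processes produced a partial signature on $H$. Since at most $t$ processes are faulty, at least $(n-t) - t = n - 2t$ of these signing processes are correct; using the assumption $n > 3t$, we get $n - 2t \geq t+1$.

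Second, I would argue that each correct process that produced a partial signature on $H$ necessarily cached a preimage. The unique place in the algorithm where a correct process signs a hash value is line~\ref{line:send_stored}, and this happens only after the caching step at line~\ref{line:store_preimage}, where the delivered vector $\mathit{vec}'$ is stored in $\mathit{vectors}_i$. Because the partial signature at line~\ref{line:send_stored} is computed on $\mathsf{hash}(\mathit{vec}')$, a correct process that contributed a partial signature on $H$ must have cached some $\mathit{vec}'$ with $\mathsf{hash}(\mathit{vec}') = H$. By collision resistance of $\mathsf{hash}(\cdot)$, any two such cached vectors (across the different correct signers) agree, so all $\geq t+1$ correct signers cached a vector whose hash is $H$, giving the claim.

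The main obstacle I foresee is not the counting, which is a clean pigeonhole on the $n > 3t$ threshold, but making sure there is no alternative path to caching a valid threshold signature that bypasses line~\ref{line:send_stored}; I would handle this by noting that the signature-verification predicate at line~\ref{line:check_storage_proof} is the sole gateway for acquiring $\mathit{tsig}$ in both branches of the algorithm, so unforgeability reduces every case to the same argument.
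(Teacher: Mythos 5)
Your proof is correct and follows essentially the same route as the paper's: a valid $(n-t)$-threshold signature on $H$ implies $n-t$ partial signers, of whom at least $n-2t \geq t+1$ are correct (using $n>3t$), and each correct signer caches a preimage at line~\ref{line:store_preimage} before signing at line~\ref{line:send_stored}. Your extra appeals to unforgeability and collision resistance are fine but not needed for the property as stated, which only requires each of the $t+1$ correct processes to have cached \emph{some} vector hashing to $H$.
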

\begin{proof}
Let a correct process cache a threshold signature $\mathit{tsig}'$ for some hash value $H'$.
Hence, $n - t$ processes have (partially) signed $H'$.
Among these $n - t$ processes, at least $t + 1$ are correct (as $n > 3t$).
Before signing $H'$ (line~\ref{line:send_stored}), all these correct processes cache a vector $\mathit{vec}'$ (line~\ref{line:store_preimage}), where $\mathsf{hash}(\mathit{vec}') = H'$.
\end{proof}



The following lemma proves that, if a correct process $P_i$ starts the dissemination of its vector at time $t_i$, then every correct process acquires a hash value and a threshold signature by time $\max(\text{GST}, t_i) + \delta \cdot n^i + 3\delta$.
We emphasize that the $\max(\text{GST}, t_i) + \delta \cdot n^i + 3\delta$ time is not tight; we choose it for simplicity of presentation.

\begin{lemma} \label{lemma:timely_termination}
If a correct process $P_i$ starts the dissemination of its vector at time $t_i$, every correct process acquires a hash value and a threshold signature by time $\max(\text{GST}, t_i) + \delta \cdot n^i + 3\delta$.
\end{lemma}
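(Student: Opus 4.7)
The plan is to establish the deadline in two stages. First, I will show that \emph{some} correct process acquires a pair $(H, \mathit{tsig})$ by time $T_0 = \max(\text{GST}, t_i) + \delta \cdot n^i + 2\delta$. Second, I will invoke the re-broadcast of \textsc{confirm} at line~\ref{line:forward_storage_proof} to propagate the acquire event to every other correct process within one additional message delay $\delta$ after GST. The two stages combined yield the stated bound. Throughout, I will write $t_i^\star := \max(\text{GST}, t_i)$; after $t_i^\star$ every message is delivered within $\delta$, and sends issued before GST are delivered by $\text{GST}+\delta$, so the same timing bounds apply with $t_i^\star$ in place of $t_i$.

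The first stage proceeds by contradiction. Assume no correct process acquires before $T_0$; then no correct process executes line~\ref{line:stop_participating} before $T_0$, and in particular $P_i$ keeps participating in slow broadcast and issues all $n$ \textsc{slow\_broadcast} sends. By the schedule of \Cref{algorithm:slow_1}, the $j$-th send is dispatched by time $t_i^\star + (j-1) \delta n^{i-1}$, so the last send is delivered by $t_i^\star + (n-1) \delta n^{i-1} + \delta \leq t_i^\star + \delta n^i$, using $(n-1) n^{i-1} + 1 \leq n^i$ for $i \geq 1$. Since no correct process has stopped, each of the at least $n-t$ correct processes executes the delivery handler and replies to $P_i$ with a partial signature of $H_i$ (line~\ref{line:send_stored}). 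All these \textsc{stored} replies reach $P_i$ by $t_i^\star + \delta n^i + \delta$, at which point $P_i$ combines them into a $(n-t)$-threshold signature and best-effort broadcasts \textsc{confirm}. This \textsc{confirm} is delivered to every correct process by $T_0 = t_i^\star + \delta n^i + 2\delta$, contradicting the assumption.

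For the second stage, let $P_k$ be the first correct process to trigger $\mathsf{acquire}$, at some time $\tau^\star \leq T_0$. Just before triggering, $P_k$ best-effort broadcasts the same \textsc{confirm} it received (line~\ref{line:forward_storage_proof}). Hence every correct process delivers this valid \textsc{confirm} by time $\max(\text{GST}, \tau^\star) + \delta \leq T_0 + \delta = \max(\text{GST}, t_i) + \delta \cdot n^i + 3\delta$, verifies its threshold signature, and acquires at line~\ref{line:obtain_hash}, establishing the lemma.

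The main technical obstacle is the interplay between the two stages: a correct process $P_j$ might stop participating (at line~\ref{line:stop_participating}) before $P_i$ finishes its own dissemination, potentially denying $P_i$ one of the \textsc{stored} replies. The first-stage contradiction resolves this implicitly: if any correct process had stopped before $T_0$, its rebroadcast of \textsc{confirm} would have reached every correct process (including $P_i$) at an earlier time, so the second stage already applies with a strictly smaller $\tau^\star$, still within the claimed bound. A secondary care point is the geometric schedule of slow broadcast, for which the arithmetic inequality $(n-1)n^{i-1} + 1 \leq n^i$ tightly bounds the total broadcast duration by $\delta n^i$.
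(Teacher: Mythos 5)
Your proof is correct and follows essentially the same route as the paper: the paper splits into the two cases ``some correct process acquires by $\max(\text{GST}, t_i) + \delta \cdot n^i + 2\delta$'' (then the rebroadcast at line~\ref{line:forward_storage_proof} adds one $\delta$) and ``no correct process acquires by then'' (then nobody stops participating, $P_i$'s slow broadcast completes, the $n-t$ \textsc{stored} replies arrive, and $P_i$'s \textsc{confirm} reaches everyone), which is exactly your stage~2 and stage~1 respectively, with your contradiction argument being a repackaging of the paper's second case. The only difference is that you make the slow-broadcast timing arithmetic explicit, which the paper elides by noting the bound is not tight.
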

\begin{proof}
We separate the proof into two cases:
\begin{compactitem}
    \item There exists a correct process which acquires a hash-signature pair by time $\max(\text{GST}, t_i) + \delta \cdot n^i + 2\delta$.
    In this case, the statement of the lemma holds as every correct process acquires a pair by time $\max(\text{GST}, t_i) + \delta \cdot n^i + 3\delta$ due to the ``rebroadcasting step'' (line~\ref{line:forward_storage_proof}).
    
    \item There does not exist a correct process which acquires a hash-signature pair by time $T = \max(\text{GST}, t_i) + \delta \cdot n^i + 2\delta$.
    Hence, no process stops participating in vector dissemination by time $T$, i.e., no process executes line~\ref{line:stop_participating} by time $T$.
    Every correct process receives a \textsc{slow\_broadcast} message from process $P_i$ by time $\max(\text{GST}, t_i) + \delta \cdot n^i + \delta$.

    Thus, by time $\max(\text{GST}, t_i) + \delta \cdot n^i + 2\delta$, $P_i$ receives $n - t$ partial signatures (line~\ref{line:receive_acknowledgements_vector_dissemination}).
    Finally, by time $\max(\text{GST}, t_i) + \delta \cdot n^i + 3\delta$, every correct process receives a \textsc{confirm} message from $P_i$ (line~\ref{line:receive_storage_proof}), and acquires a hash-signature pair (line~\ref{line:obtain_hash}).
    In this case, the statement of the lemma holds.
\end{compactitem}
The lemma holds as its statement is satisfies in both cases.
\end{proof}

The next lemma proves that \Cref{algorithm:prep} satisfies termination.

\begin{lemma} \label{lemma:termination_prep}
\Cref{algorithm:prep} satisfies termination.
\end{lemma}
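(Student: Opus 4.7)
The approach is to reduce termination directly to the quantitative statement already established in \Cref{lemma:timely_termination}. First, I would observe that in any intended use of the vector dissemination primitive (and, in particular, inside our improved vector consensus algorithm), every correct process eventually invokes $\mathsf{disseminate}(\cdot)$; since $n > 3t$ and $t > 0$, there exists at least one correct process, say $P_i$, that invokes $\mathsf{disseminate}(\cdot)$ at some finite (local) time $t_i$.

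Then, applying \Cref{lemma:timely_termination} to $P_i$ immediately yields that every correct process acquires a hash-signature pair by time $\max(\text{GST}, t_i) + \delta \cdot n^i + 3\delta$, which is finite because GST, $t_i$, $\delta$, and $n$ are all finite. This directly establishes termination.

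The only subtlety worth checking---and the main (very mild) potential obstacle---is that other correct processes may already acquire a hash-signature pair \emph{earlier} than the bound above, either via the $\mathit{beb}$-rebroadcast (line~\ref{line:forward_storage_proof}) of a \textsc{confirm} message originating from a different correct process $P_j$ with $j < i$, or because $P_i$ itself stops participating (line~\ref{line:stop_participating}) in vector dissemination (and hence in slow broadcast) before completing the transmission of its vector to all processes. However, the conclusion of \Cref{lemma:timely_termination} is monotone in a trivial sense---it asserts that \emph{every} correct process acquires by a certain deadline---so earlier acquisitions only reinforce the conclusion and cannot invalidate it. Hence termination follows with no further work.
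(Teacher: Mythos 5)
Your proof is correct and takes the same route as the paper, which simply observes that termination follows directly from \Cref{lemma:timely_termination}; your additional remarks (existence of a correct disseminating process, harmlessness of early acquisitions) only make explicit what the paper leaves implicit.
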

\begin{proof}
Follows directly from \Cref{lemma:timely_termination}.
\end{proof}

Next, we prove integrity.

\begin{lemma} \label{lemma:integrity}
\Cref{algorithm:prep} satisfies integrity.
\end{lemma}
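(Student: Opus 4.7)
The plan is to proceed by direct inspection of Algorithm~\ref{algorithm:prep}, since integrity is essentially a syntactic consequence of how \textsf{acquire} is triggered. Specifically, a correct process $P_i$ triggers $\mathsf{acquire}(H', \mathit{tsig}')$ at exactly one place in the code: line~\ref{line:obtain_hash}. That line is executed only inside the handler at line~\ref{line:receive_storage_proof}, and only inside the conditional at line~\ref{line:check_storage_proof}, which explicitly tests whether $\mathit{tsig}'$ is a valid $(n-t)$-threshold signature for $H'$.

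First, I would observe that there is no other control path in the pseudocode that invokes $\mathsf{acquire}$; in particular, the $\mathsf{disseminate}$, $\mathit{slow}.\mathsf{deliver}$, and $n-t$-acknowledgement handlers never call $\mathsf{acquire}$. Hence any $(H,\mathit{tsig})$ acquired by a correct process must have been received inside a $\langle \textsc{confirm}, H, \mathit{tsig}\rangle$ message that passed the validity check. By the validity check, $\mathsf{verify\_sig}(H,\mathit{tsig}) = \mathit{true}$, which is precisely the integrity guarantee.

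The only step requiring a brief justification is that the predicate tested at line~\ref{line:check_storage_proof} coincides with the predicate $\mathsf{verify\_sig}(H', \mathit{tsig}')$ appearing in the statement of integrity; this is immediate from the threshold signature scheme's specification, where verifying that $\mathit{tsig}'$ is a valid $(n-t)$-threshold signature for $H'$ is exactly what $\mathsf{verify\_sig}$ computes. No obstacle arises here, and in particular I do not need to reason about faulty senders forging signatures, because the unforgeability of the threshold signature scheme is not required for integrity; the check is performed locally by the correct recipient regardless of who sent the \textsc{confirm} message.

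Therefore, the proof fits in one or two sentences: trace the unique code path that produces an $\mathsf{acquire}$ event, note the guard immediately above it, and conclude. The interesting lemma is \emph{redundancy}, already handled in Lemma~\ref{lemma:redundancy}; integrity itself is essentially free.
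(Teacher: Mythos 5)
Your proof is correct and takes exactly the same route as the paper, which simply states that integrity follows from the check at line~\ref{line:check_storage_proof}; your version merely spells out the (valid) observation that this guarded branch is the unique code path leading to an $\mathsf{acquire}$ event.
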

\begin{proof}
Follows from the check at line~\ref{line:check_storage_proof}.
\end{proof}

Therefore, \Cref{algorithm:prep} solves the vector dissemination problem.

\begin{theorem}
\Cref{algorithm:prep} is correct.
\end{theorem}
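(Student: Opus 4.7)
The plan is to observe that the theorem is simply the conjunction of the three properties that constitute the specification of vector dissemination, and that each of these three properties has already been established by the preceding lemmas in this subsection. So the proof reduces to a one-line assembly step. Concretely, the vector dissemination problem was defined to require \emph{Termination}, \emph{Integrity}, and \emph{Redundancy}, and there is exactly one lemma per property in the development: \Cref{lemma:termination_prep} for \emph{Termination} (which itself is a direct consequence of the quantitative bound in \Cref{lemma:timely_termination}), \Cref{lemma:integrity} for \emph{Integrity}, and \Cref{lemma:redundancy} for \emph{Redundancy}.

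Given this structure, I would write the proof as a single short paragraph that invokes the three lemmas in the order in which the properties appear in the specification. First I would cite \Cref{lemma:termination_prep} to conclude that every correct process eventually acquires a hash-signature pair. Then I would cite \Cref{lemma:integrity} to conclude that any such acquired pair $(H, \mathit{tsig})$ satisfies $\mathsf{verify\_sig}(H, \mathit{tsig}) = \mathit{true}$, which rules out spurious acquisitions. Finally I would cite \Cref{lemma:redundancy} to conclude that whenever a correct process caches a threshold signature for a hash $H$, at least $t+1$ correct processes have cached a vector that hashes to $H$, which is the redundancy guarantee that later callers of this building block will rely on.

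There is essentially no mathematical obstacle here: the work has already been done by the three lemmas, and none of them leaves a residual property to verify. The only thing worth double-checking is that the specification of vector dissemination introduced at the top of \Cref{subsubsection:vector_dissemination} lists exactly these three properties (and no fourth, implicit property such as well-formedness of signatures or uniqueness of acquisition), so that the three lemmas genuinely cover the whole specification. Since the definition in the excerpt enumerates precisely \emph{Termination}, \emph{Integrity}, and \emph{Redundancy}, the assembly is complete, and the theorem follows immediately.
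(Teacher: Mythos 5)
Your proposal matches the paper exactly: the paper states the theorem immediately after the three lemmas (\Cref{lemma:termination_prep}, \Cref{lemma:integrity}, \Cref{lemma:redundancy}) with only the remark ``Therefore, \Cref{algorithm:prep} solves the vector dissemination problem,'' i.e., the same one-line assembly of the three properties in the specification. Your extra check that the specification lists exactly these three properties and no more is a sensible bit of diligence, but the argument is the same.
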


Lastly, we prove that the communication complexity of \Cref{algorithm:prep} is $O(n^2)$.
Recall that the communication complexity denotes the number of words sent by correct processes at and after GST, where a word consists of a constant number of values, hashes and signatures.

\begin{theorem}\label{theorem:quadratic_dissemination}
Let no correct process start the dissemination of its vector after time $\text{GST} +  \delta$.
Then, the communication complexity of \Cref{algorithm:prep} is $O(n^2)$. 
\end{theorem}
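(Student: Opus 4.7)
The plan is to bound separately the communication contributed by each of the three message types in \Cref{algorithm:prep}: the \textsc{slow\_broadcast} messages, the \textsc{stored} acknowledgements, and the \textsc{confirm} messages. The decisive ingredient will be \Cref{lemma:timely_termination}: applying it to $P_1$, whose dissemination starts at some $t_1 \leq \text{GST} + \delta$ by hypothesis, yields a uniform ``deadline'' $T^\star = \text{GST} + O(\delta n)$ by which every correct process has acquired a hash-signature pair and hence stopped participating (line~\ref{line:stop_participating}), in both \Cref{algorithm:prep} and the underlying \Cref{algorithm:slow_1}. Every subsequent counting argument will play the exponentially-spaced wait times of slow broadcast against this cap.

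First I would bound the \textsc{slow\_broadcast} contribution. Process $P_i$ emits one slow-broadcast every $\delta \cdot n^{(i-1)}$ time units and stops by $T^\star$, so the number of such messages it sends after GST is at most $O\!\big((T^\star - \text{GST})/(\delta \cdot n^{(i-1)})\big) = O(n / n^{(i-1)})$. For $i = 1$ this is $O(n)$ (and anyway capped at $n$ by the outer loop of \Cref{algorithm:slow_1}); for $i \geq 2$ the exponential spacing reduces it to $O(1)$. Summing over the $n$ processes, the total number of slow-broadcast messages sent by correct processes after GST is $O(n)$. Since each such message carries a vector of $n - t = O(n)$ values, slow broadcast contributes $O(n) \cdot O(n) = O(n^2)$ words.

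Next I would dispatch the remaining two types. A correct process sends at most one \textsc{stored} reply per distinct slow-broadcast sender, guarded by the $\mathit{disseminated}_i$ check at line~\ref{line:verify_vector_signature}, and each \textsc{stored} message has constant word size; since the number of slow-broadcasts received by correct processes is $O(n)$, the \textsc{stored} contribution is $O(n)$ words. For \textsc{confirm}, each correct process invokes $\mathit{beb}.\mathsf{broadcast}$ at most twice -- once when it assembles a threshold signature for its own hash (line~\ref{line:bcast_storage_proof}), and once when it first forwards a valid \textsc{confirm} (line~\ref{line:forward_storage_proof}), after which it halts -- so there are $O(n)$ best-effort broadcasts of constant-size payloads, each expanding to $n$ point-to-point sends, for a total of $O(n^2)$ words. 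Combining the three bounds gives $O(n^2) + O(n) + O(n^2) = O(n^2)$.

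The only subtle step I expect is the $O(1)$ bound on slow-broadcasts from $P_i$ with $i \geq 2$: one must be careful with start-time offsets, because $P_i$ may begin its slow broadcast before GST. The argument I would spell out is that, even in that case, the first post-GST send occurs within one inter-send gap of GST, and for $i \geq 2$ the gap $\delta \cdot n^{(i-1)} \geq \delta n$ is comparable to or larger than the entire window $[\text{GST}, T^\star]$, so at most two sends fall inside it; everywhere else the accounting is routine.
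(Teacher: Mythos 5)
Your overall strategy is the same as the paper's: use \Cref{lemma:timely_termination} to pin down a deadline by which every correct process has stopped participating, and then play the exponentially spaced waiting steps of slow broadcast against that deadline so that at most one correct process sends more than $O(1)$ \textsc{slow\_broadcast} messages after GST. Your explicit accounting of the \textsc{stored} and \textsc{confirm} traffic is actually more complete than the paper's, which only tallies the \textsc{slow\_broadcast} messages.

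There is, however, one genuine flaw: you anchor the deadline by ``applying \Cref{lemma:timely_termination} to $P_1$, whose dissemination starts at some $t_1 \leq \text{GST}+\delta$ by hypothesis.'' The hypothesis constrains only \emph{correct} processes, and \Cref{lemma:timely_termination} is stated only for a \emph{correct} $P_i$; nothing guarantees that $P_1$ is correct, or that it disseminates anything at all. If $P_1,\dots,P_{i^*-1}$ are Byzantine and the least-index correct process is $P_{i^*}$, the deadline the lemma actually gives is $\text{GST}+\delta\cdot n^{i^*}+O(\delta)$, not $\text{GST}+O(\delta n)$, so both your claimed $T^\star$ and the subsequent step ``for $i\geq 2$ the gap $\delta\cdot n^{i-1}\geq \delta n$ is comparable to or larger than the entire window'' are false in general. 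The paper avoids this by letting $i$ be the \emph{minimum index of a correct process that sends a \textsc{slow\_broadcast} message at or after GST} (handling separately the degenerate case where no such process exists): the window then has length $\delta\cdot n^{i}+O(\delta)$, every other correct sender $P_j$ has $j>i$ and hence an inter-send gap $\delta\cdot n^{j-1}\geq\delta\cdot n^{i}$ that matches the window, and correct processes with smaller index contribute nothing post-GST by minimality. Your argument survives verbatim once ``$P_1$'' is replaced by this process, so the gap is repairable, but as written the quantitative claims do not hold. A second, minor point: your $O(n)$ bound on \textsc{stored} words only counts replies to slow-broadcasts \emph{sent by correct processes}; Byzantine senders can elicit up to one \textsc{stored} reply from each correct process as well, giving $O(n^2)$ constant-size messages — still within the $O(n^2)$ budget, but the stated count is an undercount.
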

\begin{proof}
Let $i$ be the minimum index such that (1) process $P_i$ is correct, and (2) $P_i$ sends a \textsc{slow\_broadcast} message at some time $\geq \text{GST}$.
If $i$ does not exist, the lemma trivially holds.

Let $t_i$ denote the time at which $P_i$ starts the dissemination of its vector (line~\ref{line:disseminate_vector}).
By assumption, $t_i \leq \text{GST} + \delta$.
Every correct process acquires a hash-signature pair by time $\max(\text{GST}, t_i) + \delta \cdot n^i + 3\delta$ (by \Cref{lemma:timely_termination}).
Thus, as $t_i \leq \text{GST} + \delta$, every correct process acquires a hash-signature pair by time $\text{GST} + \delta \cdot n^i + 4\delta$.
Moreover, by time $\text{GST} + \delta \cdot n^i + 4\delta$, all correct processes stop sending \textsc{slow\_broadcast} messages (due to line~\ref{line:stop_participating}).

Let $P_j$ be a correct process such that $j > i$.
Due to the slow broadcast primitive (\Cref{algorithm:slow_1}), $P_j$ has a ``waiting step'' of (at least) $\delta \cdot n^i $ time (after GST).
Therefore, during the $[\text{GST}, \text{GST} + \delta \cdot n^i + 4\delta]$ period, $P_j$ can send only $O(1)$ \textsc{slow\_broadcast} messages.
Thus, at most one correct process (i.e., $P_i$) sends more than $O(1)$ \textsc{slow\_broadcast} messages during the $[\text{GST}, \text{GST} + \delta \cdot n^i + 4\delta]$ period; that process sends at most $n$ \textsc{slow\_broadcast} messages.
As each message is of size $O(n)$ (since it carries a vector of $n - t$ values), the communication complexity of \Cref{algorithm:prep} is $O(n) \cdot O(1) \cdot O(n) + 1 \cdot O(n) \cdot O(n) = O(n^2)$.
\end{proof}

\subsubsection{Vector Consensus with $O(n^2\log n)$ Communication Complexity}
Finally, we are ready to present our vector consensus algorithm (\Cref{algorithm:better_interactive_consistency}) with subcubic communication complexity.
Our algorithm consists of three building blocks: (1) vector dissemination (\Cref{subsubsection:vector_dissemination}), (2) \textsc{Quad} (\Cref{subsubsection:interactive_consistency}), and (3) \textsc{add}~\cite{das2021asynchronous}, an algorithm for asynchronous data dissemination.
In \Cref{algorithm:better_interactive_consistency}, we rely on a specific instance of \textsc{Quad} in which (1) each proposal value is a hash value, and (2) given a hash value $H$ and a \textsc{Quad}'s proof $\Sigma$, $\mathsf{verify}(H, \Sigma) = \mathit{true}$ if and only if $\Sigma$ is a valid $(n - t)$-threshold signature for $H$.
Below, we briefly explain \textsc{add}.

\paragraph{\textsc{add}.}
This algorithm solves the \emph{data dissemination}~\cite{das2021asynchronous} problem defined in the following way.
Let $M$ be a data blob which is an input of (at least) $t + 1$ correct processes; other correct processes have $\bot$ as their input.
The data dissemination problem requires every correct process to eventually output $M$, and no other message.
The key feature of \textsc{add} is that it solves the problem with $O(n^2 \log n)$ communication complexity.
(For the full details on \textsc{add}, see~\cite{das2021asynchronous}.)

\paragraph{Algorithm description.}
We give the description of \Cref{algorithm:better_interactive_consistency} from the perspective of a correct process $P_i$.
When $P_i$ proposes its value (line~\ref{line:propose_log}), it disseminates the value (using the best-effort broadcast primitive) to all processes (line~\ref{line:bcast_proposal}).
Once $P_i$ receives proposals of $n - t$ distinct processes (line~\ref{line:received_n-t_proposals_before_dissemination}), it constructs an input configuration (line~\ref{line:construct_vector_before_dissemination}), and starts disseminating it (line~\ref{line:disseminate}).\footnote{Recall that this input configuration is actually a vector of $n - t$ values.}

When $P_i$ acquires a hash value $H$ and a threshold signature $\mathit{tsig}$ (line~\ref{line:prep_store}), $P_i$ proposes $(H, \mathit{tsig})$ to \textsc{Quad} (line~\ref{line:quad_propose}).
Observe that $\mathsf{verify}(H, \mathit{tsig}) = \mathit{true}$ (due to the integrity property of vector dissemination).
Once $P_i$ decides from \textsc{Quad} (line~\ref{line:quad_decide}), it starts \textsc{add} (line~\ref{line:add_input}).
Specifically, $P_i$ checks whether it has cached an input configuration whose hash value is $H'$ (line~\ref{line:check_cache}).
If so, $P_i$ inputs the input configuration to \textsc{add}; otherwise, $P_i$ inputs $\bot$.
Once $P_i$ outputs an input configuration from \textsc{add} (line~\ref{line:add_output}), it decides it (line~\ref{line:decide_better}).

\begin{algorithm*}
\caption{$O(n^2 \log n)$ Vector Consensus: Pseudocode (for process $P_i$)}
\label{algorithm:better_interactive_consistency}
\footnotesize
\begin{algorithmic} [1]
\State \textbf{Uses:}
\State \hskip2em Best-Effort Broadcast~\cite{cachin2011introduction}, \textbf{instance} $\mathit{beb}$ \BlueComment{broadcast with no guarantees if the sender is faulty}
\State \hskip2em Vector Dissemination, \textbf{instance} $\mathit{disseminator}$ \BlueComment{see \Cref{algorithm:prep}}
\State \hskip2em \textsc{Quad}~\cite{civit2022byzantine}, \textbf{instance} $\mathit{quad}$
\State \hskip2em \textsc{add}~\cite{das2021asynchronous}, \textbf{instance} $\mathit{add}$

\medskip
\State \textbf{upon} $\mathsf{init}$:
\State \hskip2em $\mathsf{Integer}$ $\mathit{received\_proposals}_i \gets 0$ \BlueComment{the number of received proposals}
\State \hskip2em $\mathsf{Map}(\mathsf{Process} \to \mathcal{V}_I)$ $\mathit{proposals}_i \gets \text{empty}$ \BlueComment{received proposals}
\State \hskip2em $\mathsf{Map}(\mathsf{Process} \to \mathsf{Message})$ $\mathit{messages}_i \gets \text{empty}$ \BlueComment{received \textsc{proposal} messages}

\medskip
\State \textbf{upon} $\mathsf{propose}(v \in \mathcal{V}_I)$: \label{line:propose_log}
\State \hskip2em \textbf{invoke} $\mathit{beb}.\mathsf{broadcast}\big( \langle \textsc{proposal}, v \rangle_{\sigma_i} \big)$ \label{line:bcast_proposal} \BlueComment{broadcast a signed proposal}

\medskip
\State \textbf{upon} reception of $\mathsf{Message}$ $m = \langle \textsc{proposal}, v_j \in \mathcal{V}_I \rangle_{\sigma_j}$ from process $P_j$ and $\mathit{received\_proposals}_i < n - t$:
\State \hskip2em $\mathit{received\_proposals}_i \gets \mathit{received\_proposals}_i + 1$
\State \hskip2em $\mathit{proposals}_i[P_j] \gets v_j$
\State \hskip2em $\mathit{messages}_i[P_j] \gets m$

\State \hskip2em \textbf{if} $\mathit{received\_proposals}_i = n - t$: \label{line:received_n-t_proposals_before_dissemination} \BlueComment{received $n - t$ proposals; can start disseminating}
\State \hskip4em $\mathsf{Input\_Configuration}$ $\mathit{vector} \gets $ input configuration constructed from $\mathit{proposals}_i$ \label{line:construct_vector_before_dissemination}
\State \hskip4em \textbf{invoke} $\mathit{disseminator}.\mathsf{disseminate}(\mathit{vector})$ \label{line:disseminate}

\medskip
\State \textbf{upon} $\mathit{disseminator}.\mathsf{acquire}\big( (\mathsf{Hash\_Value}~\mathit{H}, \mathsf{T\_Signature}~\mathit{tsig}) \big)$: \label{line:prep_store}
\State \hskip2em \textbf{if} have not yet proposed to \textsc{Quad}:
\State \hskip4em \textbf{invoke} $\mathit{quad}.\mathsf{propose}\big( (\mathit{H}, \mathit{tsig}) \big)$ \label{line:quad_propose}

\medskip
\State \textbf{upon} $\mathit{quad}.\mathsf{decide}\big( (\mathsf{Hash\_Value}~\mathit{H}', \mathsf{T\_Signature}~\mathit{tsig}') \big)$: \label{line:quad_decide}
\State \hskip2em $\mathsf{Input\_Configuration}$ $\mathit{vector}' \gets$ a cached vector whose hash value is $H'$ \BlueComment{can be $\bot$} \label{line:check_cache}
\State \hskip2em \textbf{invoke} $\mathit{add}.\mathsf{input}( \mathit{vector}' )$ \label{line:add_input}

\medskip
\State \textbf{upon} $\mathit{add}.\mathsf{output}\big(\mathsf{Input\_Configuration} \text{ } \mathit{vector}'' \big)$: \label{line:add_output}
\State \hskip2em \textbf{trigger} $\mathsf{decide}(\mathit{vector}'')$ \label{line:decide_better}

\end{algorithmic}
\end{algorithm*}

\paragraph{Correctness \& complexity.}

We start by proving that (1) all correct processes eventually output a non-$\bot$ value from \textsc{add}, and (2) no two correct processes output different values from \textsc{add}.

\begin{lemma} \label{lemma:eventually_add}
The following holds:
\begin{compactitem}
    \item Every correct process eventually outputs a non-$\bot$ value from \textsc{add} (line~\ref{line:add_output}); moreover, the output value was an input (to \textsc{add}) of a correct process.

    \item No two correct processes output different input configurations from \textsc{add} (line~\ref{line:add_output}).
\end{compactitem}
\end{lemma}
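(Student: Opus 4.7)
The plan is to trace invariants through the pipeline of building blocks (vector dissemination $\to$ \textsc{Quad} $\to$ \textsc{add}) and then invoke the correctness properties of \textsc{add} on the inputs that correct processes supply.

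First, I would establish that every correct process eventually reaches line~\ref{line:add_output}. Because there are at least $n-t$ correct processes, each correct process eventually receives $n-t$ \textsc{proposal} messages and invokes $\mathit{disseminator}.\mathsf{disseminate}$ at line~\ref{line:disseminate}. By termination of vector dissemination (\Cref{lemma:termination_prep}), every correct process eventually acquires some pair $(H,\mathit{tsig})$, and by integrity (\Cref{lemma:integrity}) this pair satisfies $\mathsf{verify}(H,\mathit{tsig})=\mathit{true}$. Hence every correct process invokes $\mathit{quad}.\mathsf{propose}$ with a valid input at line~\ref{line:quad_propose}. By \emph{Termination} and \emph{Agreement} of \textsc{Quad}, every correct process decides the same pair, denoted $(H^\star,\mathit{tsig}^\star)$, at line~\ref{line:quad_decide}, and therefore invokes $\mathit{add}.\mathsf{input}$ at line~\ref{line:add_input}; \textsc{add}'s termination then guarantees an output at line~\ref{line:add_output}.

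Second, I would identify a unique preimage of $H^\star$ that is cached by enough correct processes. By the validity property of \textsc{Quad} (instantiated with our $\mathsf{verify}$ function), $\mathit{tsig}^\star$ is a valid $(n-t)$-threshold signature on $H^\star$, so at least $n-t$ distinct processes partially signed $H^\star$ at line~\ref{line:send_stored} of \Cref{algorithm:prep}; at least $(n-t)-t \geq t+1$ of these are correct because $n>3t$. Each such correct process first cached, at line~\ref{line:store_preimage}, a vector whose hash is $H^\star$. By collision resistance of $\mathsf{hash}$, all these cached vectors are identical; let $\mathit{vec}^\star$ denote this unique preimage.

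Third, I would apply the guarantees of \textsc{add}. The $t+1$ correct processes identified above all hold $\mathit{vec}^\star$ in their local cache when they reach line~\ref{line:check_cache}, hence they pass $\mathit{vec}^\star$ as input at line~\ref{line:add_input}; every remaining correct process passes either $\mathit{vec}^\star$ or $\bot$. This matches the precondition of data dissemination, so by correctness of \textsc{add} every correct process outputs exactly $\mathit{vec}^\star$. Both parts of the lemma follow: every correct process outputs the same non-$\bot$ value, and that value was the input of a correct process.

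The main obstacle will be the second step: tying the threshold signature decided by \textsc{Quad} back to the specific preimage that correct processes subsequently supply to \textsc{add}. The first-reception guard at line~\ref{line:verify_vector_signature} of \Cref{algorithm:prep} ensures that the cached vector associated with a given sender is never overwritten, so the vector stored at line~\ref{line:store_preimage} before signing $H^\star$ at line~\ref{line:send_stored} is exactly the vector retrieved at line~\ref{line:check_cache}; collision resistance then pins this preimage down uniformly across all $t+1$ correct signers.
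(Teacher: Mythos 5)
Your proof is correct and follows essentially the same route as the paper's: the same termination chain through vector dissemination, \textsc{Quad}, and \textsc{add}, followed by the observation that the decided threshold signature forces at least $t+1$ correct processes to have cached the unique preimage, which meets \textsc{add}'s precondition. The only difference is cosmetic: where the paper invokes the redundancy property of vector dissemination (\Cref{lemma:redundancy}) as a black box, you re-derive it inline and make the collision-resistance step explicit.
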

\begin{proof}
First, every correct process broadcasts its proposal (line \ref{line:bcast_proposal}).
Thus, every correct process eventually receives $n-t$ proposals (line \ref{line:received_n-t_proposals_before_dissemination}), and starts the dissemination of an input configuration (line~\ref{line:disseminate}).
Due to the termination property of vector dissemination (\Cref{lemma:termination_prep}), every correct process acquires a hash-signature pair (line~\ref{line:prep_store}).
Hence, every correct process eventually proposes to \textsc{Quad} (line~\ref{line:quad_propose}).
Due to \emph{Termination} of \textsc{Quad}, every correct process eventually decides from \textsc{Quad} (line~\ref{line:quad_decide}), and starts executing \textsc{add} (line~\ref{line:add_input}).

As the pair decided from \textsc{Quad} (recall that \textsc{Quad} satisfies \emph{Agreement}) includes a threshold signature, at least $t + 1$ correct processes have cached an input configuration whose hash value is decided from \textsc{Quad} (by redundancy of vector dissemination).
Therefore, all of these correct processes input the same non-$\bot$ value to \textsc{add} (line~\ref{line:add_input}); let that value be $\mathit{vec}$.
Moreover, no correct process inputs a different non-$\bot$ value to \textsc{add}.
Therefore, the conditions required by \textsc{add} are met, which implies that all correct processes eventually output $\mathit{vec} \neq \bot$ from \textsc{add} (line~\ref{line:add_output}).
\end{proof}

The following theorem proves that \Cref{algorithm:better_interactive_consistency} is correct.

\begin{theorem}
\Cref{algorithm:better_interactive_consistency} is correct.
\end{theorem}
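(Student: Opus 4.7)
My plan is to verify the three requirements of vector consensus in turn --- \emph{Termination}, \emph{Agreement}, and \emph{Vector Validity} --- and most of the work will piggyback on Lemma~\ref{lemma:eventually_add}, which has already done the heavy lifting for the first two properties.

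For \emph{Termination}, I would simply observe that Lemma~\ref{lemma:eventually_add} guarantees that every correct process eventually outputs a non-$\bot$ value from \textsc{add} at line~\ref{line:add_output}, and thus decides at line~\ref{line:decide_better}. For \emph{Agreement}, the same lemma asserts that no two correct processes output different input configurations from \textsc{add}, so the decisions at line~\ref{line:decide_better} must coincide across correct processes.

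The bulk of the work lies in \emph{Vector Validity}, and this is the step I expect to require the most care. My plan here is to trace the decided vector backwards through the protocol. Suppose a correct process decides a vector $\mathit{vec}$ containing a process-proposal pair $(P_j, v_j)$ with $P_j$ correct. By Lemma~\ref{lemma:eventually_add}, $\mathit{vec}$ was input to \textsc{add} by some correct process $P_k$, and hence was cached by $P_k$ at line~\ref{line:store_preimage} of Algorithm~\ref{algorithm:prep} after delivery via slow broadcast. The key observation is that entries of a disseminated vector are tied to signed \textsc{proposal} messages (broadcast at line~\ref{line:bcast_proposal}): since signatures of correct processes are unforgeable and a correct $P_j$ signs exactly one \textsc{proposal} message, corresponding to its own input, the presence of $(P_j, v_j)$ in $\mathit{vec}$ forces $P_j$ to have proposed $v_j$.

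The main obstacle will be pinning down the precise dissemination format so that the signature argument actually goes through. For the reduction to work, the vector carried by slow broadcast must include the signed \textsc{proposal} messages, and correct processes must verify these signatures before caching at line~\ref{line:store_preimage}; otherwise a Byzantine disseminator could slow-broadcast a vector that hashes to a value signed by enough correct processes yet contains a forged entry for some correct $P_j$. I would therefore make this verification step explicit in Algorithm~\ref{algorithm:prep} (equivalently, treat the disseminated object as a vector of signed \textsc{proposal} messages), after which the Vector Validity argument closes cleanly and the communication accounting of Theorem~\ref{theorem:quadratic_dissemination} is unaffected since signatures add only a constant blow-up per entry.
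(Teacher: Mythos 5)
Your proof matches the paper's: \emph{Termination} and \emph{Agreement} are deduced from \Cref{lemma:eventually_add}, and \emph{Vector Validity} is obtained by tracing the decided vector back to a cached vector whose entries are backed by signed \textsc{proposal} messages. The verification-before-caching step you flag as needing to be made explicit is exactly the check the paper acknowledges it ``omits for brevity'' in \Cref{algorithm:prep}, so there is no gap relative to the paper's own argument.
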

\begin{proof}
\emph{Agreement} and \emph{Termination} follow from \Cref{lemma:eventually_add}.

It is left to prove \emph{Vector Validity}.
Let $\mathit{vec}'$ be an input configuration of $n - t$ proposals decided by a correct process (line~\ref{line:decide_better}).
Hence, $\mathit{vec}'$ is an input (to \textsc{add}) of a correct process (by \Cref{lemma:eventually_add}), which implies that some correct process has previously cached $\mathit{vec}'$.
Before a correct process caches a vector (\Cref{algorithm:prep}), it verifies that it is associated with corresponding \textsc{proposal} messages; we omit this check for brevity.
As correct processes only send \textsc{proposal} messages for their proposals (line~\ref{line:bcast_proposal}), \emph{Vector Validity} is satisfied.
\end{proof}

Lastly, we show the communication complexity of \Cref{algorithm:better_interactive_consistency}.

\begin{theorem}
The communication complexity of \Cref{algorithm:better_interactive_consistency} is $O(n^2 \log n)$.
\end{theorem}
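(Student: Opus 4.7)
The plan is to bound the communication of each of the four building blocks used by \Cref{algorithm:better_interactive_consistency} separately and then sum the contributions, verifying along the way that the preconditions required by each sub-protocol's complexity statement are satisfied.

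First, I would handle the \textsc{proposal} broadcasts. Each correct process invokes $\mathit{beb}.\mathsf{broadcast}$ at most once (line~\ref{line:bcast_proposal}), contributing $O(n)$ messages of constant word size, hence $O(n^2)$ words total. Next, I would invoke \Cref{theorem:quadratic_dissemination} for the $\mathit{disseminator}$ instance. To apply it, I must verify that every correct process starts its dissemination by time $\text{GST}+\delta$: since all correct processes invoke $\mathit{beb}.\mathsf{broadcast}$ of their \textsc{proposal} before or at GST (they begin executing by GST), and after GST all in-flight or subsequent messages between correct processes arrive within $\delta$, every correct process has received $n-t$ \textsc{proposal} messages by $\text{GST}+\delta$ and therefore triggers line~\ref{line:disseminate} by then. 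Thus vector dissemination contributes $O(n^2)$ words.

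For \textsc{Quad}, the proposed values are hash-signature pairs of constant word size, so the proofs handed to \textsc{Quad} are $O(1)$ rather than linear; the known complexity of \textsc{Quad} with constant-sized proofs is $O(n^2)$ words (this is the same analysis invoked in~\cite{civit2022byzantine} to argue $O(n^2)$ message complexity, with each message carrying $O(1)$ words here). I also need to remark (as in the proof of \Cref{lemma:eventually_add}) that correct processes start \textsc{Quad} within a bounded time window after GST, so its message-complexity bound applies. Finally, for \textsc{add}, the input supplied on line~\ref{line:add_input} is an input configuration of $n-t$ proposals, i.e.\ a blob of size $O(n)$ words. The guarantees of \textsc{add}~\cite{das2021asynchronous} require that at least $t+1$ correct processes hold the same non-$\bot$ input, which is exactly what the redundancy property of vector dissemination together with \emph{Agreement} of \textsc{Quad} ensures (already established in the proof of \Cref{lemma:eventually_add}); the communication complexity of \textsc{add} on a blob of size $|M|$ is $O(n|M| + n^2\log n)$, which for $|M| = O(n)$ evaluates to $O(n^2\log n)$ words.

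Summing the four contributions yields $O(n^2) + O(n^2) + O(n^2) + O(n^2\log n) = O(n^2\log n)$, proving the theorem. The main obstacle I anticipate is the bookkeeping around the timing preconditions: \Cref{theorem:quadratic_dissemination} is stated only under the assumption that all correct processes start dissemination by $\text{GST}+\delta$, and the \textsc{Quad} complexity bound implicitly assumes correct processes start the instance within $O(\delta)$ of each other after GST. Both facts follow from the reliable delivery of proposals and from \Cref{lemma:timely_termination}, but they must be stated explicitly so that the cited complexity bounds can be invoked black-box. Everything else is a straightforward sum, dominated by the $O(n^2\log n)$ term coming from \textsc{add}.
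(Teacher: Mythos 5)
Your proof is correct and follows essentially the same route as the paper's: decompose the algorithm into its four building blocks (the \textsc{proposal} best-effort broadcasts, vector dissemination via \Cref{theorem:quadratic_dissemination}, \textsc{Quad}, and \textsc{add}), bound each, and sum to $O(n^2\log n)$. Your extra care in verifying the $\text{GST}+\delta$ precondition for \Cref{theorem:quadratic_dissemination} and in spelling out \textsc{add}'s dependence on the blob size is a welcome tightening of the same argument, not a different one.
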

\begin{proof}
The communication complexity of a single best-effort broadcast instance is $O(n)$.
Every correct process starts the dissemination of its vector by time $\text{GST} + \delta$ (as every correct process receives $n - t$ proposals by this time).
Thus, the communication complexity of vector dissemination is $O(n^2)$ (by \Cref{theorem:quadratic_dissemination}).
The communication complexity of \textsc{Quad} is $O(n^2)$.
Moreover, the communication complexity of \textsc{add} is $O(n^2 \log n)$ (see~\cite{das2021asynchronous}).
As \Cref{algorithm:better_interactive_consistency} is a composition of the aforementioned building blocks, its communication complexity is $n \cdot O(n) + O(n^2) + O(n^2) + O(n^2 \log n) = O(n^2 \log n)$.
\end{proof}
\section{Extended Formalism} \label{section:extended_formalim_appendix}

In this section, we give intuition behind an extension of our formalism which is suitable for the analysis of blockchain-specific validity properties, such as \emph{External Validity}~\cite{Cachin2001,BKM19,yin2019hotstuff}.
\emph{External Validity} stipulates that any decided value must satisfy a predetermined logical predicate.
However, the ``difficulty'' of this property is that the logical predicate (usually) verifies a cryptographic proof, which processes might not know a priori (see \Cref{subsection:intution_extended}).

In a nutshell, we make our original formalism more expressive by (1) making the input ($\mathcal{V}_I$) and output ($\mathcal{V}_O$) spaces ``unknown'' to the processes, and (2) taking into account ``proposals'' of faulty processes.
In the rest of the paper:
\begin{compactitem}
    \item We refer to the formalism introduced in the main body of the paper as the ``original formalism''.
    
    \item We refer to the formalism we introduce below as the ``extended formalism''.
\end{compactitem}
We start by giving an intuition behind our extended formalism (\Cref{subsection:intution_extended}).
Then, we introduce some preliminaries (\Cref{subsection:preliminaries_extended}).
Finally, we (incompletely) define our extended formalism (\Cref{subsection:validity_extended}).


\subsection{Intuition} \label{subsection:intution_extended}

In the original formalism, processes know the entire input space $\mathcal{V}_I$ and the entire output space $\mathcal{V}_O$.
That is, processes are able to ``produce'' any value which belongs to $\mathcal{V}_I$ or $\mathcal{V}_O$.
However, this assumption limits the expressiveness of our formalism as it is impossible to describe a Byzantine consensus problem in which input or output spaces are not a priori known.
Let us give an example.

Imagine a committee-based blockchain which establishes two roles:
\begin{compactitem}
    \item \emph{Clients} are the users of the blockchain.
    They issue \emph{signed} transactions to the blockchain.
    
    \item \emph{Servers} are the operating nodes of the blockchain.
    Servers receive signed transactions issued by the clients, and solve the Byzantine consensus problem to agree on the exact order the transactions are processed.
\end{compactitem}
As the servers propose transactions \emph{signed by the clients} and they do not have access to the private keys of the clients, the servers do not know the input space $\mathcal{V}_I$ nor the output space $\mathcal{V}_O$ of the Byzantine consensus problem.
Hence, our original formalism cannot describe the Byzantine consensus problem in the core of the aforementioned blockchain.

\paragraph{Extended vs. original formalism.}
As highlighted above, the main difference between the two formalisms is that the extended one allows us to specify the ``knowledge level'' of the input and output spaces.
In the extended formalism, a process is able to ``learn'' output values by observing input values.
That is, we define a \emph{discovery function} that defines which output values are learned given observed input values.
In the committee-based blockchain example, once a server observes signed (by the issuing clients) transactions $\mathit{tx}_1$ and $\mathit{tx}_2$, it learns the following output values: (1) $\mathit{tx}_1$, (2) $\mathit{tx}_2$, (3) $\mathit{tx}_1 || \mathit{tx}_2$, and (4) $\mathit{tx}_2 || \mathit{tx}_1$.\footnote{We denote by ``$||$'' the concatenation operation.} 

The second difference between the original and the extended formalism is that the extended formalism takes into account ``proposals'' of the faulty processes.
Indeed, the original formalism does not enable us to define which values are admissible given the adversary's knowledge of the input space.
Think of the aforementioned example with a blockchain system.
If no process (correct or faulty) obtains a transaction $\mathit{tx}$, $\mathit{tx}$ cannot be decided.
However, even if \emph{only} a faulty process obtains a transaction $\mathit{tx}$, $\mathit{tx}$ could still be an admissible decision.
This scenario can be described by the extended formalism, and not by the original one.

\subsection{Preliminaries} \label{subsection:preliminaries_extended}

We denote by $\mathcal{V}_I$ the input space of Byzantine consensus.
Similarly, $\mathcal{V}_O$ denotes the output space.

\paragraph{Membership functions.}
We define two \emph{membership functions}:
\begin{compactitem}
    \item $\mathsf{valid\_input} : \{ 0, 1\}^* \to \{ \mathit{true}, \mathit{false} \}$: specifies whether a bit-sequence belongs to the input space $\mathcal{V}_I$.
    
    \item $\mathsf{valid\_output} : \{ 0, 1 \}^* \to \{ \mathit{true}, \mathit{false} \}$: specifies whether a bit-sequence belongs to the output space $\mathcal{V}_O$.
\end{compactitem}
We assume that each process has access to these two functions.
That is, each process can verify whether an arbitrary sequence of bits belongs to the input ($\mathcal{V}_I$) or output ($\mathcal{V}_O$) space.
In the case of a committee-based blockchain (\Cref{subsection:intution_extended}), the membership functions are signature-verification functions.


\paragraph{Discovery function.}
We define a function $\mathsf{discover}$: $2^{\mathcal{V}_I} \to 2^{\mathcal{V}_O}$.
Given a set of proposals $V_I \subseteq \mathcal{V}_I$, $\mathsf{discover}(V_I) \subseteq \mathcal{V}_O$ specifies the set of decisions which are ``discoverable'' by $V_I$.
We assume that each process has access to the $\mathsf{discover}(\cdot)$ function.
Moreover, for any two sets $V_I^1, V_I^2$ with $V_I^1 \subseteq V_I^2$, $\mathsf{discover}(V_I^1) \subseteq \mathsf{discover}(V_I^2)$; in other words, ``knowledge'' of the output space can only be improved upon learning more input values.

Let us take a look at the committee-based blockchain example again (\Cref{subsection:intution_extended}).
If a server obtains a proposal $\mathit{tx}$, it learns $\mathit{tx}$ as a potential decision.
We model this ``deduction'' concept using the $\mathsf{discover}(\cdot)$ function: $\mathsf{discover}\big( \{\mathit{tx}\} \big) = \{\mathit{tx}\}$.


\paragraph{Adversary pool.}
Given an execution $\mathcal{E}$, $\mathcal{P}(\mathcal{E}) \subseteq \mathcal{V}_I$ defines the \emph{adversary pool} in $\mathcal{E}$.
Informally, the adversary pool represents the input values the adversary ``knows''.
In the example of a committee-based blockchain (\Cref{subsection:intution_extended}), the adversary pool is a set of signed transactions which the adversary ``learns'' from the clients.

We underline that the adversary pool is an abstract concept.
Specifically, the adversary pool represents the ``starting knowledge'' the adversary has.
However, the notion of the ``starting knowledge'' must be precisely defined once all particularities of the exact considered system are taken into account.
Due to sophisticated details (such as the aforementioned one), we believe that a formalism suitable for blockchain-specific validity properties deserves its own standalone paper.

\subsection{Validity} \label{subsection:validity_extended}

We start by restating the definition of process-proposal pairs.
A \emph{process-proposal} pair is a pair $(P, v)$, where (1) $P \in \allprocesses$ is a process, and (2) $v \in \mathcal{V}_I$ is a proposal.
Given a process-proposal pair $\mathit{pp} = (P, v)$, $\mathsf{proposal}(\mathit{pp}) = v$ denotes the proposal associated with $\mathit{pp}$.

An \emph{input configuration} is a tuple $\big[ \mathit{pp}_1, \mathit{pp}_2, ..., \mathit{pp}_x, \rho \big]$ of $x$ process-proposal pairs and a set $\rho \subseteq \mathcal{V}_I$, where (1) $n - t \leq x \leq n$, (2) every process-proposal pair is associated with a distinct process, and (3) if $x = n$, $\rho = \emptyset$.
Intuitively, an input configuration represents an assignment of proposals to correct processes, as well as a ``part'' of the input space known to the adversary.
For example, an input configuration $\big[ (P_1, v), (P_2, v), (P_3, v), \{v, v', v''\} \big]$ describes an execution in which (1) only processes $P_1$, $P_2$, and $P_3$ are correct, (2) processes $P_1$, $P_2$, and $P_3$ propose the same value $v$, and (3) faulty processes know only $v$, $v'$, and $v''$.

We denote by $\mathcal{I}$ the set of all input configurations.
For every input configuration $c \in \mathcal{I}$, we denote by $c[i]$ the process-proposal pair associated with process $P_i$; if such a process-proposal pair does not exist, $c[i] = \bot$.
Moreover, we define by $\mathsf{pool}(c)$ the set of input values associated with $c$ (the ``$\rho$'' field of $c$).
Next, $\process{c} = \{ P_i \in \Pi \,|\, c[i] \neq \bot\}$ denotes the set of all processes included in $c$.
Finally, $\mathsf{correct\_proposals}(c) = \{ v \in \mathcal{V}_I \,|\, \exists i \in [1, n]: c[i] \neq \bot \land \mathsf{proposal}(c[i]) = v \}$ denotes the set of all proposals of correct processes (as specified by $c$).

Given (1) an execution $\mathcal{E} \in \mathit{execs}(\mathcal{A})$, where $\mathcal{A}$ is an algorithm with the $\mathsf{propose}(\cdot)/\mathsf{decide}(\cdot)$ interface, and (2) an input configuration $c \in \mathcal{I}$, we say that $\mathcal{E}$ \emph{corresponds} to $c$ ($\mathsf{input\_conf}(\mathcal{E}) = c$) if and only if (1) $\process{c} = \mathit{Corr}_{\mathcal{A}}(\mathcal{E})$, (2) for every process $P_i \in \mathit{Corr}_{\mathcal{A}}(\mathcal{E})$, $P_i$'s proposal in $\mathcal{E}$ is $\mathsf{proposal}(c[i])$, and (3) $\mathcal{P}(\mathcal{E}) = \mathsf{pool}(c)$.

A validity property $\mathit{val}$ is a function $\mathit{val}: \allconfigurations \to 2^{\mathcal{V}_O}$ such that, for every input configuration $c \in \allconfigurations$, $\mathit{val}(\mathit{c}) \neq \emptyset$.
Algorithm $\mathcal{A}$, where $\mathcal{A}$ exposes the $\mathsf{propose}(\cdot)/\mathsf{decide}(\cdot)$ interface, \emph{satisfies} a validity property $\mathit{val}$ if and only if, in any execution $\mathcal{E} \in \mathit{execs}(\mathcal{A})$, no correct process decides a value $v' \notin \mathit{val}\big( \mathsf{input\_conf}(\mathcal{E}) \big)$.
That is, an algorithm satisfies a validity property if and only if correct processes decide only admissible values.

    
    

\paragraph{Assumptions on executions.}
Lastly, we introduce two assumptions that conclude our proposal for the extended formalism.

\begin{assumption} \label{assumption:all_executions}
For every execution $\mathcal{E}$ of any algorithm $\mathcal{A}$ which solves the Byzantine consensus problem with some validity property, if a correct process $P$ decides a value $v' \in \mathcal{V}_O$ in $\mathcal{E}$, then $v' \in \mathsf{discover}\big( \mathsf{correct\_proposals}(c) \cup \mathsf{pool}(c) \big)$, where $\mathsf{input\_conf}(\mathcal{E}) = c$.
\end{assumption}

\Cref{assumption:all_executions} states that correct processes can only decide values which are ``discoverable'' using all the proposals of correct processes and the knowledge of the adversary.
For example, if every correct process proposes the same value $v \in \mathcal{V}_I$ and the adversary pool contains only $v' \in \mathcal{V}_I$, then a correct process can only decide a value from $\mathsf{discover}(\{v, v'\})$.

Next, we introduce an assumption concerned only with the canonical executions (executions in which faulty processes do not take any computational step).

\begin{assumption} \label{assumption:extended}
For every canonical execution $\mathcal{E}$ of any algorithm $\mathcal{A}$ which solves the Byzantine consensus problem with some validity property, if a correct process $P$ decides a value $v' \in \mathcal{V}_O$ in $\mathcal{E}$, then $v' \in \mathsf{discover}\big( \mathsf{correct\_proposals}(c) \big)$, where $\mathsf{input\_conf}(\mathcal{E}) = c$.
\end{assumption}

Intuitively, \Cref{assumption:extended} states that, if faulty processes are silent, correct processes can only decide values which can be discovered using their own proposals.
In other words, correct processes cannot use ``hidden'' proposals (possessed by the silent adversary) to discover a decision.

Finally, we underline that these two assumptions do not completely prevent ``unreasonable'' executions.
For example, given these two assumptions, a (correct or faulty) process is still able to send a message with a value which cannot be discovered using the proposals of correct processes and the adversary pool.
Hence, an assumption that prevents such an execution should be introduced.
Thus, due to the complexity of the extended formalism, we leave it out of this paper.
In the future, we will focus on this interesting and important problem.

\end{document}